\newcommand{\mbrc}{\ensuremath{{\rm MBR}_C}}
\newcommand{\mbrclog}{\ensuremath{{\rm MBR}_{C-\log}}}
\newcommand{\mbrq}{\ensuremath{{\rm MBR}_{\not{C}}}}
\newcommand{\mbrmub}{\ensuremath{{\rm MBR}_{\rm MUB}}}
\newcommand{\mps}{\ensuremath{{\rm MPS}}~}
\newcommand{\stab}{\ensuremath{{\rm STAB}}~}
\newcommand{\sparse}{\ensuremath{{\rm SPARSE}}~}
\newtheorem{theorem}{Theorem}
\newtheorem{definition}{Definition}
\newtheorem{corollary}{Corollary}
\newtheorem{lemma}{Lemma}
\newcommand{\bigO}{\ensuremath{\mathcal{O}}\xspace}
\renewcommand{\Tr}[1]{\ensuremath{\operatorname{Tr}\left({#1}\right)}}
\newcommand{\cbe}{\ensuremath{CB_\varepsilon}}
\DeclarePairedDelimiterX{\inp}[2]{\langle}{\rangle}{#1, #2}
\renewcommand{\epsilon}{\varepsilon}
\newcommand{\epsrank}{$\epsilon$-rank }
\newcommand{\ryc}{reduce\&chop\ }
\newcommand{\Opoly}[1]{\ensuremath{\mathcal{O}\left({\rm poly}\left(#1\right)\right)}}
\newcommand{\maxF}{\ensuremath{\max_{a\neq b} \Vert F_{a, b} \Vert_\infty}}
\newcommand{\aqa}{$\langle aQa ^L\rangle $ Applied Quantum Algorithms, Universiteit Leiden}
\newcommand{\lorentz}{Instituut-Lorentz, Universiteit Leiden, Niels Bohrweg 2, 2333 CA Leiden, Netherlands}
\newcommand{\liacs}{LIACS, Universiteit Leiden, Niels Bohrweg 1, 2333 CA Leiden, Netherlands}
\newcommand{\para}[2]{\textbf{#1}: #2}
\renewcommand{\para}[2]{#2}
\begin{document}
\title{Multiple-basis representation of quantum states}
\author{Adrián Pérez-Salinas} 
\author{Patrick Emonts}
\author{Jordi Tura} 
\affiliation{\aqa}
\affiliation{\lorentz}
\author{Vedran Dunjko}
\affiliation{\aqa}
\affiliation{\liacs}

\begin{abstract}
Classical simulation of quantum physics is a central approach to investigating physical phenomena. 
Quantum computers enhance computational capabilities beyond those of classical resources, but it remains unclear to what extent existing limited quantum computers can contribute to this enhancement. 
In this work, we explore a new hybrid, efficient quantum-classical representation of quantum states, the multiple-basis representation.
This representation consists of a linear combination of states that are sparse in some given and different bases, specified by quantum circuits. 
Such representation is particularly appealing when considering depth-limited quantum circuits within reach of current hardware.
We analyze the expressivity of multiple-basis representation states depending on the classical simulability of their quantum circuits.
In particular, we show that multiple-basis representation states include, but are not restricted to, both matrix-product states and stabilizer states. 
Furthermore, we find cases in which this representation can be used, namely approximation of ground states, simulation of deeper computations by specifying bases with shallow circuits, and a tomographical protocol to describe states as multiple-basis representations.
We envision this work to open the path of simultaneous use of several hardware-friendly bases, a natural description of hybrid computational methods accessible for near-term hardware.
\end{abstract}
\maketitle

\section{Introduction}

Classical simulation of quantum processes is broadly utilized approach to investigate physics, such as material discovery~\cite{ma2020quantum}, electronic structure problems~\cite{mcclean2019openfermion} or chemistry~\cite{genin2019quantum}. 
Through the years, a large variety of methods has been developed to efficiently simulate certain classes of quantum states, such as tensor networks~\cite{cirac2021matrix,schollwock2011densitymatrix}, the stabilizer formalism~\cite{gottesman1998heisenberg, aaronson2004improved, bravyi2019simulation}, or decision diagrams~\cite{vinkhuijzen2023limdd} among others~\cite{goh2023liealgebraic, cerezo2023does,rudolph2023classical}. ulation methods pushes the borders of current knowledge to represent quantum states using classical resources, but there still exist quantum states of interest that elude an efficient characterization by existing classical methods. 

Quantum computers offer new computational capabilities as compared to classical machines. 
As a motivation, quantum computing allows, in principle, the simulation quantum processes beyond the possibilities of classical computers~\cite{feynman1982simulating}. 
In addition, there exist certain problems, outside the domain of physics,where it is expected that the best classical algorithms will always require super-polynomial runtime in the size of the problem, and that have been proven to become efficiently solvable with quantum resources \cite{shor1997polynomialtime,liu2021rigorous,molteni2024exponential,harrow2009quantum,huang2021quantum}.
Nowadays existing prototypes of quantum devices are pushing the frontier of addressable quantum computations, in the quest for quantum advantage~\cite{arute2019quantum,madsen2022quantum,zhong2020quantum}. 
However, these experimental devices still suffer from limitations in qubit numbers, depth or fidelity of operations. 
Recent research has explored the combination of limited quantum resources with classical algorithms to harness the full potential of existing quantum devices~\cite{piveteau2023circuit, huembeli2022entanglement, peng2020simulating, marshall2022high, perez-salinas2023shallow}.

In this work, we explore a new representation of quantum states built on the idea of leveraging limited quantum resources for reaching certain classically inaccessible computations. 
We call our construction multiple-basis representation (MBR). 
The ingredients for the MBR are a $\Opoly n$-sized set ($n$ being the number of qubits) of classical parameters, and a set of $\Opoly n$-sized quantum circuits. 
Each such quantum circuit determines a basis.
The MBR is built as a superposition of states from each of these bases, each state admitting sparse descriptions, i.e. with $\Opoly{n}$ many coefficients, in the corresponding bases.  
Given these coefficients and the quantum circuits specifying the bases, access to a quantum computer is required to compute relevant quantities of the state of interest, for instance energies, using the MBR. 
The MBR then relies on a classical combination of quantum resources.
The MBR enables a systematic recipe to obtain approximations of states of interest, such as ground states. 
Additionally, it is possible to relax the conditions on the quantum resources of MBR until making the computation accessible to classical computer, at the expense of losing generality. 
Therefore, as a byproduct of our construction, we obtain a novel fully classical method to represent quantum states.

The goal of MBR is to exhibit strictly more representation capabilities than single-basis approaches. 
These representation capabilities will be influenced by the choice of bases in the MBR. 
We explore the conditions these bases must fulfill to deliver MBRs with the following properties. 
(i) As a first requirement, the bases
should be chosen such that each of them delivers some information of the global state that is inaccessible to all other
considered bases (maintaining the sparsity constraint). (ii) Additionally, the MBR should be made inaccessible to classical simulation methods. The MBR carries a natural notion of optimality with respect to condition (ii), which we show it is met when the chosen basis are MUBs.

We show how MBR can be used in several tasks of interest. 
We explore first the approximation of ground states of physically motivated Hamiltonians.
We assume that the Hamiltonian admits the form $H=\sum_i h_i$ such that each term $h_i$ can be easily diagonalized, and $[h_i, h_j] \neq 0$ (possibly with different bases).
The results from these individual terms $h_i$ can be combined through the MBR formalism, yielding approximations to the ground state. 
Our numerical study suggests the MBR method could be used to approximate ground states of many-body physics problems on arbitrary graphs with an advantage over tensor networks approaches.
The second application is the simulation of deeper circuits with shallow computations, as a follow up of Ref.~\cite{perez-salinas2023shallow}. 
The MBR formalism allows for a further reduction in depth of the circuits of interest. 
Finally, we propose a tomography protocol of quantum states using the MBR form.
This tomography protocol is related to a relaxed notion of peaked circuits. Peaked circuits are interesting, since they admit efficient verification of quantum advantages~\cite{aaronson2024verifiable}.
These three applications enhance the computational capabilities of hybrid quantum classical computations, following the recently open research path on classical combination of limited quantum computations~\cite{piveteau2023circuit,huembeli2022entanglement,marshall2022high}.

This paper is organized as follows. 
\Cref{sec.preliminaries} contains the necessary background for the rest of the manuscript.  
In~\Cref{sec.mbr}, we present the MBR and define several subclasses of MBR, exploring their relationships among them and with other classical representations of quantum states, and optimal choices of bases. 
Three distinct applications, ground state approximation, optimization of quantum algorithms, and tomography, are described in~\Cref{sec.applications}. 
Finally, we conclude in~\Cref{sec.conclusions}.

\section{Preliminaries}\label{sec.preliminaries}

\subsection{Classical representations of quantum states}\label{sec.representations}

We begin our discussion by looking at numerous classical representations of pure quantum states from the perspective of various \textit{ranks} that will become clear from context. 
The rank of a quantum state or unitary, with respect to certain decomposition criteria, is the minimal number of components needed to exactly represent the state of interest in that particular decomposition. 
Classical representability of quantum states crucially depends on the concept of rank. 
By convention, for a quantum state $\ket\psi$ to admit an efficient classical representation, there necessarily exist a decomposition in which a) $\ket\psi$ has a \Opoly n rank, and b) each component in which $\ket\psi$ is decomposed admits an efficient classical representation.

As an introduction to the arguments in the main paper we describe in this section three classical representations of quantum states: sparse states, matrix product states (MPS) and stabilizer representations of quantum states.
These methods are well-known simulation methods for quantum computing, but the list is far from exhaustive. 
Approaches such as $\mathfrak g$-sim~\cite{goh2023liealgebraic}, LOWESA~\cite{rudolph2022decomposition}, specifically designed for variational algorithms, or combinations between stabilizer and tensor networks methods~\cite{masot-llima2024stabilizer} are also available in literature.

\subsubsection{Sparse representation}

The first rank we discuss is the computational-basis (CB) rank of a quantum state, measuring how sparse a state is with respect to the computational basis.  
\begin{definition}[$\cbe$-rank~\cite{perez-salinas2023shallow}]\label{def.cbe}
Let $\ket{\psi}$ be an $n$-qubit quantum state, and let $\epsilon > 0$. 
The $\epsilon$-approximate $CB$ rank of a state $\ket\psi$ is 
\begin{equation}
\cbe(\ket\psi) := \min_{K\in\mathbb{N}}\left(K \; {\rm s. t.}\;  \max_{\ket{\psi_K} \in \Phi_K}\vert\braket{\psi}{\phi_{K}}\vert^2 \geq 1 - \varepsilon\right),
\end{equation}
with 
\begin{equation}
    \Phi_K = \left\{\ket{\phi_K} \, \left\vert \, \ket{\phi_K} = \sum_{\substack{i \in S \\ \vert S \vert = K}} \alpha_i \ket i\right. \right\}, \end{equation}
     and $S \subset \{0, 1, 2, \ldots, 2^n - 1 \}$.
\end{definition}
It can be further shown that the optimal $\ket{\phi_K}$ in~\Cref{def.cbe} is obtained by retaining the $K$ elements in $\ket\psi$ with largest amplitudes and discarding everything else, with proper normalization~\cite{perez-salinas2023shallow}. 
Note that it is possible to extend the definition of the \cbe-rank to the rank in any other basis. 
Consider the basis $\{U\ket i\}_i$, with $U$ unitary and $\ket i$ being the elements of the computational basis. 
Then, the \cbe-rank in the basis specified by $U$ is just $\cbe(U^\dagger \ket\psi)$. 
In the remainder of the paper $CB(\ket\psi) \equiv CB_{\epsilon = 0}(\ket\psi)$. This extends to other ranks. States of the form $\ket{\phi_K}$ are referred to as $K$-sparse states.

Small \cbe-ranks enable classical simulations of quantum computations by storing only sparse states. 
Consider a quantum circuit $U = U_{L} U_{L - 1} \ldots U_2 U_1$, where $U_l$ are arbitrary $\mathcal O(1)$-local gates,  such that for all $l \leq L$, $\cbe\left(U_{l} \ldots U_2 U_1 \ket 0 \right) \in \Opoly n$. 
Under this assumption, the simulation can be conducted by updating the sparse representation of the state at each layer. 
We refer to this concept as sparse simulation. 
\begin{definition}[\sparse simulation]
    Let $U = U_L U_{L - 1} \ldots U_2 U_1$ be a quantum circuit of $n$ qubits, such that $U_l$ are arbitrary operations acting on $\mathcal O(1)$ qubits. 
    $U$ is \sparse-simulatable with precision $\epsilon$ and with respect to an observable $\hat O$ if 
    \begin{equation}
        \cbe(U_{l} \ldots U_2 U_1 \ket 0) \in \Opoly n, \quad \forall l \leq L, 
    \end{equation}
    and the \sparse representation of $U\ket 0$ permits to efficiently compute expectation values of this state with respect to the  observable $\hat O$. 
\end{definition}

\subsubsection{Matrix product state representation}

The Schmidt rank is our second rank to review, closely related to matrix product states. 
The Schmidt rank provides a natural decomposition for studying correlations between quantum subsystems, as well as a robust framework for understanding bipartite entanglement. 
To define the Schmidt rank, we need to first partition the available Hilbert space into two subspaces, as $\mathcal H = \mathcal H_A\otimes \mathcal H_B$, and then perform the Schmidt decomposition.  
\begin{definition}[Schmidt rank]
    Let $\ket{\psi}$ be an $n$-qubit quantum state, and let $\epsilon>0$. 
    Let the Hilbert space be decomposed into two partitions $(A, \bar A)$, as $\mathcal H = \mathcal H_A \otimes \mathcal H_{\bar A}$. 
    The $\epsilon$-approximate Schmidt rank of a state $\ket\psi$ is
    \begin{equation}
        \chi^{(A)}_\epsilon(\ket\psi) = \min_{K \in \mathbb N}\left(K \; {\rm s. t.}\;  \max_{\ket{\phi_K} \in \Phi_K}\vert\braket{\psi}{\phi_{K}}\vert^2 \geq 1 - \varepsilon\right),
    \end{equation}
    with
    \begin{equation}
        \Phi_K = \left\{ \ket{\phi_K} \left \vert \ket{\phi_K} = \sum_{k = 1}^K \lambda_k \ket{u_k}_A \otimes \ket{v_k}_{\bar A} \right. \right\}, 
    \end{equation}
    and $\lambda_k \geq 0, \braket{v_k}{v_l} = \delta_{k, l}, \braket{u_k}{u_l} = \delta_{k, l}$.
\end{definition}
Notice that the Schmidt rank may vary depending on the partition $A$. 
It is upper-bounded by $\chi_{(A)} = 2^{\min(\vert A \vert, \vert \bar A \vert)}$. 

The Schmidt decomposition has a direct application in simulation of quantum states and circuits via tensor networks~\cite{schollwock2011densitymatrix}, and in particular matrix product states (MPS)~\cite{cirac2021matrix}. 
An MPS is a representation of quantum states in which the qubits are ordered in a one dimensional chain. 
The partitions $A, B$ are defined by establishing a border between \textit{left} and \textit{right}. 
An MPS is constructed by chaining Schmidt decompositions in this order, as 
\begin{multline}
    \ket{\psi_{\rm MPS}} := \sum_{i_1 \ldots i_n} \sum_{j_1 \ldots j_{n}} \\ 
    A^{(1), i_1}_{j_1} A^{(2), i_2}_{j_1, j_2} \ldots A^{(n-1), i_{n-1}}_{j_{n-1}, j_n} A^{(n),  i_n}_{j_n} \ket{i_1 i_{2} \ldots i_{n-1} i_n},
\end{multline}
with $A^{(k), i_k}_{j_{k-1} j_k}$ being a tensor of size $d \times D \times D$, with $d$ being the local dimension and $D$ being the so-called bond dimension of the MPS. 
The indices $i \in \{ 1,2, \ldots d\}$ are related to output states, usually in the computational basis. 
The indices $j$ are associated to the correlations among partitions. 
For the MPS to be an efficient classical representation of a quantum states $D \in \Opoly n$. 
For many interesting cases, this constraint still provides an accurate approximation for MPS to constitute an useful technique. 

\begin{definition}[\mps simulation]
    Let $U = U_L U_{L - 1} \ldots U_2 U_1$ be a quantum circuit on $n$ qubits, where $U_l$ are arbitrary gates acting on $\mathcal O(1)$ gates. 
    $U$ is \mps simulatable with precision $\epsilon$ and with respect to an observable $\hat O$ if for any bipartition $(A, \bar A)$ 
    \begin{equation}
        \chi_\epsilon^{(A)}(U_{l} \ldots U_2 U_1 \ket 0) \in \Opoly n, \quad \forall l \leq L, 
    \end{equation}
    and the \mps representation of $U\ket 0$ permits to efficiently compute expectation values of this state with respect to $\hat O$. 
\end{definition}
Computing expectation values through MPS is possible as long as the observable of interest admits a Schmidt decomposition with $\chi_\epsilon^{(A)} \in \Opoly n$, what is usually called a matrix product operator, for instance for Pauli strings. 
The \mps method allows for classical simulations of quantum circuits with nearest-neighbor connectivity and $\mathcal O(\log n)$ depth.

\subsubsection{Stabilizer representation}

Finally, we address the stabilizer rank, naturally connecting quantum states with the stabilizer formalism~\cite{gottesman1998heisenberg}. 
The stabilizer rank captures the optimal decomposition of state $\ket\psi$ into a superposition of states that are outputs of quantum circuits composed only of Clifford gates, that is the group of unitary operations that stabilizes the Pauli group. 
\begin{definition}[Stabilizer rank]
    Let $\ket{\psi}$ be an $n$-qubit quantum state, and let $\epsilon>0$. 
    The $\epsilon$-approximate stabilizer rank of a state $\ket\psi$ is
    \begin{equation}
        \sigma_\epsilon(\ket\psi) := \min_{K \in \mathbb N}\left(K \; {\rm s. t.}\;  \max_{\ket{\phi_K}}\vert\braket{\psi}{\phi_{K}}\vert^2 \geq 1 - \varepsilon\right),
    \end{equation}
    with
    \begin{equation}
        \ket{\phi_K} = \sum_{k = 1}^K \beta_k U_k \ket 0, 
    \end{equation}
    and $U_k$ being circuits composed only by Clifford gates. 
\end{definition}
The stabilizer rank is hard to compute~\cite{bravyi2019simulation}, but it can be easily bounded with the number of non-Clifford gates needed to prepare $\ket\psi$.  
The reason is that Clifford circuits allow one to perform quantum computations by keeping track of the updates in Pauli string stabilizing the initial state~\cite{gottesman1998heisenberg}. 
These operations are classically accessible. When a non-Clifford gate appears, e.g. a $T$-gate, one can extend the stabilizer formalism to decompose the gate as a superposition of Clifford operations, thus effectively increasing $\sigma_\epsilon(\ket\psi)$ by a constant factor. Hence, circuits with $\mathcal O(\log n)$ non-Clifford gates admit classical simulations~\cite{gottesman1998heisenberg}.

\begin{definition}[\stab simulation]
    Let $U = U_L U_{L - 1} \ldots U_2 U_1$ be a quantum circuit of $n$ qubits, where $U_l$ act on at most $\mathcal O(1)$ qubits. 
    $U$ is \stab simulatable with precision $\epsilon$ and with respect to an observable $\hat O$ if $U$ has at most $\mathcal O(\log T)$ non-Clifford gates, and $\hat O$ is decomposable on at most $\Opoly n$ Pauli strings. Under that assumption, 
    \begin{equation}
        \sigma_\epsilon(U_{l} \ldots U_2 U_1 \ket 0) \in \Opoly n, \quad \forall l \leq L. 
    \end{equation}
\end{definition}

\subsection{Mutually unbiased bases}\label{sec.mubs}

Throughout this paper the concept of mutually unbiased bases (MUB) will become useful to specify the a subser of MBR with optimal properties in some sense, to be detailed. 
MUBs are collections of bases such that any element of any basis can only be written as an equal superposition of all elements of other basis. 
The formal definition is as follows. 
\begin{definition}[Mutually unbiased bases~\cite{durt2010mutually}]
    Let $\{ U_b\}$ be a set of $N \times N$ unitary matrices, each one defining the basis $\{U_b\ket i\}$, for $\ket i$ in the computational basis. 
    The set $\{U_b\ket i\}$ is a set of mutually unbiased basis if
    \begin{equation}
        \forall (i, j); \forall (a, b), a \neq b, \qquad \vert \bra j U_a^\dagger U_b \ket i\vert^2 = \frac{1}{N}. 
    \end{equation}
\end{definition}
Sets of MUBs in a $N$-dimensional system have cardinality bounded by $2 \leq \vert \{ U_b\}\vert\leq N + 1$~\cite{klappenecker2003constructions}. 
If the upper bound can be saturated, then the MUBs form a complete set. 
Existence of complete sets of MUBs are guaranteed if $N$ is a prime power, but remains an open problem for general $N$~\cite{grassl2009sicpovms}. 
There exist numerous ways to find MUBs~\cite{bandyopadhyay2001new, kern2010complete, klappenecker2003constructions, seyfarth2011construction, seyfarth2015practical}. 
It is in fact possible to construct complete sets of MUBs using only Clifford circuits \cite{kern2010complete}, thus thus further motivating the study of the \stab\ representation of quantum states. 
Let us remark that there exist non-Clifford constructions for complete sets of MUBs, for instance through the bases defined as $V U_b$, where $\{U_b\}$ are MUBs and $V$ is an arbitrary non-Clifford unitary.

\section{Multiple-Bases Representation of quantum states}\label{sec.mbr}
The multiple-basis representation (MBR) of a quantum state is the central element of this work. 
The MBR aims to approximately describe an arbitrary quantum state with a few elements combining classical resources and quantum operations.

Inspired by a combination of the \cbe- and \stab-ranks, we define the MBR as follows. 
\begin{definition}[Multiple-basis representation (MBR)]\label{def.mbr}
Let $\{U_b\}_{b = 1}^B$ be a set of quantum circuits, $\{\alpha_b\}$ be a set of real positive values, $\{ c^{(b)}_{i_b}\}$ a set of complex coefficients and $\{S_b\}$ a set of supports of cardinality $K$, independent for each basis. The values $K, B \in \Opoly n$.
A MBR state is given by
\begin{align}\label{eq.state_def}
\ket{\psi_{K, B}} & = \sum_{b= 1}^B \alpha_b U_b \ket{\psi_b} \\
{\rm with } \quad \ket{\psi_b} & = \sum_{\substack{i_b \in S_b \\ \vert S_b \vert = K}} c^{(b)}_{i_b} \ket{i_b} \\ 
{\rm and} \quad \braket{\psi_b}{\psi_b}  & = 1 \label{eq.normalization},
\end{align}
with $\ket{i_b}$ being elements of the computational basis. 
\end{definition} 
The interpretation of a MBR states can be done as follows. We start with the states $\ket{\psi_b}$. Those will be states with low $\cbe$-ranks, and thus have limited power to represent all quantum states. The role of $U_b$ is to perform a unitary operation that changes the basis in which the state is sparse. Each new term in \Cref{eq.state_def}, noted with the index $b$, will add a new state which is sparse in a completely new basis $U_b$, but not sparse in any of the previously considered bases. The superposition of sparse states in different bases is intended to create a new state which is not sparse in any of the previous bases, and hence does not admit an efficient representation. On the other hand, there must exist bases such that $\ket{\psi_{K, B}}$ is sparse in these bases, even with unit \cbe-rank. However, we expect that constructing these bases will require much deeper circuits than each of the individual unitaries $U_b$. A graphical interpretation of MBR is available in~\Cref{fig.mbr}. 

We now detail the requirements to fully specify a MBR. 
The unitary gates $\{ U_b\}$ must be given in an efficient form, for instance through a $\Opoly n$-sized circuit. 
For each basis we need $\mathcal O(K)$ memory to store (up to digital precision) $\alpha_b, \{ c^{(b)}_{i_b}\}$, and the set $S_b$. 
By convention we will order $S_b$ so that $\vert c^{(b)}_{i_b}\vert^2 \geq \vert c^{(b)}_{i_b + 1}\vert^2 $. 
Notice that these elements are not sufficient to fully specify the state due to the linear dependence of its different components. An additional necessary component will be the overlap matrix
\begin{equation}\label{eq.fidelity_matrix}
        F_{i_b, j_a} = \bra{i_b} U_b^\dagger U_a \ket{j_a}.
\end{equation}
Whether or not this Gram matrix can be computed classically determines whether the MBR can be treated as a fully classical or hybrid classical-quantum representation, although its storage is always possible, by construction. This matrix has a clear structure of $B \times B$ blocks, each of size $K \times K$\footnote{The sizes of these blocks may change in the case the support of each basis has a different number of elements.}. 
The blocks in the diagonal are just the identity matrix relating elements within the same basis. 

\begin{figure}
\includegraphics[scale=.65]{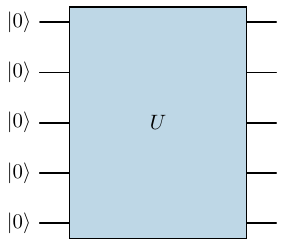} \raisebox{1.2cm}{ \large $ = \sum c^{(b)}_{i_b} $} \includegraphics[scale=.65]{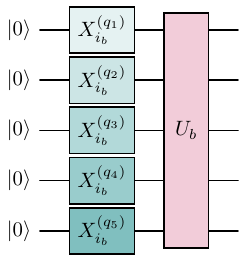} 
\caption{
    A motivation for the MBR is to describe states using superpositions of quantum circuits that are more easily implementable. 
    See~\Cref{def.mbr} for further details.
}\label{fig.mbr}
\end{figure}

The motivation to create the MBR can be easily outlined in terms of the \cbe-rank. 
Take an arbitrary state $\ket\psi$, and a bases $U$. 
By definition, there are certain bases in which $\cbe(U\ket\psi)$ is a large quantity, and some others in which it is small, or even minimal, see Ref.~\cite{perez-salinas2023shallow} for an in-depth discussion. 
There exists a trade-off between allowed \cbe-rank and number of gates needed to implement a proper basis. 
In the MBR representation, we aim to reduce this requirement. 
Each basis is now asked to capture a fraction of the quantum state within a sparse approximation, that is small \cbe-rank for moderately large $\epsilon$. 
The combined action of all bases, each capturing a different fraction of the state, gives representation power to the MBR, allowing us to efficiently describe (certain) states that require exponentially many coefficients using only one basis. 
This change cannot significantly change the volume of $\epsilon$-approximately representable states, as in the \mps\ or \stab\ cases. 
However, it can make new states reachable within modest quantum resources. 
We illustrate this idea in~\Cref{fig.mbr}, a).

\subsection{Example of a MBR}\label{sec.example}
We give an example for a state that offers an efficient MBR representation in this section.
Consider the quantum state given by
\begin{equation}\label{eq.example}
\ket{\psi_{+ 0}} = \frac{1}{\sqrt{2 C}} \left(\ket 0^{\otimes n} + \ket +^{\otimes n} \right),
\end{equation}
where $C = 1 + 2^{-n/2}$ is the normalization constant. 
The \cbe-rank is given by $\cbe(\ket{\psi_{+0}}) = 1 + \max(\varepsilon - C/2, 0)(2^n - 1)$. 
Consequently, an approximate representation of $\ket{\psi_{+0}}$ is achievable even with \cbe-rank 1 only with error above approximately $1/2$. 
More accurate representations demand an exponential \cbe-rank in $n$, since adding any extra component $\ket i, i \neq 0$ can only increase the fidelity by capturing factor $\braket{i}{\psi_{+0}} = 2^{-n/2}$, existing $2^n - 1$ possible values for $i$. Explicitly, 
\begin{multline}
    \cbe(\ket{\psi_{+0}}) = \\
        1 + \left\lfloor \left(1-\epsilon - \frac{1+2^{-n/2}}{2}\right) (2^n - 1)\right\rfloor.
\end{multline}
The Hadamard basis provides no alternative, since $ \ket{\psi_{+0}}$ stabilizes $H^{\otimes n}$. 
Nevertheless, simultaneous use of both the computational and Hadamard basis allows for a sparse representation, as intuited in the explicit writing of~\Cref{eq.example}.

We can explore alternative classical representations for $\ket{\psi_{+0}}$. For instance, it is possible to consider a basis with elements $\ket 0^{\otimes n}$ and $\ket +^{\otimes n} - 2^{-n/2} \ket 0^{\otimes n}$. Preparation of the second state can be done in depth $\mathcal{O}(n)$~\cite{yoder2014fixedpoint} to the required precision $\epsilon \in \mathcal{O}(2^{-n})$.  
Alternatively, one can use matrix product states to exactly represent $\ket{\psi_{+0}}$. 
Such matrix product state has bond dimension 2, and thus it is required to store $\mathcal O(n)$ real numbers. 
Alternatively, one can use a stabilizer representation, with stabilizer rank 2~\cite{aaronson2004improved, bravyi2019simulation}. 
The memory requirement for this construction scales also as $\mathcal O(n)$, achieved by specifying the Clifford circuits of linear size needed to prepare $\ket{\psi_{+0}}$. 
If we address the depth required to prepare this state from the $\ket 0^{\otimes n}$ state, we find $\Omega(\log n)$ through optimal recipes for matrix product states~\cite{malz2023preparation}. 
The MBR in this case is a special case of the stabilizer formalism. 
The minimal depth required to construct a state with small stabilizer rank is, to the best of our knowledge, an open problem.

\subsection{Comparison to other representations}\label{sec.comparison}

In this section, we will connect the MBR to other simulation methods of quantum computing, reviewed in~\Cref{sec.representations}. 
To do so, we will define several subclasses for the MBR, distinguishing between different properties of the circuits, and will compare them to existing methods. 
The first element of discussion will be the classical tractability of a MBR. 
In all this discussion, we restrict ourselves to observables that are Pauli strings or sums of moderately many Pauli strings. 

\begin{definition}[Classical tractability]
Consider a quantum state $\ket\psi$, and an observable $\hat O$. Then, $\ket\psi$ is classically tractable with respect to $\hat O$ if computing the expectation $\bra\psi \hat O \ket\psi$ is efficient with classical resources.
\end{definition}

In the context of MBR we can give a more restricted definition. 
\begin{definition}[Classical tractability of MBR]
    Let $\{U_b, S_b, \bm\alpha_b\}$ be respectively the unitaries, integers sets and coefficients defining a MBR, as in~\Cref{def.mbr}. 
    The MBR is classically tractable, for an observable $\hat O$, if
    it is possible to classically and efficiently compute
    \begin{equation}
        F^{\hat O}_{i_b, j_a} = \bra{i_b} U_b^\dagger \hat O U_a \ket{j_a}.
    \end{equation}
\end{definition}

Note that classical tractability of MBR immediately implies that computing expectations values $\bra{\psi_{K, B}} \hat O \ket{\psi_{K, B}}$ becomes classically feasible since it is just a combination of $\Opoly n$-many elements of $F^{\hat O}_{i_b, j_a}$. 

\begin{lemma}\label{le.exp-value}
    Classical tractability of the MBR implies classical tractability. 
\end{lemma}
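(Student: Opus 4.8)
The plan is to expand the expectation value $\bra{\psi_{K,B}} \hat O \ket{\psi_{K,B}}$ directly using the MBR decomposition from \Cref{def.mbr} and observe that it is a finite sum of entries of the matrix $F^{\hat O}_{i_b, j_a}$, weighted by classically-stored coefficients. First I would substitute $\ket{\psi_{K,B}} = \sum_b \alpha_b U_b \ket{\psi_b}$ with $\ket{\psi_b} = \sum_{i_b \in S_b} c^{(b)}_{i_b} \ket{i_b}$, so that
\begin{equation}
\bra{\psi_{K,B}} \hat O \ket{\psi_{K,B}} = \sum_{a, b = 1}^B \alpha_a \alpha_b \sum_{\substack{i_b \in S_b \\ j_a \in S_a}} \overline{c^{(b)}_{i_b}} \, c^{(a)}_{j_a} \, \bra{i_b} U_b^\dagger \hat O U_a \ket{j_a}.
\end{equation}
The inner bracket is exactly $F^{\hat O}_{i_b, j_a}$, so the whole expression is a sum of $B^2 K^2$ terms, each a product of a stored coefficient pair and one entry of $F^{\hat O}$.

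The second step is to count resources. Since $B, K \in \Opoly n$ by hypothesis, the number of summands $B^2 K^2$ is also $\Opoly n$. The coefficients $\alpha_b$ and $c^{(b)}_{i_b}$ are stored to digital precision as part of the MBR specification, so each contributes $\polylog n$-bit arithmetic. By the assumption of classical tractability of the MBR, each entry $F^{\hat O}_{i_b, j_a}$ can be computed classically and efficiently; there are at most $B^2 K^2 \in \Opoly n$ of them. Therefore the entire sum is assembled with $\Opoly n$ classical arithmetic operations, which establishes that $\ket{\psi_{K,B}}$ is classically tractable with respect to $\hat O$ in the sense of the first tractability definition.

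I would then remark that the argument uses nothing about $\hat O$ beyond what is already folded into the definition of $F^{\hat O}$, and that the restriction to Pauli strings (or sums of moderately many of them) fixed at the start of \Cref{sec.comparison} is what makes the matrix $F^{\hat O}$ a meaningful finite object in the first place; no extra hypothesis on $\hat O$ is needed here. The statement is essentially the formalization of the parenthetical observation already made right after the definition of classical tractability of MBR, so the proof is short.

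There is no real obstacle here — the only thing to be careful about is bookkeeping: confirming that every quantity entering the sum (the $\alpha$'s, the $c$'s, the index sets $S_b$, and the entries $F^{\hat O}_{i_b,j_a}$) is either stored by construction or computable by the tractability hypothesis, and that the count of terms stays polynomial. The mild subtlety, if any, is the footnote case where supports have different cardinalities, but since each $|S_b|$ is still $\Opoly n$ the term count bound is unaffected, so I would mention it only in passing.
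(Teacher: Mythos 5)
Your proposal is correct and follows essentially the same route as the paper's own proof: expand $\bra{\psi_{K,B}}\hat O\ket{\psi_{K,B}}$ into the double sum over bases and supports, identify each bracket with an entry of $F^{\hat O}_{i_b,j_a}$, and conclude by counting the $\Opoly{n}$-many terms, each computable efficiently by the tractability hypothesis. The bookkeeping remarks (stored coefficients, possibly unequal support sizes) are consistent with, and slightly more explicit than, the paper's version.
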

The proof is available in~\Cref{app.exp-value}

The MBR definition allows for several subclasses of MBR, depending on the properties of the circuits $U_b$. 
We define four different classes. 

\begin{definition}[MBR - Classical (\mbrc)]
    We define \mbrc\ to be the set \mbrc\ consists of all classically tractable MBRs.
\end{definition}

\begin{definition}[MBR - Non classical (\mbrq)]
    We define \mbrq\ to be the set of all classically intractable MBRs. For \mbrq, $F^{\hat O}_{i_b, j_a}$ is efficiently computable with quantum resources.  
\end{definition}

\begin{definition}[MBR - classical logarithmic (\mbrclog)]
    We define \mbrclog\ to be a subset of \mbrc\ with the constraint that the depth of $\{U_b\}$ is $\mathcal{O}(\log n)$.
\end{definition}

\begin{definition}[MBR - MUBs (\mbrmub)]
    We define \mbrmub\ to be the subset of \mbrq\ with the constraint that the set of unitaries $\{U_b\}$ form a set of MUBs. 
\end{definition}

The definitions here given, together with the representations given in~\Cref{sec.representations}, allow us to detail the relationships depicted in~\Cref{fig.venn}, as a pictorial representation of~\Cref{th.venn}.

\begin{theorem}\label{th.venn}
    The relationship among the MBR subclasses and other representations are the following:
    \begin{align}
    \stab & \subsetneq \mbrc \\ 
    \mps & \subsetneq \mbrc \\ 
    \mbrmub & \cap \stab \neq \emptyset \\
    \mbrmub & \cap \mbrc \neq \emptyset \\
    \mbrmub & \cap (\mbrq\cap\mbrc) = \emptyset 
\end{align}
\end{theorem}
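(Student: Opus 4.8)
## Proof Plan

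The plan is to establish the five claims of~\Cref{th.venn} essentially one at a time, since each requires a different construction or observation.

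\textbf{Strict inclusions $\stab \subsetneq \mbrc$ and $\mps \subsetneq \mbrc$.} For the containment directions, I would argue that any state admitting an efficient \stab\ (resp. \mps) simulation can be rewritten in MBR form with $K, B \in \Opoly n$ and with a classically computable $F^{\hat O}_{i_b,j_a}$. For \stab: a state of stabilizer rank $\sigma$ is by definition $\sum_k \beta_k U_k\ket 0$ with each $U_k$ Clifford; take $B = \sigma$, $K = 1$, $U_b = U_k$, and $\ket{\psi_b} = \ket 0$. Then $F^{\hat O}_{i_b,j_a} = \bra 0 U_b^\dagger \hat O U_a \ket 0$ is a Clifford-sandwiched Pauli expectation, computable in $\Opoly n$ by the Gottesman--Knill machinery, so the MBR is classically tractable. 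For \mps: a bond-dimension-$D$ MPS has $\cbe$-rank at most... actually this is not immediate, so the cleaner route is to take $B=1$ and a single basis $U_1$ implementing the disentangling circuit of the MPS (a staircase of two-qubit gates of $\Opoly n$ size), so that $U_1^\dagger\ket{\psi_{\rm MPS}}$ is a product state, hence $1$-sparse; then $F^{\hat O}$ is a single MPS-MPO-MPS contraction, classically efficient. Strictness in both cases follows from the example in~\Cref{sec.example}: $\ket{\psi_{+0}}$ lies in \mbrc\ (two-basis MBR with the identity and $H^{\otimes n}$, both Clifford so $F^{\hat O}$ is classically computable) — but more to the point, one exhibits a state in \mbrc\ with superpolynomial stabilizer rank (e.g. a state sparse in a basis built from $\Opoly n$ many $T$-gates, which is still classically tractable via sparse simulation of the basis circuit) and a state in \mbrc\ with superpolynomial Schmidt rank across some cut (again a sparse-in-a-shallow-but-entangling-basis state), witnessing that neither \stab\ nor \mps\ exhausts \mbrc.

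\textbf{The three MUB intersection claims.} For $\mbrmub \cap \stab \neq \emptyset$: invoke the remark in~\Cref{sec.mubs} that complete sets of MUBs can be built from Clifford circuits~\cite{kern2010complete}; pick such a Clifford MUB set $\{U_b\}$, let each $\ket{\psi_b}=\ket 0$, and the resulting MBR is simultaneously in \mbrmub\ (MUB circuits) and — being a superposition of $\Opoly n$ Clifford states — of polynomial stabilizer rank, hence in \stab. For $\mbrmub \cap \mbrc \neq \emptyset$: the \emph{same} example works, since Clifford-MUB circuits make $F^{\hat O}_{i_b,j_a} = \bra 0 U_b^\dagger \hat O U_a\ket 0$ a classically computable Clifford-Pauli overlap, so this MBR is also classically tractable. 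For $\mbrmub \cap (\mbrq \cap \mbrc) = \emptyset$: this is a definitional disjointness — $\mbrq$ is the set of classically \emph{intractable} MBRs and \mbrc\ the set of classically tractable ones, so $\mbrq \cap \mbrc = \emptyset$, and hence its intersection with anything, in particular \mbrmub, is empty. (If instead the intended reading of $\mbrq\cap\mbrc$ is some boundary or closure, I would note the same triviality still forces emptiness once one fixes whether a given MBR is tractable.)

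\textbf{Main obstacle.} The delicate points are the two strictness claims. Showing $\stab \subsetneq \mbrc$ and $\mps \subsetneq \mbrc$ requires producing an explicit state that is provably in \mbrc\ yet provably \emph{not} an efficient stabilizer representation (resp. not an efficient MPS for some bipartition). The MBR membership is easy by fiat of the construction; the hard direction is the lower bound showing the state escapes the other formalism — for \mps\ one needs a bipartition with superpolynomial Schmidt rank (a volume-law-entangled state that is nonetheless sparse in a shallow basis, e.g. the output of a shallow brickwork Clifford circuit applied to a sparse state, does the job, with the entanglement lower bound coming from a light-cone / rank-counting argument), and for \stab\ one needs a superpolynomial stabilizer-rank lower bound, which is exactly the kind of statement known to be hard in general~\cite{bravyi2019simulation} and so must be handled by choosing a state whose \mbrc-membership relies on sparse (not stabilizer) simulation of a basis circuit with many non-Clifford gates, e.g. $U_b$ diagonal in the computational basis with $\Opoly n$ $T$-gates, acting on $\ket +^{\otimes n}$. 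I would spell out that construction carefully, since it is the load-bearing part of the theorem.
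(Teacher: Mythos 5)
Your containment arguments and the two nontrivial intersection claims essentially follow the paper's route: Clifford bases make $F^{\hat O}_{i_b,j_a}$ classically computable via Gottesman--Knill, MPS-type bases reduce it to an MPO contraction, and Clifford constructions of complete MUB sets~\cite{kern2010complete} witness $\mbrmub\cap\stab\neq\emptyset$ and $\mbrmub\cap\mbrc\neq\emptyset$. The divergence, and the gap, is in the two strictness claims. You read $\stab\subsetneq\mbrc$ and $\mps\subsetneq\mbrc$ as requiring a state in \mbrc with a \emph{provable} superpolynomial stabilizer-rank (resp.\ Schmidt-rank) lower bound, you correctly flag this as the load-bearing part, and your sketched witnesses do not deliver it: a basis circuit containing $\Opoly{n}$ $T$-gates does not imply the resulting state has large stabilizer rank (such lower bounds are precisely what is hard, as you yourself note), and a shallow brickwork circuit acting on a $K$-sparse state has Schmidt rank at most $K\cdot 2^{\mathcal O({\rm depth})}$ across nearest-neighbour cuts, so the light-cone argument gives an upper bound, not the volume-law lower bound you need. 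The paper never takes on this burden, because its \stab and \mps are \emph{circuit-level} simulation classes (at most logarithmically many non-Clifford gates; polynomial bond dimension throughout), so strictness only requires basis circuits violating those syntactic conditions while $\bra i U^\dagger \hat O V\ket j$ remains classically computable: a single layer of arbitrary single-qubit rotations ($\mathcal O(n)$ non-Clifford gates, matrix elements factorize into single-qubit terms) separates \mbrc from \stab, and a long-range Clifford circuit (exponential bond dimension, yet handled by the stabilizer formalism) separates \mbrc from \mps. Under the paper's definitions your hard lower-bound program is unnecessary; under your state-level reading your proof is incomplete.

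The second gap is the final claim. You dispatch $\mbrmub\cap(\mbrq\cap\mbrc)=\emptyset$ as a definitional triviality, since classically tractable and intractable MBRs are disjoint. That matches the literal wording but not what the paper proves, and it misses the intended content: $\mbrq$ carries the requirement that $F^{\hat O}$ be efficiently computable with \emph{quantum} resources, and the paper argues that for MUB bases this fails whenever classical tractability fails, because the relevant overlaps (take $\hat O=I$) have magnitude $2^{-n/2}$, so Hadamard-test estimation of their relative phases needs an exponential measurement budget. In other words, the statement is meant to say that no MUB-based MBR lies in the ``quantum-easy but classically-hard'' region, and establishing that requires the exponentially-small-overlap argument rather than the disjointness of tractable and intractable; your hedge about alternative readings does not recover this step, so you should supply it.
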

The proof can be found in~\Cref{app.relations}.

\begin{figure}
    \centering
    \includegraphics[width=\linewidth]{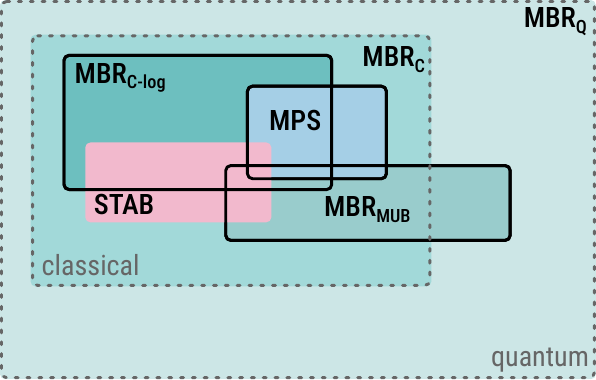}
    \caption{Known inclusion relationships of MBR and its subclasses with respect to other classical representations of quantum states. This diagram is a graphical description of~\Cref{th.venn}.}
    \label{fig.venn}
\end{figure}

\subsection{Optimality of \texorpdfstring{ \mbrmub}{MBR-MUB}}\label{sec.mbrmub}

In this section we explore certain properties that are desirable when choosing the bases that will be used for the MBR, and argue that \mbrmub\ is optimal with respect to these properties. 
Intuitively, we would like each basis to capture a different fraction of the quantum state of interest, that is, 
\begin{equation}\label{eq.orthogonality_condition}
\bra{\psi_a}U_a^\dagger U_b \ket{\psi_b} \approx 0.
\end{equation} 
The reason is that if two bases address the same components of the state, then we encounter redundancy.
For a second condition, we require the multiple bases to be necessary, i.e., it is not possible to mimic the effect of a new basis with a modest increment of $K$. We focus on the bases, therefore we give freedom to choose $S_b$ and coefficients $\alpha_b, \bm c^{(b)}$ specifying $\ket{\psi_b}$. 
The concept of overlaps among bases hinted in~\Cref{eq.orthogonality_condition} is then formalized as follows. 
\begin{definition}[$(U, K)$-fidelity]\label{def.overlap}
Let $U$ be a unitary operation specifying a change of basis. The states $\ket{\psi_1}$ and $\ket{\psi_2}$ are any two with $CB$-rank upper bounded by $K$. 
The $(U, K)$-fidelity is given by
\begin{equation}
\mathcal{F}_{K, U} = \max_{\substack{\ket{\psi_1}, \ket{\psi_2} \\ CB\left(\ket{\psi_1}\right) \leq K \\ CB\left(\ket{\psi_2}\right) \leq K}} \vert\bra{\psi_1}U\ket{\psi_2} \vert.
\end{equation}
\end{definition}

We discuss now the conditions under which $\mathcal F_{(U, K)} = 0$. 
Consider first two arbitrary states with supports $S_1, S_2$ in the computational bases. 
The changes of bases $U$ that satisfies this condition are those that fulfill $U \ket i = \ket j$, for $i \in S_{1}, j\notin S_2$. 
That is, the supports are moved to orthogonal subspaces. 
Certain permutations meet this requirement. 
However, MBRs where $U_b$ are permutations are trivially reduced to a single-basis representation with larger $K$. 
Another allowed kind of basis changes are block diagonalizable operations $U$. 
For illustrative purposes, we consider two extremal cases, (a) $U$ is diagonalizable in $\mathcal O(\exp(n))$ blocks of $\Opoly n$ size, (b) $U$ is diagonalizable in $\Opoly n$ blocks of $\mathcal O(\exp(n))$ size. 
In the case (a), the MBR can be substituted by a single-basis representation, with $\Opoly n$ increment in $K$. 
The exponential scaling is never seen, since there are \textit{only} polynomially many elements. 
On the contrary, the case (b) leads to MBRs that cannot be reduced to singular bases. 
However, this implies that only a small number of supports will be capable of satisfying~\Cref{eq.orthogonality_condition}. 
This reasoning provides the intuition that: if $\bra{\psi_1} U \ket{\psi_2} = 0$, then it is likely that the MBR can be substituted by single-basis representation with a moderate increase in the \cbe-rank.

We formulate the following statement inspired by the previous reasoning, hinting for a tension between reducing $\mathcal F_{(U, K)}$ and maintaining the condition in~\Cref{eq.orthogonality_condition}.
\begin{lemma}[Optimal choice of bases]\label{le.mubs}
Let $\mathcal{F}_{K, U} $ be the basis-dependent sparse-fidelity. 
Let $\mathcal{F}_{K}$ be 
\begin{equation}
    \mathcal{F}_{K} = \min_U \mathcal{F}_{K, U},
\end{equation}
for $U$ unitary. 
$\mathcal{F}_{K}$ is minimized if $U$ satisfies the condition
\begin{equation}\label{eq.mubs}
\vert \bra i U \ket j\vert^2 = \frac{1}{2^n}\; \forall (i, j), 
\end{equation}
with
\begin{equation}
\mathcal{F}_{K} \leq \frac{K}{\sqrt{2^n}}.
\end{equation}
\end{lemma}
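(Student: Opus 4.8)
The plan is to establish the bound $\mathcal{F}_{K,U}\le K/\sqrt{2^n}$ for any $U$ satisfying \Cref{eq.mubs}, and then argue via a variational/averaging lower bound that no $U$ can do better in the worst case, so that the minimum over $U$ is essentially achieved on the MUB condition. First I would expand the overlap: writing $\ket{\psi_1}=\sum_{i\in S_1}c_i\ket i$ and $\ket{\psi_2}=\sum_{j\in S_2}d_j\ket j$ with $|S_1|,|S_2|\le K$ and both normalized, we have $\bra{\psi_1}U\ket{\psi_2}=\sum_{i\in S_1}\sum_{j\in S_2} c_i^*\,d_j\,\bra i U\ket j$. Under \Cref{eq.mubs} each matrix element has modulus exactly $2^{-n/2}$, so by the triangle inequality and Cauchy--Schwarz,
\begin{equation}
|\bra{\psi_1}U\ket{\psi_2}|\le \frac{1}{\sqrt{2^n}}\sum_{i\in S_1}\sum_{j\in S_2}|c_i||d_j|\le \frac{1}{\sqrt{2^n}}\Bigl(\sum_{i\in S_1}|c_i|\Bigr)\Bigl(\sum_{j\in S_2}|d_j|\Bigr)\le \frac{1}{\sqrt{2^n}}\sqrt{K}\sqrt{K}=\frac{K}{\sqrt{2^n}},
\end{equation}
using $\sum_{i\in S_1}|c_i|\le\sqrt{K}\,\|\psi_1\|=\sqrt K$. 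Since this holds for all admissible $\ket{\psi_1},\ket{\psi_2}$, taking the max gives $\mathcal{F}_{K,U}\le K/\sqrt{2^n}$, hence $\mathcal{F}_K\le K/\sqrt{2^n}$ whenever the minimizing $U$ may be taken to satisfy \Cref{eq.mubs}.

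For the optimality half, I would argue that any unitary $U$ admits some computational-basis vector whose image has a "heavy" column, forcing $\mathcal F_{K,U}$ to be at least of order $K/\sqrt{2^n}$ unless $U$ is close to flat. Concretely, fix a single basis state $\ket j$ and consider $U\ket j=\sum_i u_{ij}\ket i$; choosing $S_2=\{j\}$ (so $\ket{\psi_2}=\ket j$, which has $CB$-rank $1\le K$) and choosing $\ket{\psi_1}$ to be the normalized projection of $U\ket j$ onto its $K$ largest-magnitude components $S_1$, we get $|\bra{\psi_1}U\ket j|=\sqrt{\sum_{i\in S_1}|u_{ij}|^2}$, the $\ell_2$ mass of the top $K$ entries of a unit vector. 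The minimum of this quantity over unit vectors in $\mathbb C^{2^n}$ is $\sqrt{K/2^n}$, attained exactly when all $2^n$ entries have equal modulus $2^{-n/2}$ --- i.e. precisely column $j$ of a MUB-type unitary. Running this over all columns $j$ shows that to make $\mathcal F_{K,U}$ as small as $K/\sqrt{2^n}$ one needs \emph{every} column of $U$ to be flat, and since $U$ is unitary (rows also unit-norm) this is exactly \Cref{eq.mubs}; any deviation in some column strictly increases the corresponding top-$K$ mass above $\sqrt{K/2^n}$, hence increases $\mathcal F_{K,U}$.

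I would then combine the two halves: the lower-bound argument shows $\mathcal F_{K,U}\ge \sqrt{K/2^n}$ for all $U$ with equality in the per-column $\ell_2$ sense only for MUB unitaries, while the upper-bound computation shows MUB unitaries achieve $\mathcal F_{K,U}\le K/\sqrt{2^n}$; together these certify that the MUB condition \Cref{eq.mubs} minimizes $\mathcal F_K$ and that the minimal value lies in the stated range (between $\sqrt{K/2^n}$ and $K/\sqrt{2^n}$), giving in particular the claimed bound $\mathcal F_K\le K/\sqrt{2^n}$.

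The main obstacle I anticipate is the optimality direction rather than the bound itself: the clean lower bound $\sqrt{K/2^n}$ comes from restricting to product-form test states ($\ket{\psi_2}$ a single basis vector), and one must be careful that allowing general $K$-sparse $\ket{\psi_2}$ as well does not let some non-flat $U$ undercut a MUB $U$ --- intuitively it cannot, because spreading the support of $\ket{\psi_2}$ only adds more top-$K$-mass terms, but making this precise (and matching the $\sqrt K$ vs.\ $K$ gap between the two bounds, which is just the usual $\ell_1$--$\ell_2$ slack in Cauchy--Schwarz and is not tightened here) requires either a convexity argument on the sparse polytope or an explicit worst-case pair; I would handle it by noting the map $(\ket{\psi_1},\ket{\psi_2})\mapsto|\bra{\psi_1}U\ket{\psi_2}|$ is convex in each argument so its maximum over the (compact, fixed-support) $K$-sparse balls is attained at extreme points, reducing the analysis to the column/row structure of $U$ as above.
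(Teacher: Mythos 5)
Your upper-bound half is fine and is essentially the paper's own argument: with all matrix elements of modulus $2^{-n/2}$, the triangle inequality plus the $\ell_1$--$\ell_2$ bound $\sum_{i\in S}|c_i|\le\sqrt K$ gives $\mathcal F_{K,U}\le K/\sqrt{2^n}$ (your Cauchy--Schwarz step is in fact cleaner than the paper's ``equally distributed coefficients'' phrasing).

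The optimality half, however, has a genuine gap that your own closing paragraph does not repair. Your lower bound restricts $\ket{\psi_2}$ to a single basis vector and yields $\mathcal F_{K,U}\ge\sqrt{K/2^n}$ for every $U$, with equality of this \emph{restricted} quantity exactly at flat columns. But a MUB unitary does not achieve $\sqrt{K/2^n}$ for the full quantity: e.g.\ for $U=H^{\otimes n}$ and $K=2^k$, taking $\ket{\psi_2}=\ket{+}^{\otimes k}\otimes\ket 0^{\otimes(n-k)}$ and $\ket{\psi_1}$ the top-$K$ truncation of $U\ket{\psi_2}$ gives overlap exactly $K/\sqrt{2^n}$. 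So the universal lower bound $\sqrt K/\sqrt{2^n}$ and the MUB value $K/\sqrt{2^n}$ differ by a factor $\sqrt K$, and nothing in your argument excludes a non-flat $U$ whose $\mathcal F_{K,U}$ lies strictly between them, i.e.\ beats the MUB. Minimizing the single-column top-$K$ mass therefore does not imply minimizing $\mathcal F_{K,U}$. The convexity fix you sketch does not close this: the admissible set $\{c:\ \mathrm{supp}(c)\subseteq S,\ \|c\|_2\le1\}$ is an $\ell_2$ ball, whose extreme points are \emph{all} unit vectors supported on $S$, not single basis vectors, so the maximization does not reduce to the column structure of $U$. The paper's argument for this direction is different (and itself informal): it observes that the maximizer can always align both supports and both coefficient vectors with the largest entries of the relevant $K\times K$ block of $U$, so the minimizing $U$ must have all $|\bra i U\ket j|$ equal, which by unitarity forces the value $2^{-n/2}$ of \Cref{eq.mubs}; crucially that reasoning works at the level of whole $K\times K$ blocks, which is what matches the $K/\sqrt{2^n}$ scaling, whereas your column-wise bound cannot.
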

The proof can be found in~\Cref{app.mubs}.
The condition given by~\Cref{le.mubs} is known as mutually unbiased bases (MUB), see~\Cref{sec.mubs}.

\subsubsection*{Properties of MBR with MUBs}

The conditions from~\Cref{le.mubs} allows us to distill some properties of the \mbrmub\ of a quantum states. 
All subsequent properties can be proven using the following following observation. 
When $\{U_b\}$ form a set of MUBs, then the low-\cbe-rank state corresponding to each bases has an exponentially vanishing overlap with all other bases. 
This means that new basis in the MBR adds an exponentially small --- but non zero --- redundancy in the previous steps. 

\begin{lemma}[Accuracy of a MBR]\label{le.bounds_error}
Let $\ket{\psi_{B, K}}$ be the MBR representation of a state $\ket\psi$, with bases $\{ U_b\}$, and let $\ket{\psi_{b}}$ be the optimal approximation of $U_b^\dagger \ket\psi$ with $CB$-rank $K$. 
The bases $\{ U_b\}_b$ form a set of MUBs. 
The overlap achievable by the MBR is bounded by
\begin{align}
\max_{\{ \alpha_b\}} \left\vert \braket{\psi_{B, K}}{\psi} \right\vert & \leq \sqrt{\sum_b \left\vert \braket{\psi_{b, K}}{\psi}\right\vert^2} \left( 1  + \frac{K B }{2 \sqrt{2^n}}\right) \\ 
\max_{\{ \alpha_b\}} \left\vert \braket{\psi_{B, K}}{\psi} \right\vert & \geq \sqrt{\sum_b \left\vert \braket{\psi_{b, K}}{\psi}\right\vert^2} \left( 1  - \frac{K B }{2 \sqrt {2^n}}\right),
\end{align}
and it is obtained by choosing $\alpha_b \propto \vert \braket{\psi_{b}}{\psi} \vert$, with proper normalization.
\end{lemma}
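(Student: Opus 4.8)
The plan is to expand $\ket{\psi_{B,K}} = \sum_b \alpha_b U_b \ket{\psi_b}$ and control the norm and the overlap $\braket{\psi_{B,K}}{\psi}$ by separating the "diagonal" contributions (the terms $b=a$) from the "off-diagonal" cross terms, which are exactly where the MUB structure is exploited. First I would write, for the numerator,
\begin{equation}
\braket{\psi_{B,K}}{\psi} = \sum_b \alpha_b \bra{\psi_b} U_b^\dagger \ket\psi = \sum_b \alpha_b \braket{\psi_{b,K}}{\psi},
\end{equation}
using that $\ket{\psi_b}=\ket{\psi_{b,K}}$ is by hypothesis the optimal $CB$-rank-$K$ approximation of $U_b^\dagger\ket\psi$, so $U_b\ket{\psi_b}$ has the same overlap with $\ket\psi$. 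Then I would compute the norm squared $\Vert\ket{\psi_{B,K}}\Vert^2 = \sum_b \alpha_b^2 + \sum_{a\neq b}\alpha_a\alpha_b \bra{\psi_a}U_a^\dagger U_b\ket{\psi_b}$, and bound each cross term using the $(U,K)$-fidelity machinery: since $\ket{\psi_a},\ket{\psi_b}$ both have $CB$-rank $\le K$ and $U_a^\dagger U_b$ is (by the MUB property and \Cref{le.mubs}) a change of basis satisfying \Cref{eq.mubs}, we get $\vert\bra{\psi_a}U_a^\dagger U_b\ket{\psi_b}\vert \le \mathcal F_{K} \le K/\sqrt{2^n}$. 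Hence $\Vert\ket{\psi_{B,K}}\Vert^2 = \sum_b\alpha_b^2\,(1+\eta)$ with $\vert\eta\vert \le \frac{\sum_{a\neq b}\alpha_a\alpha_b}{\sum_b\alpha_b^2}\cdot\frac{K}{\sqrt{2^n}} \le (B-1)\frac{K}{\sqrt{2^n}}$, the last step by Cauchy–Schwarz / AM–GM on the $\alpha$'s.

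Next I would optimize the (normalized) overlap $\vert\braket{\psi_{B,K}}{\psi}\vert / \Vert\ket{\psi_{B,K}}\Vert$ over the $\alpha_b$. Ignoring the norm correction for a moment, $\big\vert\sum_b \alpha_b \braket{\psi_{b,K}}{\psi}\big\vert / \sqrt{\sum_b \alpha_b^2}$ is a Cauchy–Schwarz expression maximized, with value $\sqrt{\sum_b \vert\braket{\psi_{b,K}}{\psi}\vert^2}$, precisely at $\alpha_b \propto \vert\braket{\psi_b}{\psi}\vert$ (up to the phases, which we can absorb). This identifies the optimal choice claimed in the statement. Reinstating the norm factor $(1+\eta)^{-1/2}$ with $\vert\eta\vert \le KB/\sqrt{2^n}$ (absorbing the $B-1 \le B$) and using $(1\pm x)^{-1/2} = 1 \mp x/2 + O(x^2)$, i.e. the elementary bounds $1 - x/2 \le (1+x)^{-1/2}$ and $(1-x)^{-1/2} \le 1 + x/2 + \dots$ for small $x$, yields the two-sided estimate with the factor $1 \pm \frac{KB}{2\sqrt{2^n}}$. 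I would also note that in the upper bound one must check that the $\alpha_b\propto\vert\braket{\psi_b}{\psi}\vert$ choice is still (near-)optimal even after the norm correction — but since the norm correction is a uniform $(1+O(KB/\sqrt{2^n}))$ factor essentially independent of the direction of $\bm\alpha$, the optimizer is unchanged to the order we track, and any residual is swept into the stated error term.

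The main obstacle I anticipate is bookkeeping the error terms cleanly: one has to be careful that (i) the phases in $\braket{\psi_b}{\psi}$ can indeed be matched by phases in $\alpha_b$ so that the triangle inequality in the numerator is saturated, and that this is consistent with $\alpha_b$ real positive as required in \Cref{def.mbr} (absorb phases into $U_b$ or into $c^{(b)}_{i_b}$); (ii) the linear-algebra bound on the cross terms really uses $\mathcal F_K \le K/\sqrt{2^n}$ uniformly — this is immediate from \Cref{le.mubs} since $U_a^\dagger U_b$ is itself a MUB-type unitary — and (iii) higher-order terms in the expansion of $(1\pm KB/\sqrt{2^n})^{-1/2}$ are genuinely dominated, which requires $KB/\sqrt{2^n}$ to be treated as small; for the inequality to be non-vacuous one implicitly assumes $KB = o(\sqrt{2^n})$, which is consistent with $K,B\in\Opoly n$. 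Everything else is a routine Cauchy–Schwarz computation.
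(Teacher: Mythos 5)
Your proposal is correct and follows essentially the same route as the paper's proof: align $\alpha_b$ with the partial overlaps $\braket{\psi_{b,K}}{\psi}$ (the Cauchy--Schwarz optimizer), bound each off-diagonal term in the normalization by $K/\sqrt{2^n}$ via \Cref{le.mubs} with an extra factor of $B$ from a norm comparison, and Taylor-expand the resulting normalization factor to get $1\pm \frac{KB}{2\sqrt{2^n}}$. The only cosmetic difference is that you control the cross terms for general $\bm\alpha$ (via $\Vert\bm\alpha\Vert_1^2-\Vert\bm\alpha\Vert_2^2\le(B-1)\Vert\bm\alpha\Vert_2^2$) whereas the paper bounds them after substituting the optimal $\alpha_b$ and then uses the $\ell_1$--$\ell_2$ inequality on the overlap vector, which changes nothing substantive.
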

The proof can be found in~\Cref{app.bounds_error}.

We can also study the dimensionality of the space spanned by the MBR. 
\begin{lemma}[Dimensionality]\label{le.dimensionality}
Let $\{U_b\}$ be a set of unitary operations that transforms the computational basis into other bases, such that those bases are pairwise MUBs, and let $\{S_b \}$ be a set of sets of states in the computational basis, with $\vert S_b \vert = K$. 
It holds that the linear space spanned by $$\{ U_b \ket{i_b}\}_{\substack{i_b \in S_b \\ b \in \{1, \ldots B\}}}$$ has dimension $K B$ as long as
\begin{equation}
B \leq \frac{\sqrt{2^n}}{K} + 1.
\end{equation}
\end{lemma}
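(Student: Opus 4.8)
The plan is to show that the $KB$ vectors $\{U_b\ket{i_b}\}$ are linearly independent by controlling the Gram matrix $G$ they define. First I would write $G = \mathbb{I}_{KB} + E$, where $\mathbb{I}_{KB}$ comes from the diagonal blocks: within a single basis $b$, the vectors $\{U_b\ket{i_b}\}_{i_b\in S_b}$ are orthonormal since $U_b$ is unitary and the $\ket{i_b}$ are distinct computational-basis elements. The off-diagonal blocks are exactly the blocks $F_{i_b,j_a} = \bra{i_b}U_b^\dagger U_a\ket{j_a}$ from \Cref{eq.fidelity_matrix} with $a\neq b$, and the MUB condition gives $|F_{i_b,j_a}| = 2^{-n/2}$ for every such entry. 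So $E$ is a Hermitian matrix with zero diagonal blocks and every entry of magnitude at most $2^{-n/2}$.

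The key step is to bound $\|E\|_{\mathrm{op}}$ and show it is strictly less than $1$ under the stated hypothesis, which immediately implies $G \succ 0$ and hence the vectors are linearly independent, so the span has full dimension $KB$. I would estimate the operator norm via the block structure: view $E$ as a $B\times B$ array of $K\times K$ blocks $E_{ba}$ with $E_{bb}=0$ and $\|E_{ba}\|_{\mathrm{op}} \leq \|E_{ba}\|_F \leq K\cdot 2^{-n/2}$ for $a\neq b$ (each block is $K\times K$ with entries of modulus $2^{-n/2}$). Then $\|E\|_{\mathrm{op}}$ is at most the operator norm of the $B\times B$ scalar matrix whose $(b,a)$ entry is $K\,2^{-n/2}$ for $a\neq b$ and $0$ on the diagonal; that matrix is $K\,2^{-n/2}(J_B - \mathbb{I}_B)$ with $J_B$ the all-ones matrix, whose largest eigenvalue is $K\,2^{-n/2}(B-1)$. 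Hence
\begin{equation}
\|E\|_{\mathrm{op}} \leq \frac{K(B-1)}{\sqrt{2^n}},
\end{equation}
and the hypothesis $B \leq \sqrt{2^n}/K + 1$ gives $\|E\|_{\mathrm{op}} \leq 1$. To get strict positivity (and thus genuine linear independence rather than possible degeneracy at the boundary) I would either argue the inequalities above are not simultaneously tight, or note that it suffices to treat $B \leq \sqrt{2^n}/K + 1$ as the regime where the bound is $\le 1$ and invoke it as "$<1$" whenever the inequality on $B$ is strict; the cleanest route is to absorb this into a $(1+o(1))$ statement or simply state the span is $KB$-dimensional in this regime, matching how the lemma is phrased.

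I expect the main obstacle to be the operator-norm bound on $E$: the naive bound $\|E\|_{\mathrm{op}} \leq \|E\|_F = \sqrt{KB(KB-K)}\,2^{-n/2} \sim KB\,2^{-n/2}$ is too weak by a factor of roughly $\sqrt{K}$ and would force $B \lesssim 2^{n/2}/(K^{3/2})$, not the claimed $B \lesssim 2^{n/2}/K$. Exploiting the block structure (treating each $K\times K$ block via its own Frobenius norm, then collapsing to a $B\times B$ problem) is what recovers the stated scaling, so the care is in that two-level estimate rather than in any single inequality. An alternative I would keep in reserve, in case the block estimate is still lossy, is a Gershgorin-type row-sum bound on $G$ directly: each row has one diagonal entry $1$ and at most $K(B-1)$ off-diagonal entries of modulus $2^{-n/2}$, giving $G \succ 0$ once $K(B-1)2^{-n/2} < 1$, i.e. exactly the stated condition — this is in fact the simplest argument and I would likely present it as the primary one, with the operator-norm picture as intuition.
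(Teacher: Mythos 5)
Your proof is correct (given a strict inequality) and takes a genuinely different route from the paper. The paper argues at the level of subspaces: it runs a Gram--Schmidt-type orthogonalization on the $K$-dimensional spaces $\mathcal U_b$ spanned by $\{U_b\ket{i}\}_{i\in S_b}$, invokes the $(U,K)$-fidelity bound $K/\sqrt{2^n}$ of \Cref{le.mubs} for the overlap between any two $K$-sparse states in different MUBs, and concludes that each new subspace still contributes dimension as long as $K(B-1)/\sqrt{2^n}\leq 1$. You instead work directly with the $KB\times KB$ Gram matrix $G=\mathbb{I}+E$ built from \Cref{eq.fidelity_matrix}, use the exact entry modulus $2^{-n/2}$ guaranteed by the MUB condition, and control the perturbation either by the two-level block-norm estimate or, more simply, by a Gershgorin row-sum bound with off-diagonal row sum $K(B-1)2^{-n/2}$. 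Your route is more quantitative: it yields an explicit lower bound $1-K(B-1)/\sqrt{2^n}$ on the smallest eigenvalue of $G$ and hence full rank $KB$ outright, whereas the paper's argument, as written, only asserts that each new subspace is not entirely contained in the span of the previous ones, which by itself does not give dimension exactly $KB$. The paper's subspace picture does make transparent why the threshold involves $B-1$ and why \Cref{le.mubs} is the natural input, but your Gram-matrix argument is the tighter and more self-contained one.

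Your worry about the boundary is well founded, and in fact the non-strict inequality in the lemma (shared by the paper's own proof) cannot be salvaged: at $K(B-1)=\sqrt{2^n}$ the conclusion can fail. For $n=2$, $K=2$, $B=2$, take the computational and Hadamard bases with supports $S_1=\{00,01\}$ and $S_2=\{00,10\}$; then $H^{\otimes 2}\ket{00}+H^{\otimes 2}\ket{10}=\ket{00}+\ket{01}$, so the four vectors span only a three-dimensional space. So the correct statement requires $B<\sqrt{2^n}/K+1$, exactly the regime in which your Gershgorin argument closes; your instinct not to claim strict positivity at the boundary was the right one.
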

The proof can be found in~\Cref{app.dimensionality}.

Finally, we address first the difference in the volume of states that can be approximated in the single- and multiple-basis representations. 

As an auxiliary result, we begin by computing the relative volume of states that can be approximated by a $K$ sparse state with fixed support. 

\begin{lemma}\label{le.volume_sparse}
Let $\ket\psi$ be an $n$-qubit random quantum state drawn from the Haar distribution, and let $\ket{\phi_S}$ be any quantum state of the form
\begin{equation}
\ket{\phi_S} = \sum_{i \in S} c_i \ket i,
\end{equation} 
where $S$ is a fixed set of size $K \in \Opoly n$. 
Then
\begin{multline}\label{eq.probabilities}
\operatorname{Prob}\left(\max_{\ket{\phi_S}}\left(\vert\braket{\psi
}{\phi_S}\vert^2 \right) \geq 1 - \epsilon \right) = \\
I\left(\epsilon; 2^n - K,  K\right),
\end{multline}
with $I(x; a,b)$ being the incomplete regularized beta function.
\end{lemma}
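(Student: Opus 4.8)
The plan is to reduce the statement to a standard fact about the distribution of the squared amplitudes of a Haar-random pure state, after first observing that the maximization over $\ket{\phi_S}$ collapses to a projection. Since the support $S$ is fixed, the states $\ket{\phi_S}$ range exactly over the subspace $\mathcal H_S = \operatorname{span}\{\ket i : i \in S\}$, so by Cauchy--Schwarz $\max_{\ket{\phi_S}}\vert\braket{\psi}{\phi_S}\vert^2 = \Vert \Pi_S \ket\psi\Vert^2 = \sum_{i\in S}\vert\braket{i}{\psi}\vert^2$, where $\Pi_S$ is the orthogonal projector onto $\mathcal H_S$ and the optimum is attained at $\Pi_S\ket\psi$ normalized (this is the fixed-support case of the optimality remark following~\Cref{def.cbe}). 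Thus the event whose probability we want is simply $\{\sum_{i\in S}\vert\braket{i}{\psi}\vert^2 \ge 1-\epsilon\}$.

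Next I would recall the distribution of the amplitudes. Writing $D = 2^n$, a Haar-random pure state can be realized as $\ket\psi = \vec g / \Vert\vec g\Vert$ with $\vec g = (g_1,\dots,g_D)$ a vector of i.i.d.\ standard complex Gaussians. Then the $|g_i|^2$ are i.i.d.\ $\mathrm{Exp}(1)$ variables, and consequently $(\vert\braket{i}{\psi}\vert^2)_i = (|g_i|^2/\sum_j|g_j|^2)_i$ is distributed uniformly on the probability simplex (a symmetric Dirichlet law with all parameters equal to $1$). Grouping the sum of interest, $\sum_{i\in S}\vert\braket{i}{\psi}\vert^2 = X/(X+Y)$ with $X = \sum_{i\in S}|g_i|^2$ and $Y = \sum_{i\notin S}|g_i|^2$ independent and Gamma-distributed with shape parameters $K$ and $D-K$ respectively; hence $\sum_{i\in S}\vert\braket{i}{\psi}\vert^2$ follows a $\mathrm{Beta}(K, D-K)$ law.

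Finally, if $Z \sim \mathrm{Beta}(K, D-K)$ then $1-Z \sim \mathrm{Beta}(D-K, K)$, so
\begin{equation}
\Prob\!\left(Z \ge 1-\epsilon\right) = \Prob\!\left(1-Z \le \epsilon\right) = I(\epsilon;\, 2^n-K,\, K),
\end{equation}
which is the claimed identity. I do not anticipate a genuine obstacle here: the only points requiring care are getting the Dirichlet/Gamma normalization right and using the reflection $Z \leftrightarrow 1-Z$ to land on the parameters in the stated order. The real content is the opening reduction, which exploits crucially that $S$ is \emph{fixed}---this is exactly what removes the combinatorial maximization over supports present in \Cref{def.cbe} and makes the probability exactly computable.
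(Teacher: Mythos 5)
Your proof is correct and takes essentially the same route as the paper's: both reduce the claim to the fact that $\sum_{i\in S}\vert\braket{i}{\psi}\vert^2$ is Beta-distributed for a Haar-random $\ket\psi$ and then read off the tail probability as the regularized incomplete beta function $I(\epsilon;\,2^n-K,\,K)$. The only difference is cosmetic---you obtain the Beta law from the complex-Gaussian/Gamma realization of Haar states, whereas the paper invokes the equivalent statement for sums of squared coordinates of a uniform point on the real sphere of dimension $2^{n+1}$---and your explicit reduction of the maximization over $\ket{\phi_S}$ to the projector norm $\Vert \Pi_S\ket\psi\Vert^2$ makes precise a step the paper leaves implicit.
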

The proof can be found in~\Cref{app.volume_sparse}, and a detailed description of $I(\cdot; \cdot, \cdot)$ is available at~\Cref{app.beta}.
The idea behind this proof is as follows. A quantum state can be mapped to a sphere of dimension $2^{n+1}$, and a quantum state with \cbe-rank $K$ to another sphere of dimension $2K$. 
The calculation consists in computing the volumen of a $\epsilon$-ball of dimension $2^{n + 1} - K$ around any state with \cbe-rank $K$, and normalizing appropriately. For the interest of the paper, we refer the approximation
\begin{multline}
    I\left(\epsilon; 2^n - K - \frac{1}{2}, K - \frac{1}{2}\right) \approx \\ \Phi\left(\epsilon; \frac{2^n - K}{2^n}, \frac{(2^n - K)(K)}{(2^n + 1)2^{2n}} \right), 
\end{multline}
with $\Phi(\cdot; \mu, \sigma^2)$ being the cumulative distribution function of a Gaussian distribution with mean $\mu$ and variance $\sigma^2$~\cite{perez-salinas2024analyzing}.

For completeness, we provide the intuitively more comprehensible bound, valid for $K \leq (1-\epsilon)2^n$,
\begin{multline}
\operatorname{Prob}\left(\max_{\ket{\phi_S}}\left(\vert\braket{\psi
}{\phi_S}\vert^2 \right) \geq 1 - \epsilon \right) \in \\
\mathcal O \left(\exp\left( - 2^n\left( \epsilon - \frac{2^n - K}{2^n}\right)^2 \right)\right).
\end{multline}
This bound comes from connecting the function $I(x; a, b)$ to the binomial distribution and applying Hoeffding's inequality to such binomial distribution. 
We leverage the result  to compute the fraction of states that can be represented within $\epsilon$ accuracy using the MBR. 
\begin{corollary}\label{cor.volume_mbr}
Let $\ket\psi$ be an $n$-qubit random quantum state drawn from the Haar distribution, and let $\ket{\psi_{K, B}}$ be its optimal MBR. 
The $B$ bases are chosen from a set of $M$ different MUBs. 
Then, for $K \leq (1 - \epsilon) 2^n$, 
\begin{multline}
\operatorname{Prob}\left(\vert\braket{\psi
}{\psi_{K, B}}\vert^2 \geq 1 - \epsilon \right) \in \\ \mathcal{O}\left(\exp\left( - 2^n\left( \epsilon - \frac{2^n - K}{2^n}\right)^2 \right) C_M(K, B) \right).
\end{multline}
with 
\begin{equation}\label{eq.Ckb}
C_M(K, B) = \binom{M}{B}\binom{2^n}{K}^B
\end{equation}
\end{corollary}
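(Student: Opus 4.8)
The plan is to bound the event $\vert\braket{\psi}{\psi_{K,B}}\vert^2 \geq 1 - \epsilon$ by a union bound over all ways of choosing the MBR data, and then invoke \Cref{le.volume_sparse} for each fixed choice. First I would fix a choice of $B$ bases out of the $M$ available MUBs — there are $\binom{M}{B}$ such choices — and, within each of the $B$ chosen bases, a support $S_b \subset \{0,\ldots,2^n-1\}$ of size $K$ — there are $\binom{2^n}{K}$ choices per basis, hence $\binom{2^n}{K}^B$ in total. The product of these counts is exactly $C_M(K,B)$ as defined in \Cref{eq.Ckb}. Once all this discrete data is fixed, the set of achievable states $\ket{\psi_{K,B}}$ ranges over a linear subspace; by \Cref{le.dimensionality} (using the hypothesis relating $B$, $K$ and $2^n$ implicitly, or simply bounding the span dimension by $KB$), the optimal overlap of a Haar-random $\ket\psi$ with this subspace is governed by a sparse-type fidelity bound.

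The key step is then: for \emph{fixed} discrete data, the probability that a Haar-random state has squared overlap at least $1-\epsilon$ with the corresponding subspace is at most the probability computed in \Cref{le.volume_sparse}, i.e. it is controlled by $I(\epsilon; 2^n - KB, KB)$, and by the Hoeffding-type bound stated just before the corollary this is $\mathcal{O}\!\left(\exp\left(-2^n\left(\epsilon - \tfrac{2^n - K}{2^n}\right)^2\right)\right)$ — note the exponent uses $K$ rather than $KB$ in the displayed form, which is a (slightly looser but valid) bound since increasing the effective support size only increases the representable volume, and the stated corollary keeps the $K$-form inside the exponential. Applying the union bound over the $C_M(K,B)$ choices of discrete data multiplies this probability by $C_M(K,B)$, which yields precisely the claimed bound.

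The main obstacle — or rather the point requiring care — is that the supports $S_b$ and the MUB labels are genuinely discrete, so the union bound is legitimate, but the amplitudes $\alpha_b$ and the coefficients $c^{(b)}_{i_b}$ are continuous; these must be absorbed into the "optimal approximation within a fixed subspace" picture rather than discretized. Concretely, for fixed discrete data the reachable set is contained in the span of $\{U_b\ket{i_b}\}$, and \Cref{le.volume_sparse} (applied with the effective support size, and invoking \Cref{le.dimensionality} to control that the span has the expected dimension so no double counting inflates the volume) bounds the Haar probability of landing $\epsilon$-close to \emph{any} element of that span. A secondary subtlety is that the blocks $U_b$ are not mutually orthogonal — the MUB condition gives overlaps of size $1/\sqrt{2^n}$ — so strictly the span could have dimension slightly below $KB$ or the $\epsilon$-ball volume could differ; but \Cref{le.dimensionality} guarantees dimension exactly $KB$ in the relevant regime $B \leq \sqrt{2^n}/K + 1$, and the volume estimate of \Cref{le.volume_sparse} depends only on that dimension, so the argument goes through. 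The remaining arithmetic — plugging $\binom{M}{B}\binom{2^n}{K}^B$ into the union bound and simplifying — is routine.
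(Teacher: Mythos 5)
Your proposal matches the paper's own argument essentially verbatim: a union bound over the $\binom{M}{B}\binom{2^n}{K}^B$ discrete choices of bases and supports, combined with the fixed-subspace Haar-volume estimate of \Cref{le.volume_sparse} (with the span dimension $KB$ justified via \Cref{le.dimensionality}) and the Hoeffding-type tail bound, exactly as in the paper. One minor caveat: your parenthetical claim that keeping $K$ rather than $KB$ in the exponent is ``looser but valid'' is stated backwards --- the $K$-form is the \emph{smaller} expression, so the substitution needs the combinatorial prefactor (or the $\mathcal{O}(\cdot)$) to absorb a factor of order $e^{\mathcal{O}(KB)}$ --- but the paper's own proof glosses over the same point with ``the overlaps among volumes are absorbed in the upper bound.''
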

The proof follows immediately from~\Cref{le.volume_sparse} together with two considerations. 
First, the MBR spans in a linear subspace with dimension $KB$. 
Second, there are $M$ different MUBs, from which we choose $B$ of them. 
However, the $M$ bases do not count any ordering, and therefore there are $\binom{2^n}{K}$ different choices for $K$-sparse states for each basis. 
The overlaps among volumes are absorbed in the upper bound. 

It is interesting to compare the bounds in the probabilities given by~\Cref{cor.volume_mbr} for extremal cases, in which $K = 1$ or $B= 1$ while keeping the same number of degrees of freedom. 
We obtain the quotients
\begin{align}\small
\frac{C_M(K, B)}{C_M(KB, 1)} & \leq M^{B - 1}\frac{(KB)^{KB}}{B! (K!)^B} \\
 \frac{C_M(K, B)}{C_M(KB, 1)} & \geq M^{B - 1} \frac{(KB)!}{(B K^K)^B} \\
\frac{C_M(K, B)}{C_M(1, KB)} & \leq  \frac{M^B}{M^{K - 1}}\frac{(KB)^{KB}}{B! (K!)^B} \\ 
\frac{C_M(K, B)}{C_M(1, KB)} & \geq \frac{M^B}{M^{K - 1}}\frac{(KB)!}{(B K^K)^B} 
\end{align}
These numbers indicate that the volume of representable states increases if the number of considered bases $B$ increases.

We give an interpretation of the results obtained in this section. The bounds given by~\Cref{le.volume_sparse} and subsequent approximations state that the volume of captured states rapidly increases at the transition point $\epsilon \approx 2^{-n}(2^n - K)$, yielding exponentially vanishing relative volumes for reasonable $K \in \Opoly n, 1 - \epsilon \in \Omega({\rm poly}^{-1} (n))$. This observation aligns with the common observation that no classical representation of states can reach all (or a significant fraction) of quantum states. This, however, does not contradict the results on classical (in)tractability detailed in this manuscript.

\section{Applications of MBR}\label{sec.applications}

Succinct representations of quantum states acquire importance depending on their utility to conduct certain tasks. 
Tensor networks, and in particular MPS, have been broadly utilized to investigate ground states of many-body physics. 
The stabilizer formalism found great applicability in quantum error correction~\cite{terhal2015quantum}. 
In order to highlight the potential of MBR, we outline in this section three different applications to it.

\subsection{Ground-state approximation via MBR}\label{sec.groundstate}

The MBR can be used as a tool to obtain classical approximations of ground states of Hamiltonians, by leveraging quantum subspace expansions~\cite{mcclean2017hybrid}. Notice that finding the ground states even for physically motivated Hamiltonians is QMA-complete~\cite{schuch2009computational}, hence proposals for heuristic approaches becomes practically relevant. 
In the case we deal with \mbrc\, then this approximation is also classically efficient to compute. 

We assume a decomposition of the Hamiltonian as 
\begin{equation}\label{eq.ham_decomposition}
    \hat H = \sum_c \hat h_c, 
\end{equation}
such that each term $\hat h_c$ can be diagonalized in a basis defined by $U_c$, that is
\begin{equation}
\hat h_c U_c\ket i = \lambda_i U_c\ket i, 
\end{equation}
for $i \in \{0, \ldots 2^n - 1 \}$. 
Starting from this point, we want to obtain the best possible MBR approximation to the ground state. 
For that, we consider $\{ U_c \}$ as our set of unitary operations, and develop a constructive method to find the coefficients $\bm c^{(b)}, \alpha_b$, as well as the sets $S_b$. 

We assume first that $S_b$ is fixed. 
Then, the Hamiltonian as evaluated in the considered subspace is given by
\begin{equation}
\hat H_{i_b, j_a} = \sum_c \bra{i_b} U_b^\dagger \hat h_c U_a \ket{j_a}, 
\end{equation}
where the indices $\{a,b,c\}$ run over the same values, and $\{i_b, j_a\}$ are given by the subsets $S_b$ from the MBR. 
Since this truncated Hamiltonian has $\Opoly n$ size, the next step is just to diagonalize. 
However, the basis in which this Hamiltonian is expressed is not orthonormal due to the presence of the unitary matrices $U_b$, and the relationship between basis elements is encoded in the matrix $F$ from~\Cref{eq.fidelity_matrix}. 
The exact diagonalization can be performed through the generalized eigenvalue problem, where the solution state lives in the truncated space explored by the MBR, that is solving the equation~\footnote{Notice the small abuse of notation by writing in the previous equations the matrices with the same symbol we used to define their components.}
\begin{align}
\det\left(\hat H_{i_b, j_a} - E F_{i_b, j_a}\right) = 0.
\end{align}
Assuming a polynomial \mbrc, solving the generalized eigenvalue problem is classically feasible if $H$ decomposes in polynomially many terms $\hat h_c$. 
On the contrary, assuming \mbrq, both matrices $\hat H_{i_b, j_a}$ and $F_{i_b, j_a}$ must be obtained with the assistance of a quantum computer via Hadamard tests or similar techniques for overlap estimation, while the diagonalization is a classical task. 
This diagonalization provides exactly the ground state if $\det F = 0$, that is the MBR, for given $K$ and unitaries, can represent any quantum state. 
This approach yields an upper bound on the energy of the approximation of the ground state of interest captured by our ansatz.
This scenario is only possible for few qubits, and MBR is not needed to approximate the ground state.

We address now the question of finding the subspaces in which the approximation to the ground state is optimal, that is the choice of $S_b$. 
Searching among all possibilities is a hard problem if no further information is provided. 
There are $\binom{2^n}{K}$ different sets to choose from, and any brute-force search becomes unfeasible. 
Only informed searches provide tractable scenarios. 
As a first candidate, one can use heuristic optimization methods to look for sufficiently good $\{ S_b\}$, e. g. via genetic algorithms~\cite{mitchell1996introduction}. 
One can also choose a physics-informed alternative by selecting the lowest-energy states in each of the Hamiltonians $H_b$, provided this information is efficiently computable with classical resources.

\subsubsection*{Transverse-Field Ising model}

We illustrate the method here described with the example of approximating the ground state of a transverse-field Ising model (TFIM), defined as
\begin{align}\label{eq.tfim}
\begin{aligned}
H_G(J, h) &= J\sum_{(i, j) \in E(G)} X_i X_j + h \sum_{i \in N(G)} Z_i \\ 
&\equiv J H_{xx} + h H_z,
\end{aligned}
\end{align}
where $G$ is a graph defining the model and $N(G), E(G)$ are respectively nodes and edges of the graph.

We first choose the bases to apply the MBR. 
In this case, the bases can be chosen as those diagonalizing $H_{zz}$ and $H_z$ respectively. 
$H_{z}$ is diagonalized in the computational basis, thus $U_{z} = I$, and $H_{xx}$ is diagonalized in the Hadamard basis, thus $U_{xx} = H^{\otimes n}$. 

For the subsets $\{ S_b\}$, we follow the physical intuition in the considered problem. 
Since we look for a ground state, we apply the heuristic of keeping those states with the lowest energies in each term $\hat h_c$. 
For the Hamiltonians considered, the energies of a basis element defined by a bitstring can be easily computed. 
For $H_z$, one just needs to obtain the Hamming weight. 
For $H_{xx}$, the energy is related to the number of cuts of the graph $G$ induced by the bitstring, in the Hadamard basis. 
Depending on the value of $J$, the heuristics are different. 
For the ferromagnetic phase $J < 0$, the ground state of $H_{zz}$ is equivalent to solving the MinCut problem, which has a polynomial cost~\cite{goldschmidt1994polynomial}. 
The antiferromagnetic phase $J > 0$ needs the heuristic to solve a MaxCut problem, which is NP-hard for arbitrary graphs Heuristic methods exist to find approximations to the solutions in ~\cite{goemans1995improved, dunning2018what}, and each approximation found by any method can be used as one element in $H_{zz}$. 
In some physics-motivated cases, such as lattice graphs, the MaxCut problem is trivial to solve.

\begin{figure}[t!]
\includegraphics[width = \columnwidth]{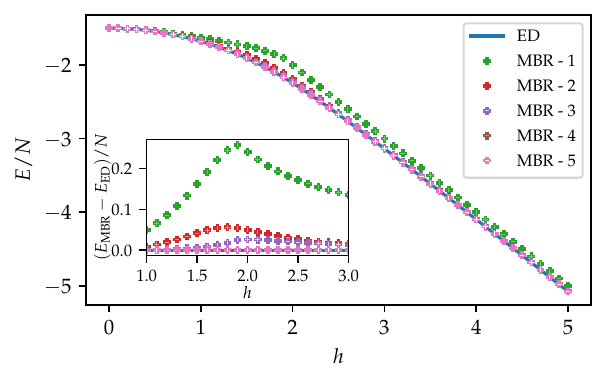}
\caption{Different MBR approximations of the ground state of an anti-ferromagnetic Ising model. 
The system is defined on a square lattice of size $4\times 4$ with open boundary conditions, and $J = 1$.
The inset shows the deviation from exact diagonalization data with increasing degree of the MBR approximation in the middle of the parameter range, i.e. the hardest regime for the MBR approximation.}
\label{fig.energies}
\includegraphics[width= \columnwidth]{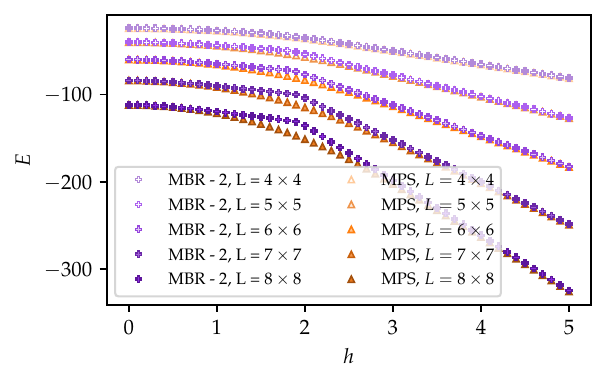}
    \caption{Finite size effects in the approximation for the TFIM on an $L\times L$ lattice.
    As the system grows, the deviations around the phase transition between MPS data and MBR data become more pronounced. 
    Larger systems can need a larger MBR basis.}
\label{fig:finite-size}
\end{figure}

We numerically approximate the ground state of the TFIM [cf.~\Cref{eq.tfim}] on a 2D square lattice of size $L \times L$. 
The considered space includes the ground states of $H_{xx}$ and $H_z$ in the computational and Hadamard basis respectively, and all states up to a fixed Hamming distance $D_H$ to these ground states. 
The dimensionality of the expanded subspace grows with the system size as $\mathcal{O}\left(\left(L^{2}/H_W\right)^{D_H}\right)$. 
The regularity of the lattice allows for the efficient computation of these state.
The accuracy is assessed through the comparison between the energies of found states through MBR, and that obtained with exact diagonalization or MPS approximation, depending on the size of the lattice.

First, we benchmark the MBR state against exact results (cf.~\Cref{fig.energies}) on a small lattice in the anti-ferromagnetic case ($J = 1$) with varying transverse magnetic field  $h$.
The degree of the MBR approximation is given by the dimension of the considered subspace, here depicted as ${\rm MBR}-D_H$. 
The energies of the MBR approximations to the ground state energy match the values obtained by exact diagonalization increasingly well as $D_H$ increases (cf. inset of~\Cref{fig.energies}).
The MBR approximation is capable of accurately representing the regimes $h \approx 0$ and $h\gg J$ even with of $D_H = 1$. 
We attribute this to the choice of the subspace expansion which includes the ground state of each term in the Hamiltonian. 
In both regimes, high and low transverse field, the ground state becomes more similar to a state in our ansatz. 

For the regime around the phase transition $3 J \approx h$~\cite{blote2002cluster, ducroodejongh1998critical}, the MBR approximation yields less accurate results as compared to the reference value. 
The accuracy improves as the degree of the MBR is increased, fulfilling the interpretation that the approximation will become a exact representation with enough computational power. 
In the MBR formulation, the deviation from the MPS simulation happens slightly earlier than the expected formulation.
This could be related to a breakdown of the heuristic in this parameter regime, even independent of the actual phase transition.

In~\Cref{fig:finite-size}, we compare the performance of a fixed MBR approximation for increasing system size.
While a small MBR degree is sufficient to obtain good estimates for the ground state energy on small lattice sizes, a larger degree becomes necessary when the lattice increases.

It is interesting to analyze the differences between the methods using MBR and \mps\ to obtain ground states (cf.~\Cref{fig:finite-size}) 
MBR is entirely based on heuristics over the subspace to consider in the search for the ground state. 
Candidates to ground states are accumulated, without adding correlations to the selection criteria. 
On the other hand, \mps\ are highly dependent on correlations of the ground state to be found, and are successful only in the case of low entanglement. 
This hinders the applicability of \mps\ for systems allowing long-distance correlations, such as 2D lattices, or lattices defined by arbitrary and irregular graphs. 
Therefore, the candidates to ground states given by MBR might be more accurate in those cases, where MPS demand large overheads. 
This phenomenon can extend to quantum circuits with arbitrary connectivity which do not admit classical simulations. 
Investigations on these topics is left for future research.

\subsection{MBR and \ryc}\label{sec.reducechop}

We inspect now the interplay between MBR and the \ryc\ algorithm~\cite{perez-salinas2023shallow} to show how using several bases might be useful to further decrease the depth requirement.

\subsubsection{Background}

The \ryc\ method leverages states with small \cbe-ranks to conduct simulations of deep quantum computations making use of shallower circuits~\cite{perez-salinas2023shallow}. 
Consider the problem of computing $p(x) = \vert \bra x \mathcal U \ket 0\vert^2$, where $x$ is a given bitstring and $U$ is a given (deep) quantum circuit. 
We can approximate $p(x)$ by
\begin{multline}\label{eq.reduceandchop}
p(x) = \left\vert \bra x \mathcal U \ket 0 \right\vert^2 = \left\vert \sum_{i = 0}^{2^n - 1} \bra x \mathcal U_2 \ket i \bra i 
\mathcal U_1 \ket 0\right\vert^2 \\ \approx \left\vert \sum_{\substack{i \in S \\ \vert S \vert = K}} \bra x \mathcal U_2 R^\dagger \ket i \bra i R \mathcal U_1 \ket 0\right\vert^2,
\end{multline}
where $S$ $K$ is the support over bitstrings $i$, $\mathcal U = \mathcal U_2 \mathcal U_1$ and $R$ is a unitary operation called the reducer. 
For the approximation to be accurate, it is sufficient that the state $R \mathcal U_1 \ket 0$ fulfills $\cbe(\ket\psi)\in \Opoly n$, with $n$ the number of qubits. 
Under this assumption, the \ryc\ method provides a recipe to approximate $p(x)$ with classical overhead in $\Opoly n$. 
The depth requirements to compute $p(x)$ decrease from $d(U)$ to $d(U)/2 + d(R)$ through \ryc (assuming $\mathcal U$ is splitted in halves), with $d(\cdot)$ being the depth of a quantum circuit.

A fundamental step for \ryc\ to work is to find the reducer $R$. 
The function of $R$ is to identify a change of basis under which the state $\mathcal U_1 \ket 0$ has small \cbe-rank. 
The hardness of finding the appropriate $R$ hinges on the internal structure of state $\mathcal U_1 \ket 0$. 
For example, there exist straightforward circuits with constant depth with the capability to transform a state with maximal CB rank into another state with minimal rank.  
The search for $R$ can be approached variationally~\cite{perez-salinas2023shallow}, and a successful search for $R$ dramatically affects the performance of \ryc. 
Recent advances in peaked states~\citep{aaronson2024verifiable}, which can be understood as states with \cbe-rank 1, showed that it is possible to peak a random quantum circuit $d(V)$ by adding a reducer satisfying
\begin{equation}
d(R) \in \, \Omega\left( \left(\frac{d(U)}{n} \right)^{0.19}\right).
\end{equation}
in the same domain, it was shown that $O(1)$-depth peaked circuits can be classically and optimally simulated in quasi-polynomial runtime $n^{\bigO\left(\log(n)\right)}$~\cite{bravyi2023classical}, and therefore \ryc\ can bring quantum advantages starting from $d(R \mathcal U_1) \in \Omega(1)$.

\subsubsection{MBR for depth reduction}

The \ryc\ method relies on identifying an easily implementable change of basis $R(\theta)$ to achieve a small $\cbe$-rank for a given state. 
The MBR offers a more nuanced representation of a quantum state by employing multiple bases simultaneously, each representing a distinct component of the state of interest. 
This representation further explores the depth reduction motivating \ryc. 
The MBR is interesting as long as the bases required for it are easy to implement in a quantum computer. 
Ideally, these bases are designed to be even simpler than the original $R(\theta)$ from \ryc, striking a balance between $\cbe$-ranks, number of bases, and the associated cost of basis change.

To clarify this, consider a state $\ket\psi$ and a poly-sized set of MUBs $\{U_b\}$ such that $(\cbe(U_b\psi)), (1 - \epsilon)^{-1} \in \bigO({\rm poly}(n))$. 
By virtue of~\Cref{le.bounds_error}, the accuracy of the MBR can reach constant values. 
The action of $R$ needed to obtain sparse representation is now split among several more simple unitaries. 
The complexity of running each circuit is therefore simpler. 

Estimating the depth savings attained from combining the MBR with the \ryc\ method is more complicated. 
As motivated by the example in~\Cref{sec.example}, it is possible in some cases to find bases represented by shallow circuits for which an accurate MBR of the state is obtained, while the basis required to lower the \cbe-rank of the entire state requires larger depth. 
In the \ryc\ method, the output of the first half of the circuit must admit an accurate MBR, for $K, B \in \Opoly n$. 
The results in this section bound the depth needed to attain such states, thus allowing us the to substitute this circuit with classical combinations of quantum circuits of depth at most $d(U)$. 
If this state serves as input for the \ryc\ algorithm, then the required depth in a quantum circuit to simulate the values of $p(x) = \vert \bra x U \ket{\psi_{K, B}}\vert^2$ is upper bounded by $d(U) + \max_b d(U_b)$, yielded from concatenating the change of basis and the circuit $U$.

\begin{figure*}
\includegraphics[scale=1.5]{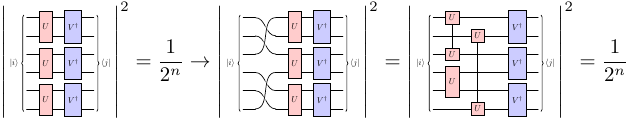}
\caption{Let $U$, $V$ be two $k$-qubit MUBs, such that its tensor products $U^{\otimes n/k}, V^{\otimes n/k}$ are MUBs for $n$ qubits. 
Since MUBs do not have ordering, it is possible to construct new sets of MUBs by applying arbitrary permutations to the bases $U^{\otimes n/k} \rightarrow U^{\otimes n/k} \Pi$. 
This allows for finding many different sets of MUBs.}
\label{fig.mub-perm}
\end{figure*}

The first result we give is related to lower bounds to create arbitrary $n$-qubit states with $CB$-rank $K$. 
\begin{lemma}[Lower-bound in creating sparse states]\label{le.lowerboundssparse}
Arbitrary $n$-qubit states with CB-rank $K$ require at least $m$ gates in depth $d$ to be prepared in a quantum computer from the $\ket 0$ state, with
\begin{align}
m \in & \Omega\left( \frac{K \log(\epsilon^{-1})}{\log\log K} + n \right) \\ 
d \in & \Omega\left( \frac{K \log(\epsilon^{-1})}{\log K \log\log K} + n \right)
\end{align}
\end{lemma}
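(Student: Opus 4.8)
The plan is to combine a counting/dimension argument with an information-theoretic lower bound on how much "useful structure" a shallow circuit of few gates can produce. The two terms in each bound come from two independent obstructions, and I would prove them separately and then take the maximum (which, up to the additive combination written, is what the $\Omega(\cdot+\cdot)$ notation encodes).

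First, the additive $n$ term. Any circuit preparing a state whose reduced density matrix on some qubit is not $\ket 0\!\bra 0$ must touch that qubit; more strongly, a $K$-sparse state with $K\ge 2$ generically has all $n$ qubits in a nontrivial (entangled or at least mixed) local state, so every qubit must be acted on by at least one gate and must lie on at least one causal path from an input. This forces $m\in\Omega(n)$ trivially, and $d\in\Omega(n)$ only if the connectivity is a line; the cleaner statement is the "light-cone" one: to correlate $n$ qubits into a single coherent superposition the circuit must have a connected causal structure spanning all $n$ qubits, giving $d\in\Omega(\log n)$ for all-to-all gates or $\Omega(n)$ in 1D — I would state it as $\Omega(n)$ matching the claim, noting the connectivity assumption. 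This part is routine.

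Second, and this is the heart of the lemma, the $K\log(\epsilon^{-1})$-type term. The idea is a volume/covering argument on the set of achievable states. A circuit with $m$ two-qubit gates on $n$ qubits, each gate drawn from a continuous family, is parametrized by $\mathcal O(m)$ real parameters plus a discrete choice of which $\mathcal O(1)$-qubit subsets the gates act on; there are at most $\binom{n}{\mathcal O(1)}^m = n^{\mathcal O(m)}$ such combinatorial patterns, each contributing a semialgebraic image of dimension $\mathcal O(m)$ in state space. To $\epsilon$-approximate \emph{every} $K$-sparse state, the union of these images must form an $\epsilon$-net of the $K$-sparse manifold. That manifold (fixed support, normalized) has real dimension $\Theta(K)$, so its $\epsilon$-covering number is $\epsilon^{-\Theta(K)}$; but we also get to range over all $\binom{2^n}{K}$ supports, contributing a further combinatorial factor. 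Counting: the number of distinct circuit "shapes" times the covering number each image region can itself supply must dominate, yielding $n^{\mathcal O(m)}\cdot(\text{per-region net size})\gtrsim \binom{2^n}{K}\epsilon^{-\Theta(K)}$. Taking logarithms, $m\log n \gtrsim K\log(2^n/K) + K\log(\epsilon^{-1})$, and the $\log n$ in the denominator gets sharpened to $\log\log K$ by a more careful accounting of how a single continuous gate parameter can only "resolve" $\log(\epsilon^{-1})$ bits while the discrete placement of $m$ gates over $\mathcal O(K)$ relevant wires carries only $\mathcal O(m\log K)$ bits — so the effective information per gate is $\mathcal O(\log K\cdot\log\log K)$ after optimizing. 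For the depth bound one replaces $m$ by $d$ using $m\le d\cdot n$ in the worst case, but the stated $d\in\Omega\!\left(\frac{K\log(\epsilon^{-1})}{\log K\log\log K}+n\right)$ suggests instead a direct light-cone argument: in depth $d$ with $\mathcal O(1)$-local gates each output qubit's causal past contains $\mathcal O(2^{\mathcal O(d)})$ (1D) or $\mathcal O(d^{\mathcal O(1)})$ (bounded-degree) input wires, bounding the achievable Schmidt/sparsity structure, from which the same entropy count gives the claimed $d$.

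\textbf{Main obstacle.} The delicate point is getting the denominators exactly $\log\log K$ (for $m$) and $\log K\log\log K$ (for $d$) rather than a crude $\log n$ or $\log K$. This requires carefully separating the "discrete information" (which gate pattern) from the "continuous information" (gate angles to precision $\epsilon$) and recognizing that the binary-counting bottleneck is the number of ways to \emph{schedule} $m$ gates among $\mathcal O(K)$ active wires, which contributes $\log(K^m/m!)\sim m\log K$ bits, while each gate's angles add only $\mathcal O(\log\epsilon^{-1})$ bits — balancing these two against the target entropy $\Omega(K\log\epsilon^{-1})$ of the $K$-sparse family is what produces the iterated logarithm. I would isolate this as a clean combinatorial counting lemma (in the spirit of Nielsen's geometric lower bounds or Knill-style circuit counting) and invoke it, deferring the full bookkeeping to the appendix.
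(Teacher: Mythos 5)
Your overall strategy (a covering/counting lower bound in the spirit of Nielsen--Chuang, plus a separate trivial $\Omega(n)$ term) is the same kind of argument the paper relies on: the paper's proof is a one-line reduction to the counting bound of Section 4.5.5 of Nielsen--Chuang for preparing arbitrary states of a $K$-dimensional register, i.e.\ of $\lceil\log_2 K\rceil$ qubits, with the additive $n$ accounting for touching all qubits. However, your sketch has a genuine gap exactly at the point you flag as the ``main obstacle'': the denominators. Doing the covering argument directly in the $n$-qubit system gives per-gate placement entropy $\Theta(\log n)$ and hence only $m\in\Omega\bigl(K\log(\epsilon^{-1})/\log n\bigr)$, which is weaker than the claim; and your proposed sharpening does not work as stated. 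You count gate placements over ``$\mathcal O(K)$ relevant wires,'' contributing $\mathcal O(m\log K)$ bits, and then assert the effective information per gate is $\mathcal O(\log K\cdot\log\log K)$ ``after optimizing'' --- but balancing $m\bigl(\log K+\log(\epsilon^{-1})\bigr)$ against the target $\Omega\bigl(K\log(\epsilon^{-1})\bigr)$ yields $m\in\Omega\bigl(K\log(\epsilon^{-1})/(\log K+\log(\epsilon^{-1}))\bigr)$, not the stated $\Omega\bigl(K\log(\epsilon^{-1})/\log\log K\bigr)$; nothing in the sketch produces an iterated logarithm.

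The missing idea is a reduction to a small register: a circuit family able to prepare all $K$-sparse $n$-qubit states can in particular prepare every state supported on a fixed set of $\lceil\log_2 K\rceil$ qubits, i.e.\ an arbitrary state of a $K$-dimensional Hilbert space. For that subfamily the relevant system size is $\log K$ qubits, so each gate placement carries only $\mathcal O(\log\log K)$ bits, and the standard counting (discretizing gate parameters to precision $\mathcal O(\epsilon/m)$ and comparing against the $\epsilon^{-\Theta(K)}$ covering number of the $(2K-1)$-sphere) gives $m\in\Omega\bigl(K\log(\epsilon^{-1})/\log\log K\bigr)$; dividing by the $\mathcal O(\log K)$ gates that can act in parallel on that register gives $d\in\Omega\bigl(K\log(\epsilon^{-1})/(\log K\,\log\log K)\bigr)$. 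With this reduction your covering argument closes and coincides with the paper's. Your caveat about the additive $n$ in the depth bound (light cones give only $\Omega(\log n)$ under all-to-all connectivity) is a fair observation and applies to the paper's statement as well; the $m\in\Omega(n)$ part is fine since every qubit must be acted on.
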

The proof immediately follows from Section 4.5.5 in~\cite{nielsen2010quantum} to construct states of dimension $K$ in a register of $\log K$ qubits. 
The additional term $n$ accounts for arbitrary correlations among qubits. 

For the next step, we recall that any state in \mbrmub\ has dimension at most $KB$, assuming no redundancy in the bases. 
Therefore, we can further leverage~\Cref{le.lowerboundssparse} to obtain the following result. 
\begin{corollary}[Lower-bound in MBR]\label{le.lowerboundsmbr}
Let $\ket\psi$ be an $n$-qubit such that it can be $\epsilon$-approximated with an \mbrmub\ as in~\Cref{eq.state_def}, with sparsity $K$ and $B$ different bases. 
This state requires at least $m$ gates and depth $d$ to be created, with
\begin{align}
m \in & \Omega\left( \frac{K B \log(\epsilon^{-1})}{\log\log (K B)} + n \right), \\ 
d \in & \Omega\left( \frac{K B \log(\epsilon^{-1})}{\log (K B) \log\log (K B)} + n \right).
\end{align}
\end{corollary}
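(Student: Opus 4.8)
The plan is to reduce the claimed bound directly to Lemma~\ref{le.lowerboundssparse} via the dimensionality statement of Lemma~\ref{le.dimensionality}. The starting observation is that if $\ket\psi$ admits an $\epsilon$-approximate \mbrmub{} with sparsity $K$ and $B$ bases forming a set of MUBs, and if there is no redundancy among the bases (so $B \leq \sqrt{2^n}/K + 1$, the regime in which Lemma~\ref{le.dimensionality} applies), then the approximating state $\ket{\psi_{K,B}}$ lives in a linear subspace of dimension exactly $KB$. Hence $\ket{\psi_{K,B}}$ is, up to a fixed change of basis, an arbitrary state supported on $KB$ computational-basis elements — i.e.\ a state of $CB$-rank at most $KB$ on a register of $\lceil \log(KB) \rceil$ effective "data" qubits, embedded in the $n$-qubit space. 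Since $\ket\psi$ is $\epsilon$-approximated by $\ket{\psi_{K,B}}$, any circuit preparing $\ket\psi$ to accuracy $\epsilon$ also prepares a state of $CB$-rank $\le KB$ to accuracy $\mathcal O(\epsilon)$ in some basis, so the lower bounds of Lemma~\ref{le.lowerboundssparse} apply with $K$ replaced by $KB$.

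Concretely, I would argue as follows. First invoke Lemma~\ref{le.dimensionality} to assert $\dim \mathrm{span}\{U_b\ket{i_b}\} = KB$; this is where the hypothesis "assuming no redundancy in the bases" (equivalently $B(KB-1) \le \sqrt{2^n}$-type condition from Lemma~\ref{le.dimensionality}) is used, and it should be stated explicitly as part of the corollary's scope. Second, observe that being confined to a $KB$-dimensional subspace means $\ket{\psi_{K,B}}$ has Schmidt-type structure no richer than a generic $KB$-dimensional vector; in particular there exists a unitary $W$ (not necessarily efficient) mapping this subspace onto $\mathrm{span}\{\ket 0, \dots, \ket{KB-1}\}$, so $W\ket{\psi_{K,B}}$ has $CB$-rank $\le KB$. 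Third, apply Lemma~\ref{le.lowerboundssparse} verbatim with parameter $KB$ in place of $K$: preparing an arbitrary $CB$-rank-$(KB)$ state needs $\Omega\!\left(\frac{KB\log(\epsilon^{-1})}{\log\log(KB)}\right)$ gates and depth $\Omega\!\left(\frac{KB\log(\epsilon^{-1})}{\log(KB)\log\log(KB)}\right)$ on the $\lceil\log(KB)\rceil$-qubit data register, plus the additive $n$ term accounting for arbitrary correlations/spreading across all $n$ physical qubits (exactly as in the remark following Lemma~\ref{le.lowerboundssparse}). Adding the two contributions gives the stated $m$ and $d$ bounds.

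The one genuinely delicate point — and the main obstacle — is the reduction step "arbitrary $KB$-dimensional subspace state $\Rightarrow$ arbitrary $CB$-rank-$(KB)$ state for the purpose of circuit lower bounds." A lower bound on preparing an arbitrary sparse state does not automatically transfer to a particular state that merely happens to lie in a low-dimensional subspace, because the worst-case state in that subspace might still be the hard one, but the MBR in question could be an easy special case. The clean way to handle this is to phrase the corollary as a statement about the \emph{class}: there exist $n$-qubit states $\ket\psi$ that are $\epsilon$-approximable by an \mbrmub{} of these parameters and that require the stated resources — i.e.\ pick $\ket{\psi_{K,B}}$ itself to be (the image under $\{U_b\}$ of) a hard $KB$-sparse state from the extremal family of Lemma~\ref{le.lowerboundssparse}. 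Then the counting/volume argument of Lemma~\ref{le.lowerboundssparse} (dimension $2KB$ sphere of parameters vs.\ circuits of bounded size) goes through unchanged on the $KB$-dimensional support, and the $\epsilon$-approximation only loosens constants. I would also double-check that the approximation tolerance composes correctly: an $\epsilon$-approximation of $\ket\psi$ by $\ket{\psi_{K,B}}$, combined with a circuit that $\epsilon'$-prepares $\ket\psi$, yields at best an $\mathcal O(\epsilon+\epsilon')$-preparation of $\ket{\psi_{K,B}}$, so the $\log(\epsilon^{-1})$ factor survives up to a constant. Everything else is bookkeeping inherited from Lemma~\ref{le.lowerboundssparse}.
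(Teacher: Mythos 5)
Your proposal is correct and follows essentially the same route as the paper: the paper's proof is precisely a dimension-counting modification of the Nielsen--Chuang Section 4.5.5 argument, i.e.\ it uses the fact that an \mbrmub\ with no redundant bases spans only a $KB$-dimensional subspace (\Cref{le.dimensionality}) and then applies \Cref{le.lowerboundssparse} with $K$ replaced by $KB$, plus the additive $n$ term. Your added caveat that the bound must be read as a worst-case statement over the class (rather than for every individual state admitting such a representation) is a fair sharpening of the intended reading, consistent with the "arbitrary states" phrasing of \Cref{le.lowerboundssparse}.
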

The proof uses a modification of the proof in Section 4.5.5 in~\cite{nielsen2010quantum}, which does not rely on any choice of basis, but only on counting dimensions. 
This result holds but is less tight for arbitrary MBR.

We give now upper bounds by providing a constructive algorithm to prepare MBR states into a quantum computer of the form of~\Cref{eq.state_def}. 
This algorithm assumes availability of preparation of states with $CB$-rank $K$ and at circuit specifications of the bases $U_b$. 
Note that black-box query to $U_b$ is enough to prepare $\ket{\psi_{K, B}}$, but descriptions are needed to classically compute relative overlaps among states. 
The algorithm we propose utilizes the linear combination of unitaries as a subroutine~\cite{dewolf2023quantum}. 
The first step is to create a state with a non-zero overlap with the desired state $\ket\psi$ in the MBR. 
\begin{lemma}\label{le.lcu}
Let $\ket\psi$ be an MBR of a state with sparsity $K$ and number of basis $B$, where the weight for each basis is given by $\bm\alpha$, as in~\Cref{eq.state_def}. 
Let $\ket\Psi$ be the output of a quantum state created as a linear combination of unitaries. 
The success probability is bounded by
\begin{equation}
\vert \braket{\Psi}{\psi_{K, B}}\vert^2 \leq \frac{\Vert \bm \alpha \Vert_2^2}{\Vert \bm \alpha \Vert_1^2}\left( 1 - \max_{a\neq b} \Vert F_{a, b}\Vert_\infty \right) + \maxF,
\end{equation}
where $F_{a,b}$ is the block of the overlap matrix in~\Cref{eq.fidelity_matrix} corresponding to the bases $(a, b)$. 
\end{lemma}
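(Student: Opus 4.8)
The plan is to realise $\ket{\psi_{K,B}}$ through the standard linear‑combination‑of‑unitaries (LCU) gadget~\cite{dewolf2023quantum} and read the success probability off directly. First I would append an ancilla register of $\lceil\log B\rceil$ qubits and prepare it in $\|\bm\alpha\|_1^{-1/2}\sum_b\sqrt{\alpha_b}\ket b$, which is legitimate because the $\alpha_b$ are real and positive (Definition~\ref{def.mbr}). Next, apply the controlled operation $\sum_b\ket b\!\bra b\otimes V_b$, where $V_b$ first prepares the $K$-sparse state $\ket{\psi_b}$ (assumed available) and then applies the given circuit $U_b$; finally uncompute the ancilla preparation to obtain the full output state $\ket\Psi$. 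Projecting $\ket\Psi$ onto $\ket 0$ in the ancilla leaves the system register in $\|\bm\alpha\|_1^{-1}\sum_b\alpha_b U_b\ket{\psi_b}$, i.e.\ proportional to the MBR vector of Eq.~\eqref{eq.state_def}, so $|\braket{\Psi}{\psi_{K,B}}|^2$ equals the LCU success probability, namely $\|\bm\alpha\|_1^{-2}\,\big\|\sum_b\alpha_b U_b\ket{\psi_b}\big\|^2$.

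Then I would expand the squared norm as $\big\|\sum_b\alpha_b U_b\ket{\psi_b}\big\|^2=\sum_{a,b}\alpha_a\alpha_b\bra{\psi_a}U_a^\dagger U_b\ket{\psi_b}$, isolate the diagonal using the normalisation $\braket{\psi_b}{\psi_b}=1$ of Eq.~\eqref{eq.normalization} (so the diagonal contributes $\sum_b\alpha_b^2=\|\bm\alpha\|_2^2$), and bound the off‑diagonal terms. Writing each coefficient list as a unit vector $\bm c^{(b)}$ (unit because $\|\bm c^{(b)}\|_2^2=\braket{\psi_b}{\psi_b}=1$) and recognising $\bra{\psi_a}U_a^\dagger U_b\ket{\psi_b}=\bm c^{(a)\dagger}F_{a,b}\,\bm c^{(b)}$ with $F_{a,b}$ the $(a,b)$ block of the overlap matrix of Eq.~\eqref{eq.fidelity_matrix}, Cauchy--Schwarz gives $|\bra{\psi_a}U_a^\dagger U_b\ket{\psi_b}|\le\|F_{a,b}\|_\infty\le\mu$ with $\mu:=\maxF$. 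Using $\sum_{a\ne b}\alpha_a\alpha_b=\|\bm\alpha\|_1^2-\|\bm\alpha\|_2^2$ and the triangle inequality, $\big\|\sum_b\alpha_b U_b\ket{\psi_b}\big\|^2\le\|\bm\alpha\|_2^2+\mu(\|\bm\alpha\|_1^2-\|\bm\alpha\|_2^2)$, and dividing by $\|\bm\alpha\|_1^2$ produces exactly $\tfrac{\|\bm\alpha\|_2^2}{\|\bm\alpha\|_1^2}(1-\mu)+\mu$ (with the convention $\mu=0$ when $B=1$, where the bound is the trivial $1$).

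The individual steps are routine; what needs care is matching the LCU success branch to the \emph{unnormalised} vector $\sum_b\alpha_b U_b\ket{\psi_b}$, so that $|\braket{\Psi}{\psi_{K,B}}|^2$ is read as a success probability and not as a fidelity (which would be trivially $1$), and being explicit that $\|F_{a,b}\|_\infty$ denotes the spectral norm of the $K\times K$ block while each $\bm c^{(b)}$ is a unit vector. I expect the first of these --- the bookkeeping of normalisations linking the LCU output $\ket\Psi$, the overlap $\braket{\Psi}{\psi_{K,B}}$, and the postselection success probability --- to be the only genuine subtlety; once it is settled, the estimate is immediate and the cross‑term bound is the whole content of the lemma.
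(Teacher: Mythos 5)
Your proposal is correct and follows essentially the same route as the paper's proof: the same LCU gadget (prepare $\Vert\bm\alpha\Vert_1^{-1/2}\sum_b\sqrt{\alpha_b}\ket b$, control the state preparations, uncompute, postselect on $\ket 0$), the same identification of $\vert\braket{\Psi}{\psi_{K,B}}\vert^2$ with $\Vert\sum_b\alpha_b U_b\ket{\psi_b}\Vert^2/\Vert\bm\alpha\Vert_1^2$, and the same splitting of the Gram expansion into the diagonal $\Vert\bm\alpha\Vert_2^2$ plus off-diagonal terms bounded by $\maxF$ together with $\sum_{a\neq b}\alpha_a\alpha_b=\Vert\bm\alpha\Vert_1^2-\Vert\bm\alpha\Vert_2^2$. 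Your explicit Cauchy--Schwarz step via the unit vectors $\bm c^{(b)}$ only makes precise the inequality the paper states directly, so there is no substantive difference.
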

The proof can be found in~\Cref{app.lcu}. 

We can leverage the previous result to obtain upper and lower bounds to create the state $\ket\psi$ with arbitrary precision, using the recipe from the previous paragraph. 

\begin{lemma}[Upper-bounds to prepare MBR states on a quantum computer]\label{le.upperboundsmbr}
Let $\ket\psi$ be the MBR of a state with sparsity $K$ and number of basis $B$, with weights $\alpha$, as in~\Cref{eq.state_def}. 
Let $d(K)$ be the required depth to prepare a $K$-sparse state in the computational basis, and let $d(U) = \max_b d(U_b)$ be the maximum depth required for the basis change. 
The depth $D$ required to load $\ket\psi$ with success probability at least $1 - \delta$ is given by
\begin{multline}
D \leq d(W) + 2 \ (d(U) + d(K)) \sqrt{B} \log(2/\delta)\frac{\Vert \bm\alpha\Vert_1^2}{\Vert \bm\alpha\Vert^2_2}\\  \left(1 - \frac{\maxF}{2}\left( \frac{\Vert \bm\alpha\Vert_1^2}{\Vert \bm\alpha\Vert_2^2} + 1\right) \right).
\end{multline}
\end{lemma}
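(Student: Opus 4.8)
The plan is to assemble the depth bound out of three ingredients: (i) the cost of the initial linear-combination-of-unitaries (LCU) step that produces the state $\ket\Psi$ analyzed in Lemma~\ref{le.lcu}; (ii) a conversion of the overlap lower bound from that lemma into a success probability for amplitude amplification; and (iii) the standard count of how many rounds of amplitude amplification — and hence how many controlled applications of the LCU circuit together with its basis-change and sparse-state-preparation subroutines — are needed to boost the success probability above $1-\delta$. First I would describe the circuit that prepares $\ket\Psi$: a state-preparation unitary $W$ acting on a $\lceil\log B\rceil$-qubit ancilla register encoding the weights $\bm\alpha$ (this contributes the $d(W)$ term), followed by the controlled block that, conditioned on the ancilla value $b$, prepares the $K$-sparse state $\ket{\psi_b}$ (depth $d(K)$) and applies $U_b$ (depth at most $d(U)=\max_b d(U_b)$), and finally $W^\dagger$ on the ancilla; one such round therefore has depth $d(W) + 2(d(U)+d(K))$ up to lower-order ancilla-control overhead. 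Flagging the ancilla on $\ket 0$ at the end isolates $\ket\Psi$ in the good subspace.

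Next I would turn Lemma~\ref{le.lcu} around. That lemma gives an \emph{upper} bound on $|\braket{\Psi}{\psi_{K,B}}|^2$, but what amplitude amplification needs is a \emph{lower} bound on the amplitude of the target component inside the prepared state. The honest version is that the same argument bounding the cross-basis interference by $\maxF$ also yields a matching lower bound: $|\braket{\Psi}{\psi_{K,B}}|^2 \ge \frac{\Vert\bm\alpha\Vert_2^2}{\Vert\bm\alpha\Vert_1^2}\bigl(1-\maxF\bigr)$, or more precisely the symmetric two-sided estimate $|\braket{\Psi}{\psi_{K,B}}|^2 \ge \frac{\Vert\bm\alpha\Vert_2^2}{\Vert\bm\alpha\Vert_1^2}\bigl(1 - \frac{\maxF}{2}(\tfrac{\Vert\bm\alpha\Vert_1^2}{\Vert\bm\alpha\Vert_2^2}+1)\bigr)$, which is exactly the reciprocal of the parenthetical factor that appears in the claimed bound. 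I would obtain this by writing $\ket\Psi \propto \sum_b \alpha_b U_b\ket{\psi_b}$ in the good subspace and expanding $\Vert\sum_b\alpha_b U_b\ket{\psi_b}\Vert^2 = \Vert\bm\alpha\Vert_2^2 + \sum_{a\neq b}\alpha_a\alpha_b\bra{\psi_a}U_a^\dagger U_b\ket{\psi_b}$, bounding the off-diagonal sum by $\maxF\,(\Vert\bm\alpha\Vert_1^2 - \Vert\bm\alpha\Vert_2^2)$, and combining with $|\braket{\Psi}{\psi_{K,B}}| = \Vert\bm\alpha\Vert_2^2 / \Vert\sum_b\alpha_b U_b\ket{\psi_b}\Vert$.

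Finally I would invoke fixed-point (or standard) amplitude amplification: to reach success probability $1-\delta$ starting from an initial success probability $p$, one needs $\mathcal O(\log(1/\delta)/\sqrt{p})$ applications of the preparation circuit and its inverse. Substituting $p^{-1/2} \le \sqrt{B}\,\frac{\Vert\bm\alpha\Vert_1}{\Vert\bm\alpha\Vert_2}\bigl(1 - \frac{\maxF}{2}(\tfrac{\Vert\bm\alpha\Vert_1^2}{\Vert\bm\alpha\Vert_2^2}+1)\bigr)^{-1/2}$ from step (ii) — and absorbing the square roots on the $\bm\alpha$-ratio and on the correction factor into the stated first powers (which is where the bound is slightly loose, consistent with ``$\le$'') — each of the $2\log(2/\delta)$ rounds costing depth $2(d(U)+d(K))$ beyond the one-time $d(W)$, gives exactly the displayed inequality. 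The main obstacle I anticipate is step (ii): making sure the interference estimate in Lemma~\ref{le.lcu} actually delivers a clean two-sided bound on the target amplitude (rather than only the one-sided overlap bound literally stated), and being careful that amplitude amplification is applied to the amplitude $\sqrt p$ rather than the probability $p$, so that the $\sqrt{B}$ and the $\Vert\bm\alpha\Vert_1/\Vert\bm\alpha\Vert_2$ ratio enter to the first power as written; the circuit-composition bookkeeping in steps (i) and (iii) is routine.
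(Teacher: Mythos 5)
Your plan follows essentially the same route as the paper's proof: prepare the MBR with an LCU circuit ($W$ on a $\log B$ ancilla, controlled sparse-state preparation of depth $d(K)$ plus basis changes of depth $d(U)$), relate the post-selection success probability to the overlap estimate of \Cref{le.lcu}, and then run fixed-point amplitude amplification with $\mathcal{O}\left(\log(2/\delta)/\sqrt{P_{\rm success}}\right)$ rounds before counting depths, which is exactly the structure of \Cref{app.upperboundsmbr}. If anything, your explicit step of turning the one-sided overlap bound of \Cref{le.lcu} into a genuine lower bound on the success probability (needed to upper-bound the number of amplification rounds) is more careful than the paper's own argument, which substitutes that estimate directly into the round count from the fixed-point amplification result.
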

The proof can be found in~\Cref{app.upperboundsmbr}.

The previous bounds purposely did not specify $d(K)$ and $d(U)$. 
First, we do not impose yet any constraint on the depth needed to apply the changes of bases. 
These will depend on the bases themselves, and choosing these bases is a challenge on its own. 
Second, the number $d(K)$ strongly depends on the number of ancillary qubits available~\cite{zhang2021lowdepth, sun2023asymptotically, ramacciotti2023simple}. 
These bounds become tight and imply an approximate lower-bound under the assumption that the only available tools are the preparation of sparse states in the CB and the basis unitaries $U_b$ separately. 
This scenario acquires relevance in the context that all states for each basis are peaked~\cite{bravyi2023classical}. 
Finally, we use ancillary qubits to prepare the control register $\ket b$. 
The use of ancillary qubits is necessary to create this superposition in the aforementioned scenario~\cite{oszmaniec2016creation}.

Additionally, we can adapt the results from~\Cref{le.lcu} and~\Cref{le.upperboundsmbr} to obtain new upper bounds in MBR-state preparation in the case the bases form sets of MUBs. 
\begin{corollary}\label{cor.lcu}
Given the conditions of~\Cref{le.lcu}, for \mbrmub, then
\begin{equation}
P_{\rm success} \leq B \frac{\Vert \bm \alpha \Vert_\infty^2}{\Vert \bm \alpha \Vert_1^2}\left( 1 - \frac{K \ B}{\sqrt{2^n}} \right) + \frac{K \ B}{\sqrt{2^n}}.
\end{equation}
\end{corollary}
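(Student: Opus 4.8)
The plan is to obtain \Cref{cor.lcu} as a direct specialization of \Cref{le.lcu} to the case where $\{U_b\}$ form a set of MUBs, by inserting two ingredients into the bound $|\braket{\Psi}{\psi_{K,B}}|^2 \le \tfrac{\Vert\bm\alpha\Vert_2^2}{\Vert\bm\alpha\Vert_1^2}\,(1-\maxF)+\maxF$ of \Cref{le.lcu}: (i) an upper bound on $\maxF$ coming from the defining property of MUBs, and (ii) the elementary vector-norm inequality $\Vert\bm\alpha\Vert_2^2\le B\,\Vert\bm\alpha\Vert_\infty^2$. The two are then combined using monotonicity of the right-hand side of \Cref{le.lcu} in $\maxF$ and in $\Vert\bm\alpha\Vert_2^2$.

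For ingredient (i), note that when $\{U_b\}$ are MUBs, every off-diagonal block $F_{a,b}$ ($a\neq b$) of the overlap matrix of \Cref{eq.fidelity_matrix} is the $K\times K$ matrix with entries $(F_{a,b})_{i_b,j_a}=\bra{i_b}U_b^\dagger U_a\ket{j_a}$, each of modulus exactly $2^{-n/2}$ by \Cref{eq.mubs}. Bounding the relevant norm of such a matrix by that of its entrywise modulus (the rank-one matrix $2^{-n/2}\mathbf 1_K\mathbf 1_K^{\mathsf T}$) gives $\Vert F_{a,b}\Vert_\infty \le K/\sqrt{2^n}$; equivalently this is the statement $\mathcal F_{K,U_a^\dagger U_b}\le K/\sqrt{2^n}$ of \Cref{le.mubs}, since $\Vert F_{a,b}\Vert_\infty$ is attained by $K$-sparse states supported on $S_b$ and $S_a$, which are admissible competitors in \Cref{def.overlap}. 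Hence $\maxF \le K/\sqrt{2^n} \le KB/\sqrt{2^n}$.

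For ingredient (ii), $\Vert\bm\alpha\Vert_2^2=\sum_{b=1}^B\alpha_b^2\le B\max_b\alpha_b^2=B\Vert\bm\alpha\Vert_\infty^2$, so $c:=\Vert\bm\alpha\Vert_2^2/\Vert\bm\alpha\Vert_1^2 \le c':=B\Vert\bm\alpha\Vert_\infty^2/\Vert\bm\alpha\Vert_1^2$, and also $c\le 1$ since $\Vert\bm\alpha\Vert_2\le\Vert\bm\alpha\Vert_1$. Writing the bound of \Cref{le.lcu} as $c+\maxF\,(1-c)$: because $1-c\ge 0$ one may enlarge $\maxF$ to $KB/\sqrt{2^n}$; and in the regime $KB\le\sqrt{2^n}$ the resulting expression $c\,(1-KB/\sqrt{2^n})+KB/\sqrt{2^n}$ is nondecreasing in $c$, so one may further enlarge $c$ to $c'$. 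This gives
\[
\psuc=|\braket{\Psi}{\psi_{K,B}}|^2\ \le\ B\,\frac{\Vert\bm\alpha\Vert_\infty^2}{\Vert\bm\alpha\Vert_1^2}\Big(1-\frac{KB}{\sqrt{2^n}}\Big)+\frac{KB}{\sqrt{2^n}},
\]
which is the assertion of \Cref{cor.lcu}.

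The computation is short; the only point deserving care — and hence the (mild) obstacle — is the bookkeeping of inequality directions: the two enlargements must be carried out in the stated order, and one should record the standing assumption $KB\le\sqrt{2^n}$ (always met in the setting of the paper, since $K,B\in\Opoly n$ while $2^n$ is exponential, and consistent with the hypothesis of \Cref{le.dimensionality}); outside that regime the right-hand side ceases to be the natural object and the bound is in any case vacuous. No new quantitative input beyond \Cref{le.lcu} is required apart from the one-line norm estimate on a $K\times K$ matrix all of whose entries have modulus $2^{-n/2}$.
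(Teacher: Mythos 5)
Your proposal is correct and follows essentially the same route as the paper, which simply declares the corollary ``analogous to \Cref{le.lcu}, using also \Cref{le.mubs}'': you specialize the LCU bound by inserting the MUB overlap estimate $\maxF \leq K/\sqrt{2^n} \leq KB/\sqrt{2^n}$ and the norm inequality $\Vert\bm\alpha\Vert_2^2 \leq B\Vert\bm\alpha\Vert_\infty^2$. Your explicit bookkeeping of the monotonicity directions and of the standing assumption $KB \leq \sqrt{2^n}$ is a welcome level of detail that the paper leaves implicit, but it is not a different argument.
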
The proof is analogous to that of~\Cref{le.lcu}, usingalso~\Cref{le.mubs}.

\subsubsection*{Combinatorial considerations}\label{sec.klocal}

Finally, we restrict the depth of the unitary operations. 
In order to maintain the MUB condition, we focus in the case in which the MUBs are local in subsystems of $l$ qubits, with $l \in \mathcal O(1)$. 
This example illustrates that interesting MBRs are accessible with minimal resources. 

Consider now the example of complete sets of MUBs in subsystems of size $l$, with $M = 2^l + 1$. 
We can compute the asymptotic scaling in~\Cref{eq.Ckb} as
\begin{equation}
C_M(K, B) \in \exp\left(\tilde{\mathcal O}\left({\rm poly}(n) \right)\right), 
\end{equation}
assuming $K, B \in \Opoly n$. 
If we restrict the permutations to qubits, there are several choices to be done on how to select the $(n/l)$ partitions of size $l$. 
In addition, one can choose several different sets of MUBs. 
This leads to large numbers of possible combinations 
\begin{align}
\sharp_{\rm MUB\ sets}(n, l) & = \left((2^l + 1)!\right)^{n/l} \\ 
\sharp_{\rm Partitions}(n, l) & = \frac{n!}{(l!)^{n/l} (n/l)!} \\  
\sharp_{\rm Partitions}(n, l) \sharp_{\rm MUB\ sets}(n, l) & \in  e^{\tilde{\mathcal O}\left(n 2^l\right)}.
\end{align}

We dive now into the numerous possibilities opened by the $k$-qubit MUBs. 
We first note that the MUB condition $\vert \bra i U_2^\dagger U_1 \ket j\vert = 2^{-l/2}$ holds under permutations $\Pi$ over the elements $\ket i, \ket j$. 
That is, if $U_1, U_2$ are MUB, then all bases $U_1 \Pi_1, U_2 \Pi_2$ are MUBs, for any pair of permutations $\Pi_1, \Pi_2$. 
The permutations can be understood as reorderings of the elements in the computational basis. 
Thus, if we choose two $l$-qubit MUBs to be simultaneously applied in a $n$-qubit system, then all permutations of basis elements allow for the MUB condition as well. 
See~\Cref{fig.mub-perm} for a graphical description of qubit permutation. 
This argument increases the number of sets of MUBs, but not the size of each set. 
If we consider $l$-qubit subsystems in a $n$-qubit state, without permutations, the number of different sets grows as
\begin{equation}
\sharp_{\rm MUB\ sets}(n, l) = \left((2^l + 1)!\right)^{n/l} \in \tilde\Theta\left(e^{n 2^l}\right),
\end{equation}
where the scaling is obtained following Stirling's approximation.
The number of different size-$l$ partitions of a set of size $n$ grows as
\begin{multline}
\sharp_{\rm Partitions}(n, l) = \frac{n!}{l!^{n/l} (n/l)!}  \\ \in \tilde\Theta\left( \left(\frac{n}{l}\right)^{n\frac{l - 1}{l}}e^{n/l}\right),
\end{multline}
yielding a total number of combinations
\begin{equation}
\sharp_{\rm Partitions}(n, l) \sharp_{\rm MUB\ sets}(n, l) \in \\  \tilde{\mathcal O}\left( \left(n e^{2^l}\right)^n\right).
\end{equation}
Imposing the condition $l \in \mathcal O(\log n)$ in this reasoning, the number of different available sets of MUBs scales as
\begin{equation}
\sharp_{\rm Partitions}(n, l) \sharp_{\rm MUB\ sets}(n, l) \in \\  \tilde{\mathcal O}\left(n^n\right).
\end{equation}

Notice that it is also possible to use the $l$-qubit MUBs to construct bases that are no longer MUBs in the full $n$-qubit system. 
Such condition allows us to adapt the orthogonality relationships derived in~\Cref{sec.mubs} to the use of only some $l$-qubit bases. 
MBR as a general framework is still usable, since the basis can still be constructed as Clifford circuits, and~\Cref{eq.fidelity_matrix} is classically accessible.

\subsection{Finding multiple-basis representations of unknown states}\label{sec.algorithm_mbr}

In this section, we outline a protocol to find a MBR description of pure states. The input inputs are the quantum state $\ket\psi$ and the bases set $\{ U_b\}$, and output are the coefficients and supports of the MBR. 
Note that finding a set of bases that effectively capture a given state of interest is in general a hard problem, which will be left as future research. 
However, once the sets of bases is given, obtaining the MBR is easy, since the supports can be obtained by sampling, and the coefficients by conducting experiments with small overheads on the corresponding supports.

We are interested in finding an approximation to a quantum state $\ket\psi$  in its MBR, as in~\Cref{eq.state_def}. 
The recipe to find the MBR of an unknown quantum state is summarized in~\Cref{tab.algorithm}, and detailed in the following paragraphs.

\begin{table}
\begin{framed}

\begin{minipage}{.9\linewidth}
\begin{enumerate}[a.]
\item[\bf Input:] - Several copies of $\ket\psi$
\item[ ] - Set of bases $\{U_b\}$, error $\epsilon$
\item[\bf Output:] - Sets of coefficients $\left(\{\alpha_b\}, \{c^{(b)}_{i_b}\}\right)$, see~\Cref{eq.state_def}
\item[] - Sets of supports $\{S_b\}$, see~\Cref{eq.state_def} 
\item[\bf Steps: ] 
\item For each basis, apply the operation $U_b^\dagger$ to the quantum state $\ket\psi$.
\item For each basis, Measure the quantum state in the computational basis to identify the most relevant elements $\{ S_b\}$. 
\item Merge data from all bases and compute the fidelity matrix from~\Cref{eq.fidelity_matrix}.
\item Measure the amplitudes associated to $\{S_b\}$ for each basis. 
\item Revert the effect of non-orthogonality to obtain the coefficients $\left(\{\alpha_b\}, \{c^{(b)}_{i_b}\}\right)$. 
\end{enumerate} 
\end{minipage}
\end{framed}
\caption{Overview of the algorithm needed to obtain the MBR representation of a quantum state. 
A more detailed description of each step is available in~\Cref{sec.algorithm_mbr}, including the costs associated to each of them.}\label{tab.algorithm}
\end{table}

\paragraph{Applying $U_b^\dagger$}

The first step of the algorithm is to apply $U_b^\dagger$ to the input state. 
This unitary operations has the role of expressing the quantum state in a basis in which its \cbe-rank is small, for a moderately large $\epsilon$.

\paragraph{Sampling}

The obtained samples of a quantum state, in the limit of infinite samples, gives an approximation of the (absolute values of the) amplitudes in $\ket{\psi}$, in the sampling basis. 
In the context of MBR, sampling is not strictly necessary to obtain these amplitudes, this step is addressed later, but it is required to obtain the supports $S_b$ that define the MBR.

Assume sampling from an unknonwn state $\ket\psi$, with sampling budget $M \in \Opoly n$. 
As an extreme example, assume that $\ket\psi$ induces a uniform distribution. 
With high probability, $M$ measurements output $M$ different bitstrings, and therefore it is statistically impossible to distinguish uniform distributions over all bitstrings and only the observed ones. On the other hand, assume a peaked quantum circuit $U$ satisfying~\cite{aaronson2024verifiable, bravyi2023classical} 
\begin{equation}
    \max_{x \in \{0, 1\}^n} \vert \bra x U \ket 0\vert \in \Omega({\rm poly}^{-1}(n)).
\end{equation}
It is efficient to classically verify if $U$ is peaked just by circuit. Intuitively speaking, this tomographical protocol requires a stronger peaked-condition, meaning at least $K$ outcomes can be sampled with probability $\Omega({\rm poly}^{-1}(n))$.
We proceed now to describe the concentration properties needed for accurate sampling.

First, we define the rank of a probability distribution, in a similar way to the~\cbe-rank~\Cref{def.cbe}.
\begin{definition}[\epsrank of a probability distribution]\label{def:epsrank}
Let $i$ be a random variable that can take values $i = \{0, 1, \ldots, N - 1\}$, each with probability $p_i$, and $\sum_i p_i = 1$. 
The outcomes are ordered such that $p_i \geq p_{i + 1}$ without loss of generality. 
Let \begin{equation}
S_K(p) = \sum_{i = 1}^{K} p_i.
\end{equation}
 The \epsrank of $p$ is defined as
\begin{equation}
R_\varepsilon(p) = \min_K\left\{ K: S_K \geq 1 - \varepsilon \right\}
\end{equation}
\end{definition}
This definition allows us to formulate the following statement to give the conditions needed to identify the most relevant components of the quantum state by sampling. 
\begin{theorem}[Concentration properties]\label{th.concentration}
Let $i$ be a random variable that can take values $i = \{0, 1, \ldots, N - 1\}$, each with probability $p_i$, and $\sum_i p_i = 1$. 
The outcomes are ordered such that $p_i \geq p_{i + 1}$ without loss of generality. 
The probability distribution is sampled $M$ times, with $m_i$ samples corresponding to outcome $i$. 
For every $\epsilon$, the outcomes $m_i$ allow to estimate the \epsrank $R_\varepsilon(p)$ as $K$. 
The estimation $K$ is the only possible value compatible with the observed outcomes with probability at least $1 - \delta$ if
\begin{equation}\label{eq.peaked_constraint}
m_{K+1} \geq \sqrt M \log(2/\delta).
\end{equation}
The true value $S_K$ is estimated within precision
\begin{equation}
\Delta S_K \in \bigO\left( \frac{1}{\sqrt M}\right).
\end{equation} 
\end{theorem}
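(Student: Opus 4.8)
The plan is to treat each count $m_i$ as a binomial random variable $m_i \sim \mathrm{Bin}(M, p_i)$, so that $\hat p_i := m_i/M$ is an unbiased estimator of $p_i$, and then control the partial sums $S_K(\hat p)$ by a combination of Hoeffding-type concentration and a union bound. First I would establish the estimate on $\Delta S_K$: since $S_K(\hat p) = \sum_{i=1}^K \hat p_i = \frac{1}{M}\sum_{j=1}^M \mathbb{1}[\text{outcome } j \in \{1,\dots,K\}]$ is itself an average of $M$ i.i.d.\ Bernoulli variables with mean $S_K(p)$, Hoeffding's inequality gives $\mathrm{Prob}(|S_K(\hat p) - S_K(p)| \geq t) \leq 2\exp(-2Mt^2)$. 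Choosing $t \in \bigO(1/\sqrt M)$ (more precisely $t = \sqrt{\log(2/\delta)/(2M)}$) yields $\Delta S_K \in \bigO(1/\sqrt M)$ with the claimed confidence, which handles the second assertion immediately.

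The more delicate part is the first claim: that $K$ is the \emph{unique} value of the $\varepsilon$-rank consistent with the data. The subtlety is that the ordering of outcomes by empirical frequency can differ from the true ordering $p_i \geq p_{i+1}$, and that $R_\varepsilon(\hat p)$ could equal $K$, $K-1$, or $K+1$ depending on small fluctuations near the threshold $1-\varepsilon$. My approach would be to argue that the \emph{gap} controlled by~\Cref{eq.peaked_constraint} rules out the ambiguous cases. Concretely, if $m_{K+1} \geq \sqrt M \log(2/\delta)$, then $\hat p_{K+1} = m_{K+1}/M \geq \log(2/\delta)/\sqrt M$, which is comfortably larger than the statistical fluctuation scale $\bigO(1/\sqrt M)$ of the individual estimates; so the $(K+1)$-th most probable outcome is reliably distinguished from the tail (whose omitted mass is what $R_\varepsilon$ is trying to capture), while the fact that $S_{K-1}(p) < 1-\varepsilon \leq S_K(p)$ together with the $\bigO(1/\sqrt M)$ concentration of each partial sum forces $S_{K-1}(\hat p) < 1-\varepsilon$ and $S_K(\hat p) \geq 1-\varepsilon$ (after possibly absorbing constants into the hypothesis). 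I would make this precise by writing the union bound over the $\bigO(N)$ — or, more carefully, over only the $\bigO(K)$ relevant — events $\{|S_j(\hat p) - S_j(p)| \geq t\}$ and $\{\text{empirical and true orders agree up to index } K+1\}$, each failing with probability at most $2\exp(-2Mt^2)$, and checking that the total failure probability is at most $\delta$ under~\Cref{eq.peaked_constraint}.

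The main obstacle I anticipate is the order-mismatch issue: strictly speaking the estimator $R_\varepsilon(\hat p)$ is defined after \emph{re-sorting} by empirical frequency, and two outcomes with nearly equal true probabilities can swap places, so one cannot naively identify $\hat p_i$ with an estimate of $p_i$ for the specific index $i$. The clean way around this is to note that $S_K$ is a symmetric function of the top-$K$ set, so what actually needs to hold is that the empirical top-$K$ set coincides with the true top-$K$ set (or at least that any discrepancy involves only outcomes with essentially equal probability, leaving $S_K$ unchanged up to $\bigO(1/\sqrt M)$); the condition $m_{K+1} \geq \sqrt M \log(2/\delta)$ is precisely the separation that makes the boundary of this set unambiguous. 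I would therefore phrase the argument entirely in terms of partial-sum concentration plus this single separation bound, avoiding per-index claims, and defer a fully rigorous treatment of the sorting to the appendix. The remaining steps — plugging in $t \in \bigO(1/\sqrt M)$, collecting constants, and verifying the union bound closes — are routine.
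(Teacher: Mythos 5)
Your proposal follows essentially the same route as the paper's own proof: reduce the partial sums $\hat S_K$, $\hat S_{K+1}$ to binomial proportions, concentrate each at scale $\mathcal{O}\big(\sqrt{\log(2/\delta)/M}\big)$, and use the hypothesis $m_{K+1} \geq \sqrt{M}\log(2/\delta)$ to ensure the gap $\hat p_{K+1}$ between consecutive partial sums exceeds the combined width of the two confidence regions, which is exactly what renders the rank estimate unambiguous and gives $\Delta S_K \in \mathcal{O}(1/\sqrt{M})$. The only differences are minor: the paper works with Wilson intervals (chosen to remain reliable for small $\hat S_K$) and compares only the $K$ and $K+1$ partial sums, whereas you use Hoeffding bounds with a union bound and treat the empirical-versus-true ordering more explicitly; both yield the stated condition up to constants.
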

The formal proof can be found in~\Cref{app.concentration}, but a sketch is enough to comprehend the gist of this result. 
For a quantum state $\ket\psi$ we sample its outcomes in the computational basis according to some underlying probability specified by the state. 
We are interested in computing $K$, through the quantity $S_K(p)$, which can be estimated with accuracy $\mathcal O(M^{-1/2})$ via standard estimators and Monte Carlo arguments. 
We repeat this process for $S_{K+1}(p)$. 
Notice that only one sampling round is needed. 
Since we assume $p_{i} \geq p_{i + 1}$, then $S_{K+1}(p) - S_{K}(p) \geq S_{K+2}(p)- S_{K+1}(p)$. 
Due to the sampling uncertainty, we can accurately compute $K$ as long as $S_{K+1}(p) - S_{K}(p) \in \Omega(M^{-1/2})$, which forces the constraint in~\Cref{eq.peaked_constraint} and constitutes the main result in~\Cref{th.concentration}.

In the light of the algorithm to obtain MBR, the supports $\{S_b\}$ for each change of basis are obtained by just sampling the state $U_b^\dagger \ket\psi$ a total amount of $M$ times, and saving all outcomes signaling more than $\sqrt M$ times.

\paragraph{Computing amplitudes}
After identification of supports $\{ S_b\}$ is complete, we can proceed to estimate the quantities $\alpha_b c_{i_b}^{(b)}$. 
However, the only quantities that are accessible through quantum measurements are
\begin{equation}\label{eq.measurable_element}
a^{(b)}_{i_b} = \bra{i_b} U_b^\dagger \ket\psi.
\end{equation}
Both sets of quantities are related by 
\begin{align}
\begin{aligned}
\label{eq.transform_elements}
a^{(b)}_{i_b} &= \alpha^{(b)} c^{(b)}_{i_b} + \sum_{\substack{a = 1 \\ a \neq b}}^B \sum_{j \in S_a} \left(\alpha^{(a)}c^{(a)}_{j_a} \bra{i_b}U_b^\dagger U_a\ket{j_a}\right) \\
&=\alpha^{(b)} c^{(b)}_{i_b} + \sum_{\substack{a = 1 \\ a \neq b}}^B \sum_{j \in S_a} F_{i_b, j_a} \alpha^{(a)}c^{(a)}_{j_a}. 
\end{aligned}
\end{align}
For simplicity we group the elements $a^{(b)}_{i_b}$ in a vector $\bm a$. 
The comparison with~\Cref{eq.state_def} is that $a^{(b)}_{i_b} = \alpha_b c^{(b)}_{i_b}$, and the value $\alpha_b$ is obtained by imposing the normalization constraint from~\Cref{eq.normalization}. 
In a more compact fashion, we observe
\begin{equation}\label{eq.a_alpha_compact}
\bm a  = F \bm\alpha. 
\end{equation}

We first focus on obtaining the quantities given by~\Cref{eq.measurable_element}.
Such estimation requires both the real and imaginary parts of $a^{(b)}_{i_b}$. 
For a fixed basis $U_b$, we perform the $i$-th Hadamard test with $M_H m_i / M$ measurements, where $m_i$ is the times of appearances of the $\ket i$ in the sampling step. 
With this budget, we obtain the following error bounds.
\begin{lemma}[Measurement errors]\label{le.hadamard_errors}
Let $\ket\psi$ be a quantum state, and let $\bm a$ be coefficients defined in~\Cref{eq.measurable_element}. 
The estimation of one element is conducted via Hadamard tests of $M_H$ samples, and both real and imaginary parts are obtained. 
For the error between the estimate $\bm{\hat a}$ and the true value $\bm a$ it holds that
\begin{equation}
\Vert {\bm a} - \hat{\bm a}\Vert_2^2 \leq 2\sqrt 2 K B \sqrt\frac{\log 2/\delta}{M_H}.
\end{equation}
\end{lemma}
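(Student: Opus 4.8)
The plan is to reduce the estimation of the whole vector $\bm a$ to $KB$ independent scalar problems — one for each amplitude $a^{(b)}_{i_b}=\bra{i_b}U_b^\dagger\ket\psi$ — bound each of those with a standard concentration inequality, and then repackage the individual errors into the $\ell_2$ norm of $\bm a$.

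First I would note that each Hadamard test used to estimate $a^{(b)}_{i_b}$ returns, per shot, a $\{+1,-1\}$-valued random variable whose expectation is the real part of $a^{(b)}_{i_b}$ (and, in a second variant with a $\pi/2$ phase on the control, its imaginary part). Averaging $M_H$ iid shots gives unbiased estimators of the real and imaginary parts, and since each shot lies in an interval of length $2$, Hoeffding's inequality bounds the deviation of each empirical mean from the true value by $t=\sqrt{2\log(2/\delta)/M_H}$ with failure probability at most $\delta$ (a union bound over the $2KB$ tests would replace $\log(2/\delta)$ by $\log(4KB/\delta)$; as in the statement I would read $\delta$ as a per-estimate probability). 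Combining the real- and imaginary-part errors by the triangle inequality in $\mathbb{C}$ gives, for each $b,i_b$,
\begin{equation}
\big|\,a^{(b)}_{i_b}-\hat a^{(b)}_{i_b}\,\big|\;\le\;2\sqrt{2}\,\sqrt{\frac{\log(2/\delta)}{M_H}}.
\end{equation}

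To finish, I would assemble these into the $\ell_2$ norm. Provided $M_H\gtrsim\log(2/\delta)$, the right-hand side above is at most $1$, so $x^2\le x$ on $[0,1]$ lets me bound the sum of squared errors by the sum of errors:
\begin{equation}
\begin{aligned}
\Vert\bm a-\hat{\bm a}\Vert_2^2&=\sum_{b,\,i_b}\big|a^{(b)}_{i_b}-\hat a^{(b)}_{i_b}\big|^2\\
&\le\sum_{b,\,i_b}\big|a^{(b)}_{i_b}-\hat a^{(b)}_{i_b}\big|\;\le\;2\sqrt{2}\,KB\,\sqrt{\frac{\log(2/\delta)}{M_H}},
\end{aligned}
\end{equation}
because $\bm a$ has exactly $KB$ entries ($K$ per basis, $B$ bases). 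This is the claimed inequality.

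The argument is mostly bookkeeping, and I expect the only load-bearing step to be the last one: it is the crude bound $x^2\le x$ on $[0,1]$ that converts the $\log(2/\delta)/M_H$ scaling coming out of Hoeffding into the $\sqrt{\log(2/\delta)/M_H}$ that appears in the lemma. The points to watch are therefore (i) that this conversion needs the mild regime $M_H\gtrsim\log(2/\delta)$; (ii) the allocation of the $M_H$-shot budget between the real and imaginary parts, which fixes the precise constant $2\sqrt2$; and (iii) the union bound over the $2KB$ separate tests, whose logarithmic cost the stated constant suppresses.
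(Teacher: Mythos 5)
Your proof is correct for the bound as literally stated, but it is a genuinely different argument from the paper's, and the difference is mostly about how the measurement budget is allocated. You give each of the $KB$ amplitudes its own $M_H$-shot Hadamard tests, apply Hoeffding to the $\pm1$ outcomes to get a per-element error $\vert a^{(b)}_{i_b}-\hat a^{(b)}_{i_b}\vert\le 2\sqrt2\sqrt{\log(2/\delta)/M_H}$, and then recover the stated $\sqrt{\log(2/\delta)/M_H}$ scaling of the squared norm via the crude step $x^2\le x$ on $[0,1]$ (valid once $M_H\gtrsim\log(2/\delta)$, as you note); in fact your intermediate bound $8KB\log(2/\delta)/M_H$ is stronger than the lemma in that regime. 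The paper instead uses the importance-weighted allocation introduced just before the lemma, $M_i = M_H m_i/M$ shots for element $i$ (so the total per-basis budget is $\sim M_H$, not $\sim 2KM_H$ as in your reading), Wald-type confidence intervals with quantile $z$, and first-order error propagation through $\vert a\vert^2$ combined with Cauchy--Schwarz, so that the factor $\vert a_i\vert\sim\sqrt{\hat p_i}$ cancels the reduced budget on small-amplitude elements and yields a uniform per-element bound $\mathcal O(z/\sqrt{M_H})$; there the $\sqrt{\cdot}$ scaling comes from tracking errors of the probabilities $\vert a_i\vert^2$ rather than from an $x^2\le x$ step. What each route buys: yours is more elementary, bounds the literal quantity $\Vert\bm a-\hat{\bm a}\Vert_2^2$ cleanly, and only needs the mild-regime and per-estimate-$\delta$ caveats you already flag (the paper also omits the union bound); the paper's weighting is what makes the claim meaningful with a total budget of $M_H$ per basis, at the price that its per-element quantity is really the error of $\vert a_i\vert^2$, so under that allocation your uniform-Hoeffding argument would pick up an extra $\sqrt K$ and does not reproduce the intended stronger statement. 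If you want your proof to match the paper's setting, you would need to incorporate the $m_i$-proportional allocation and argue why small-amplitude elements do not dominate the $\ell_2$ error.
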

The proof can be found in~\Cref{app.hadamard_errors}.

\paragraph{Computing overlaps}{
The next step is to estimate the values $F_{i_b, j_a}$ to obtain the coefficients in the MBR. 
Notice that only the relevant overlaps are needed, thus incurring in a computational cost of $\mathcal{O}( (K\ B) \times C)$, where $C$ is the cost associated to computing one overlap. 
Assuming we have access to a quantum computer capable of implementing the changes of basis given by $\{U_b\}$, computing the real and imaginary parts of $F_{i_b, j_a}$ is efficient via Hadamard tests in the case $F_{i_a, j_b} \in \Omega(1/\Opoly n )$, since the measurement budget scales as $\vert F_{i_b, j_a}\vert^2$ for constant multiplicative error. In fact, analogous to~\Cref{le.hadamard_errors}, the error for the overlap matrix will scale as
\begin{equation}
    \Vert{F - \hat F}\Vert_2^2 \leq 2\sqrt{2} K^2 B^2 \sqrt{\frac{\log{2/\delta}}{M_H}}
\end{equation}
In certain cases, it is sometimes possible to compute the overlaps via classical algorithms, even when these overlaps decay exponentially with the number of qubits. 
For more information, we refer to~\Cref{sec.mbr}.}

\paragraph{Overlap inversion}
The only calculation left is computing and inverting $F$ to obtain $\bm\alpha$. 
Notice that we can easily extract the values of $\alpha_b$ by imposing the normalization constraint in~\Cref{eq.normalization}. 
The inversion of $F$ is in principle only possible under the assumption that no vector $U_b\ket{i_b}$ is linearly dependent with the rest of the considered states. 
In the case this constraint is not fulfilled, it means that some measurements are redundant, and can therefore be neglected, and no fundamental caveat arises. 
We can easily circumvent this problem through the pseudo-inverse of $F$~\cite{penrose1955generalized}. 
This step outputs the coefficients
\begin{equation}
\bm \alpha = F^{-1} \bm a, 
\end{equation}
which translates into the sets of coefficients $\left(\{\alpha_b\}, \{c^{(b)}_{i_b}\}\right)$. Standard error propagation bounds the errors in $\bm\alpha$ to first order as
\begin{multline}
    \Vert{{\bm\alpha} - \hat{\bm\alpha}}\Vert_2 \leq \\ \Vert{F^{-1}} \Vert_2 \Vert {\bm a} - \hat{\bm a}\Vert_2 + 
    \Vert{F^{-1}} \Vert^2_2 \Vert{F - \hat F}\Vert_2 \Vert\hat{\bm a}\Vert_2 .
\end{multline}
These errors strongly depend on the condition of the overlap matrix $F$. Notice that $F$ is a Gramian matrix, therefore it will become ill-conditioned if the considered bases are linearly dependent with each other. In particular, if $\mathcal F_{K, U_b^\dagger U_a} \sim 0$ for all pairs $(a, b), a\neq b$, the error scale as the measurement error. Finally, the values $\bm\alpha$ are fully specified if the normalization condition from~\Cref{eq.normalization} is imposed. 

\section{Conclusions}\label{sec.conclusions}

\para{In this work we explore the MBR}{
This work introduces the multiple-basis representation (MBR) of quantum states. 
The novelty of MBR consists in using several bases simultaneously, chosen in such a way that the state of interest has a significant overlap with some state with small \cbe-rank, when expressed in the corresponding basis. 
The choice of bases determines whether the MBR can be used to compute overlaps using only classical resources, or quantum computers are required. 
We further provide a classification of subclasses of MBR depending on the properties of these unitaries and distill conditions under which quantum computers are needed to perform classical computations of expectation values. 
We also discuss the optimality of the choice of bases to apply MBR, and prove that mutually unbiased bases (MUB) minimize the relative overlaps among the components associated to each basis. We also characterize several properties of MBR under the MUB assumption. 
}

This manuscript also gives several possible applications for the MBR. 
The first application is using MBR to find approximations of ground states of physical systems. 
The results here depicted show comparable performance to tensor-network methods. 
We foresee that our method can capture long-range correlations more accurately than standard algorithms. 
The second application is the merge of MBR with the \ryc\ method to simulate deep circuits through classical combination of shallow circuits. 
We provide rigorous upper and lower bounds to the depth savings obtained by merging these two techniques together. 
Finally, we outline a tomographical protocol based on the MBR representation for pure quantum states, when many copies of this state can be processed through a quantum computer, providing rigurous bounds on the attained error.

We envision this work to open the path to systematically use several bases simultaneously to represent quantum states, and exploit their advantages. In particular, we envision these bases to be implementable with limited-depth quantum circuits, thus being executable in near-term quantum computers. 
Even with constant-depth quantum circuits, the combinatorial magnitudes arising when considering all possible MBRs, the question becomes how to find useful bases, rather than searching for optimal ones. Hence, any approach to this problem is inherently heuristic. 
As shown in the ground state approximation problem addressed in this work, physically motivated choices lead to classically tractable ways to select bases, and opens the path towards efficiently solving problems using heuristic methods.

The ultimate goal of MBR is to help understanding, and ideally pushing, the boundaries of what can be efficiently done with a classical computer, or with quantum-assisted classical approaches. 
To this end, this work provides conditions under which computing expectation values of MBR-states is classically simulable. 
The MBR efficiently represents quantum states with certain properties, along the lines of tensor networks for states with limited entanglement or stabilizer states for circuits with Clifford gates. 
Since the MBR can be utilized with and without assistance of a quantum computer, it allows for studying hybrid representations of quantum states. This is left as an open research to be addressed with access to experimental resources.

\section*{Data Availability}
The code to simulate the numerical data in this manuscript is available online \cite{github}, as well as the data shown in plots \cite{data}.

\acknowledgements
A. P.-S acknowledges fruitful discussions with Daniel Pérez-Salinas. 
The authors achknowledge comments in the manuscript from Xavier Bonet-Monroig. 
The authors would like to extend their gratitude to all members of aQa Leiden for fruitful discussions.
P.E. acknowledges the support received through the NWO-Quantum Technology
programme (Grant No.~NGF.1623.23.006).
J.T. acknowledges the support received from the European Union's Horizon Europe research and innovation programme through the ERC StG FINE-TEA-SQUAD (Grant No.~101040729). 
This work was supported by the Dutch National Growth Fund (NGF), as part of the Quantum Delta NL programme, and also funded by the European Union under Grant Agreement 101080142 and the project EQUALITY.
This work was also partially supported by the Dutch Research Council (NWO/OCW), as part of the Quantum Software Consortium programme (project number 024.003.03), and co-funded by the European Union (ERC CoG, BeMAIQuantum, 101124342).
This publication is part of the ``Quantum Inspire - the Dutch Quantum Computer in the Cloud" project (with Project No. NWA.1292.19.194) of the NWA research program ``Research on Routes by Consortia (ORC)", which is funded by the Netherlands Organization for Scientific Research (NWO).

The views and opinions expressed here are solely
those of the authors and do not necessarily reflect those of the funding institutions. Neither
of the funding institutions can be held responsible for them.

\bibliography{references.bib}

\onecolumngrid
\appendix

\subsection{Proof of \texorpdfstring{\Cref{le.exp-value}}{Lemma 1}}\label{app.exp-value}
\begin{proof}
   The expectaction of a MBR state with respect to an observable is simply
\begin{equation}\label{eq.exp_value}
\bra{\psi_{K, B}} \hat O \ket{\psi_{K, B}} = 
\sum_{b = 1}^B  \sum_{a = 1}^B 
\sum_{\substack{i_b \in S_b \\ \vert S_b \vert = K}} 
\sum_{\substack{j_a \in S_a \\ \vert S_a \vert = K}} 
\alpha_a \alpha_b \left(c^{(b)}_{i_b}\right)^* c^{(a)}_{j_a} 
\bra{i_b} U^\dagger_b \hat O U_a \ket{j_a}.
\end{equation}
The cost of this calculation scales as $\Opoly{K, B} \times C = \Opoly n \times C$, where $C$ is the cost of computing one element, that is 
\begin{equation}\label{eq.single_element}
\hat O_{i_b, j_a} = \bra{i_b} U^\dagger_b \hat O U_a \ket{j_a}.
\end{equation} 
If this operation can be handled with classical resources, that is $C \in \Opoly n$, then computing expectation values is doable with classical resources.  
\end{proof}

\section{Proof of \texorpdfstring{\Cref{th.venn}}{Theorem 1}}\label{app.relations}
We provide an extended version of the theorem here. 
\begin{theorem}
    The relationship among the MBR subclasses and other representations are the following:
    \begin{align}
    \stab & \subsetneq \mbrc \\ 
    \mps & \subsetneq \mbrc \\ 
    \stab & \not\subset \mbrclog \\ 
    \stab & \cap \mbrclog \neq \emptyset \\
    \mps & \cap \mbrclog \neq \emptyset \\ 
    \mps & \not\subset \stab \\
    \mps & \cap \stab \neq \emptyset \\ 
    \mbrmub & \cap \stab \neq \emptyset \\
    \mbrmub & \cap \mps \neq \emptyset \\
    \mbrmub & \cap (\mbrc \cap \stab) \neq \emptyset \\
    \mbrmub & \cap (\mbrq\cap\mbrc) = \emptyset 
\end{align}
\end{theorem}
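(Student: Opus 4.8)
The plan is to prove the extended version of \Cref{th.venn} by establishing each relation through an explicit witness state (for the non-emptiness and strict-inclusion claims) or a structural argument (for the inclusion and disjointness claims). I would organize the proof around a few reusable building blocks. First, for $\stab \subsetneq \mbrc$: given a stabilizer-rank-$K$ state $\ket\psi = \sum_k \beta_k U_k \ket 0$ with $U_k$ Clifford, set $B = K$, take each basis to be $U_b$, each support $S_b = \{0\}$ (so $K_{\rm MBR} = 1$), and $\alpha_b = |\beta_b|$ with phases absorbed into $c^{(b)}_{0}$. Classical tractability follows because $F^{\hat O}_{i_b,j_a} = \bra 0 U_b^\dagger \hat O U_a \ket 0$ is a Clifford-sandwiched Pauli expectation, computable in polynomial time by the Gottesman--Knill theorem (the paper assumes $\hat O$ is a sum of polynomially many Pauli strings). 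Strictness follows by exhibiting an MBR with, e.g., a non-Clifford diagonal basis whose state has superpolynomial stabilizer rank yet remains in \mbrc\ — here the example of \Cref{sec.example} or a product of small non-Clifford-diagonalizable terms serves. For $\mps \subsetneq \mbrc$: any bond-dimension-$D$ MPS is $D$-sparse after applying the (efficiently describable, though possibly deep) disentangling unitary; more carefully, I would argue an MPS is an MBR with a single basis $B=1$, $U_1$ the circuit mapping $\ket 0$ to a $D^2$-sparse auxiliary encoding, and note that matrix-product-operator expectation values are classically computable, hence $F^{\hat O}$ is too; strictness again uses a stabilizer or near-term-circuit witness of unbounded Schmidt rank.

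Next I would handle the \mbrclog\ relations. For $\stab \not\subset \mbrclog$: take a Clifford circuit of linear depth producing a highly entangled state (e.g.\ a 1D cluster state built by a depth-$\Theta(n)$ brick-wall, or rather a state requiring $\Omega(\log n)$ or more depth such as a GHZ-type state on a path) that provably cannot be written as a polynomial combination of $\mathcal{O}(\log n)$-depth-basis sparse states — the lower bound follows from \Cref{le.lowerboundsmbr} style dimension-counting combined with known depth lower bounds for the entanglement structure. For $\stab \cap \mbrclog \neq \emptyset$ and $\mps \cap \mbrclog \neq \emptyset$: the state $\ket 0^{\otimes n}$, or more interestingly a shallow-Clifford-prepared state like $\bigotimes \ket +$, is simultaneously a stabilizer state, an MPS, and an MBR with $B=1$, $U_1$ a depth-1 circuit. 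For $\mps \not\subset \stab$: the canonical witness is an MPS of small bond dimension whose amplitudes involve irrational or non-dyadic values (e.g.\ a W-state or an MPS with generic angles) whose stabilizer rank is superpolynomial — I would cite that generic-angle product-of-rotations states have exponential stabilizer rank. For $\mps \cap \stab \neq \emptyset$: again $\ket 0^{\otimes n}$ or any graph state on a path works.

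For the \mbrmub\ intersections I would construct explicit MUB-based witnesses. $\mbrmub \cap \stab \neq \emptyset$: use a pair of Clifford MUBs, e.g.\ $U_1 = I$ and $U_2 = H^{\otimes n}$, and the state $\ket{\psi_{+0}}$ from \Cref{eq.example}, which is simultaneously a stabilizer state (rank 2) and lies in \mbrmub\ with $B=2$, $K=1$; this single example in fact also gives $\mbrmub \cap \mps \neq \emptyset$ (bond dimension 2) and $\mbrmub \cap (\mbrc \cap \stab) \neq \emptyset$, since the relevant $F^{\hat O}$ entries $\bra i H^{\otimes n} \hat O \ket j$ etc.\ are Clifford-Pauli expectations and thus classically computable — so \mbrmub\ here is also classically tractable. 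Finally, $\mbrmub \cap (\mbrq \cap \mbrc) = \emptyset$ holds trivially because $\mbrq$ and $\mbrc$ are defined as the classically-intractable and classically-tractable MBRs respectively, so $\mbrq \cap \mbrc = \emptyset$ as a matter of definition (every MBR is one or the other), whence any intersection with it is empty; I would note this and observe it is the only relation requiring no construction.

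The main obstacle I expect is making the \emph{strictness} parts of the first two inclusions and the \emph{non-inclusion} $\stab \not\subset \mbrclog$ genuinely rigorous rather than merely plausible: for strictness one needs a state that is provably an \mbrc\ but provably neither \stab\ nor \mps, and for the non-inclusion one needs a provable depth lower bound certifying that no polynomial-size combination of logarithmic-depth sparse bases can reproduce the target state. The cleanest route for the latter is the dimension-counting argument underlying \Cref{le.lowerboundsmbr} and \Cref{le.lowerboundssparse}: if every basis circuit has depth $\mathcal{O}(\log n)$ it can only spread amplitude over a light-cone-limited region, so a polynomial combination spans a subspace whose states all have bounded "locality complexity," contradicting the entanglement or peakedness of a suitably chosen Clifford target; I would want to be careful that the chosen target genuinely escapes this subspace. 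The constructions (the $\ket{\psi_{+0}}$ witness, the $\ket 0^{\otimes n}$ witness, the generic-angle MPS) are routine, so the writeup effort concentrates on these separation lemmas.
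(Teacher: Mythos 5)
Your overall strategy (explicit witnesses plus Gottesman--Knill / MPO arguments for tractability) matches the paper for most of the eleven relations, but two of your steps would not go through as written. The most important one is the final claim $\mbrmub \cap (\mbrq\cap\mbrc) = \emptyset$, which you dismiss as vacuously true because $\mbrq$ and $\mbrc$ are complementary by definition. The paper's proof treats this as a substantive statement and supplies the actual idea: for MUB bases every Gram-matrix entry $\bra{i_b}U_b^\dagger U_a\ket{j_a}$ has magnitude exactly $2^{-n/2}$, so if these entries are not classically computable one cannot rescue tractability with a quantum computer either --- Hadamard-test-style estimation of exponentially small overlaps (and in particular their relative phases) requires an exponential measurement budget, which violates the efficiency requirement built into the definition of \mbrq. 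In other words, the content of the claim is that a MUB-based MBR can never live in the ``quantum-efficiently computable but classically intractable'' regime; reading it as a set-theoretic triviality discards exactly the argument the theorem is meant to record, and the fact that your reading renders the statement empty should have signalled that a different interpretation was intended.

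The second gap is your treatment of $\stab \not\subset \mbrclog$. Both of your candidate witnesses are in fact trivially inside \mbrclog: the GHZ-type state is $2$-sparse in the computational basis itself (identity basis, depth $0$), and the 1D cluster state is $1$-sparse in the basis given by its own depth-$2$ Clifford preparation circuit, whose Pauli matrix elements are classically computable; so neither state can separate the classes, and the light-cone/dimension-counting route you sketch does not apply because \mbrclog\ constrains only the depth of the basis circuits $U_b$, not the preparation depth of the represented state. The paper instead argues at the level of the bases: \stab\ admits arbitrary $\Opoly{n}$-depth Clifford basis circuits, and these cannot in general be compressed to depth $\mathcal O(\log n)$. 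Two smaller mismatches of the same flavour: your first-choice witness for the strictness of $\stab \subsetneq \mbrc$, the state $\ket{\psi_{+0}}$ of the worked example, is itself a stabilizer state (stabilizer rank $2$) and so cannot separate --- only your fallback (a single layer of generic non-Clifford single-qubit rotations, which is the paper's witness, tractable by $n$ independent single-qubit computations) works; and for $\mps \not\subset \stab$ you invoke provably superpolynomial stabilizer rank of generic-angle MPS, a lower bound that is not currently available in the literature, whereas under the paper's operational definition of \stab\ simulation (at most $\mathcal O(\log n)$ non-Clifford gates) a shallow nearest-neighbour circuit with linearly many non-Clifford rotations already fails the \stab\ criterion by definition while remaining \mps-simulatable.
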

\begin{proof}
  We proof each statement separately. 

\paragraph*{\texorpdfstring{$\stab \subsetneq \mbrc$:}{STAB in MBR_C}} The \stab\ formalism deals with Clifford circuits, possibly doped with at most $\mathcal O(\log n)$ non-Clifford gates. 
Computing $\bra{i} U^\dagger \hat O V \ket j$ is possible (assuming $U, V$ satisfy the Clifford condition) since $\hat O$ is a combination of $\Opoly n$-many Pauli strings. 
To do so, we just need to realize that $U^\dagger \hat O V$ is a combination of $\Opoly n$-many Clifford circuits, and propagate the stabilizers of the $\ket 0$ state~\cite{gottesman1998heisenberg}, thus showing $\stab \subset \mbrc$. 

To show $\stab \neq \mbrc$ we just need to find an example of circuits for which computing $\bra{i} U^\dagger \hat O V \ket j$ is doable with classical methods outside the stabilizer formalism. 
Choose $U, V$ to be a single-layer of arbitrary unitary rotations, thus $\mathcal O(n)$ non-Clifford gates and outside the reach of the stabilizer formalism. 
However, for $\hat O$ Pauli strings, computing the quantity of interest is possible. 
It is only required to perform $\mathcal O (n)$ single-qubit computations, with linear cost. 

\paragraph{\texorpdfstring{$\mps \subsetneq \mbrc$ }{MPS not in MBRC}} The \mps\ formalism deals with circuits with low entanglement. 
Computing $\bra{i} U^\dagger \hat O V \ket j$ is possible (assuming $U, V$ satisfy the MPS condition) since $\hat O$ is a combination of $\Opoly n$-many Pauli strings. 
To do so, we just need to realize that $U^\dagger \hat O V$ is matrix product operator with bond dimension in $\Opoly n$, thus showing $\mps \subset \mbrc$. 

To show $\mps \neq \mbrc$ we just need to find an example of circuits for which computing $\bra{i} U^\dagger \hat O V \ket j$ is doable with classical methods outside the MPS formalism. 
Assume $\hat O$ to be a Pauli string. Choose $U, V$ to be a Clifford circuit with long-term connectivity. The bond dimension to compute observables using the MPS formalism increases exponentially with the range. However, the stabilizer formalism trivially allows to compute observables for these circuits. . 

\paragraph*{\texorpdfstring{$\stab \not\subset \mbrclog$:}{STAB not in MBRc-log}} Following the reasoning of the previous paragraph, the \stab\ formalism allows us to compute $\bra{i} U^\dagger \hat O V \ket j$ for arbitrary Clifford circuits. 
This includes circuits with $\Opoly n$-depth, not only $\mathcal O(\log n)$. 
Existing results show that it is possible to efficiently reduce any Clifford circuit to another one with at most $\Opoly n$-depth, thus no possible reduction can be applied to circumvent this result. 

\paragraph*{\texorpdfstring{$\stab \cap \mbrclog \neq \emptyset$:}{STAB intersection MBRc-log not empty}} To show this it is enough to find an example of a $\stab$ circuit with logarithmic depth. 
It is enough with restricting the depth of the Clifford circuits. 

\paragraph*{\texorpdfstring{$ \mps \cap \mbrclog \neq \emptyset $:}{MPS intersection MBRc-log not empty}}
\mps\ representation can compute $\bra{i} U^\dagger \hat O V \ket j$ for certain circuits with $\mathcal O(\log n)$-depth. 
For circuits with nearest-neighbours connectivity, each two-qubit gate can be decomposed as a superposition $\mathcal O(1)$ tensor-product operations. 
Recurrent application of this principle allows to find an \mps\ representation with $\chi \in \Opoly n$, thus efficient. 
This proof has been widely explored in the literature~\cite{zhou2020what}. 

It is possible to find examples of \mbrclog\ that are not tractable through \mps. 
As a preliminary result, a state with \cbe\ rank $K$ is \mps izable with bond dimension at most $K$. 
The reason is that if we perform the Schmidt decomposition between any two partitions $(A, B)$, such that $2^{\min(A, B)} > K$, we separate the digits in the binary representation of the integers. 
This provides a Schmidt rank at most $K$, and holds for all partitions. 
Now, if $\{U_b\}$ admits a representation as matrix product operator with bond dimension $\chi$, then $\ket{\psi_b}$ admits an \mps\ representation of bond dimension $\chi(\ket{\psi_b})$, since $\chi(U \ket\psi) \leq \chi(U) \chi(\ket\psi)$. 
In addition, the property $\chi(\ket\psi + \ket\phi) \leq \chi(\ket\phi) + \chi(\ket\phi)$, and therefore the maximal bond dimension is upper bounded, $\chi \in \Opoly n$. 

Additionally, one can find circuits with depth $\omega (\log n)$ that are classically tractable with MPS. It suffices with padding a MPS-representable states with gates that merge into identity. Those fall into the category of $\mbrc$, but not $\mbrclog$. Finally $\mps \in \mbrc$ by definition.  

\paragraph*{\texorpdfstring{$ \mps \not\subset \stab $:}{MPS not in STAB}}
A Clifford circuit with long-range correlations, as $\mathcal O(n)$, even with $\mathcal O(1)$-depth, demands an exponential overhead in the bond dimension $\chi$ of the \mps\ representation, thus making it inefficient.

\paragraph*{\texorpdfstring{$ \mps \cap \stab \neq \emptyset $:}{MPS intersection STAB not empty}}
In alignment with the previous paragraph, any $\mathcal O(\log n)$-circuit with nearest neighbour connectivity admits \mps\ representation, in particular a circuit composed only by Clifford gates.  

\paragraph*{\texorpdfstring{$ \mbrmub \cap \stab \neq \emptyset $:}{MBR-MUB intersection STAB not empty}}
It is possible to construct a complete set of MUB using only Clifford circuits~\cite{kern2010complete}. 

\paragraph*{\texorpdfstring{$\mbrmub \cap \mps \neq \emptyset$:}{MBR-MUB intersection MPS not empty}}
Construct a pair of bases that are locally-MUBs, with locality $\mathcal O(1)$. These are by construction MUBs, and quantum circuits of depth $\mathcal O(1)$, which are representable by MPS. 

\paragraph*{\texorpdfstring{$ \mbrmub \cap (\mbrc \cap \stab) \neq \emptyset $:}{MBR-MUB intersection (MBR-C intersection STAB) not emtpy}}
It is not needed to have Clifford circuits to obtain MUBs. 
One can consider, for instance, two unitaries given by single-qubit arbitrary rotations with the constraint that they form |MUBs. 
Since they are single-qubit unitaries, computing $\bra{i} R^\dagger U^\dagger \hat O V R \ket j$ becomes trivial, with $\mathcal{O}(n)$ non-Clifford gates. 

\paragraph*{\texorpdfstring{$ \mbrmub \cap (\mbrq\cap\mbrc) = \emptyset $:}{MBR intersection (MBR-Q intersection MBR-C) is empty}}
Assume that there are sets of MUBs that are not computable classically. 
Then, one needs to resort to quantum computers to compute $\ket{i} U^\dagger \hat O V \ket j$, in particular for $\hat O = I$. 
In that case, this quantity is by definition (in amplitude) $N^{-1} = 2^{-n}$. 
Quantum methods would require an exponential measurement budget to resolve the relative phases. 

\end{proof}

\section{Proof of \texorpdfstring{\Cref{le.mubs}}{Lemma 2}}\label{app.mubs}
The quantity of interest can be computed as
\begin{equation}
\mathcal F_K = \min_{U \in \mathcal{SU}(2^n)}  \max_{\substack{\ket{\psi_1}, \ket{\psi_2} \\\cbe\left(\ket{\psi_{\{1, 2 \}}}\right) = K}} \vert\bra{\psi_1}U\ket{\psi_2} \vert. 
\end{equation}
More explicitly, we can rewrite it as
\begin{equation}
\mathcal F_K = \min_{U \in \mathcal{SU}(2^n)} \max_{\substack{ \{ c^{(b)}_i, b \in \{1, 2\}\}\\ S_1, S_2; \vert S_b \vert = K}} \left\vert d^*_j c_i \bra{j} U \ket i \right\vert.
\end{equation}
Since we can optimize over $S_b$, given a fixed $U$ and set $\{ c_i, d_i \}$ it is always possible to choose the sparse sets such that the largest coefficients correspond to the largest overlaps. 
Thus, the operation $U$ must fulfill a condition that $\mathcal F_{U, K}$ is independent of the sets $S_{b}$. 
Such requirement is met when $U$ defines a change of bases between MUBs. 
Finally, for any pair of sets $S_1, S_2$, it is always possible to choose the coefficients $\{ c_i, d_i \}$ such that they perfectly align with the overlap between bases given by $\bra{j} U \ket i$. 
Notice that the phases can be complex. 
Thus, the maximization is solved by choosing equally distributed coefficients $\{ c_i, d_i \}$ aligned with the maximum values of $\bra{j} U \ket i$, and the minimization problem is solved by making all overlaps equal
\begin{equation}
\vert \bra i U \ket j\vert = \vert \bra{i^\prime} U \ket{j^\prime}\vert.
\end{equation}
Normalization conditions imposed by unitarity of $U$ suffice to obtain
\begin{equation}
\vert \bra i U \ket j\vert = \frac{1}{\sqrt{2^n}}
\end{equation}

The relative overlap, with respect to the change of basis $U$, is given by
\begin{equation}
\vert \bra{{\psi_2}} U \ket{\psi_1} \vert = \left\vert\sum_{j\in S_2}  \sum_{i \in S_1} d^*_j c_i \bra{j} U \ket i \right\vert.
\end{equation}
We recall $\vert S_{b}\vert = K$. 
Using the triangular inequality we can bound
\begin{equation}
\vert \bra{{\psi_2}} U \ket{\psi_1} \vert \leq \\ 
\sum_{j\in S_2}  \sum_{i \in S_1} \vert d_j \vert \vert c_i \vert \left\vert\bra{j} U \ket i \right\vert.
\end{equation}
Each piece can be bounded separatedly. 
First, by assumption $U$ defines a change between MUB, and thus the overlap between all elements is given by $2^{-n/2}$. 
Secondly, the coefficients $\{c_i\}$ are subject to the condition of $\sum_i \vert c_i\vert^2 = 1$, and equivalently for $d$. 
Therefore, the maximum possible overlap is attained when the quantum state is equally distributed among all the considered elements. 
Thus,
\begin{equation}
\vert \bra{{\psi_2}} U \ket{\psi_1} \vert \leq \sum_{j\in S_2}  \sum_{i \in S_1} \frac{1}{K \sqrt{2^n}} = \frac{K}{\sqrt{2^n}},
\end{equation}
leading to the desired result. \qed

\section{Proof of \texorpdfstring{\Cref{le.bounds_error}}{Lemma 3}}\label{app.bounds_error}

We start by writing the MBR representation. 
The index $K$ is ommited for convenience. 
\begin{equation}
\ket{\psi_B} = \sum_b \alpha_b \ket{\psi_b}.  
\end{equation}
The overlap of $\ket{\psi_B}$ with the target state $\ket\psi$ is given by
\begin{equation}\label{cor.overlap_app}
\left\vert \braket{\psi_B}{\psi}\right\vert = \left\vert \sum_b \alpha_b \braket{\psi}{\psi_b}\right\vert.
\end{equation}
This overlap is maximized by choosing $\alpha_b$ in such a way that it aligns with the partial overlaps $\braket{\psi}{\psi_b}$. 
Since we are allowed to choose $\alpha$ in any way, under this condition
\begin{equation}
\left\vert \braket{\psi_B}{\psi}\right\vert = \frac{1}{N}\sum_b \left\vert \braket{\psi}{\psi_b}\right\vert^2,
\end{equation}
with $N$ being a normalization constant. 
We can fix its value by imposing the condition
\begin{equation}\label{eq.normalization_app}
1 = \left\vert \braket{\psi_B}{\psi_B}\right\vert = \frac{1}{N^2}\sum_{a, b} \braket{\psi_b}{\psi}\braket{\psi}{\psi_a}\braket{\psi_a}{\psi_b}.
\end{equation}
Separating terms with $a = b$, we obtain
\begin{equation}
\left\vert \braket{\psi_B}{\psi_B}\right\vert = \\ \frac{1}{N^2}\left(\sum_{b} \vert\braket{\psi_b}{\psi}\vert^2 + \sum_{a \neq b}  \braket{\psi_b}{\psi}\braket{\psi}{\psi_a}\braket{\psi_a}{\psi_b}\right).
\end{equation}
Using~\Cref{le.mubs}, we can bound the second term as
\begin{equation}
\left\vert \sum_{a \neq b}  \braket{\psi_b}{\psi}\braket{\psi}{\psi_a}\braket{\psi_a}{\psi_b}\right\vert \leq \\ \frac{K}{\sqrt{2^n}} \left\vert\sum_{a \neq b}  \braket{\psi_b}{\psi}\braket{\psi}{\psi_a}\right\vert.
\end{equation}
The elements inside the sum can be rewritten as
\begin{equation}
\left\vert \sum_{a \neq b}  \braket{\psi_b}{\psi}\braket{\psi}{\psi_a}\right\vert = \\ \left\vert \sum_{a,b}  \braket{\psi_b}{\psi}\braket{\psi}{\psi_a} - \sum_{b}  \braket{\psi_b}{\psi}\braket{\psi}{\psi_b}\right\vert,
\end{equation}
and each of these terms is respectively the squared 1-norm and 2-norm of the vector specified by the overlaps. 
Since the $\Vert \cdot \Vert_1 \leq \sqrt B \Vert \cdot \Vert_2$, we can further simplify the previous equations to
\begin{equation}
\left\vert \sum_{a \neq b}  \braket{\psi_b}{\psi}\braket{\psi}{\psi_a}\braket{\psi_a}{\psi_b}\right\vert \leq \frac{K B}{\sqrt{2^n}} \sum_{b} \left\vert \braket{\psi_b}{\psi} \right\vert^2.
\end{equation}
Plugging this result into~\Cref{eq.normalization_app} we obtain
\begin{equation}
N^2 = \sqrt{\sum_{b} \left\vert \braket{\psi_b}{\psi} \right\vert^2}\left(1 + z \frac{K B}{\sqrt{2^n}} \right),
\end{equation}
for some $z\in [-1, 1]$. 
Using this result in~\Cref{cor.overlap_app} and using the Taylor approximation in the $KB \sqrt{2^{-n}}$ term, we obtain the desired result. 
\qed

\section{Proof of \texorpdfstring{\Cref{le.dimensionality}}{Lemma 4}}\label{app.dimensionality}
Consider $B$ different $K$-sparse states $\ket{\psi_b}$, and $B$ changes of bases $\{U_b\}_{b \leq B}$ each defining a basis used to to express the corresponding state. 
We are interested in finding the dimension of the space opened by $U_b \ket{\psi_b}$.

We can argue by Gram-Schmidt orthogonalization process~\cite{cheney2009linear}. 
The input of the usual formulation of the Gram-Schmidt process is a set of non-orthogonal vectors, and the output is another set of orthonormal vectors spanning the same linear space. 
The process is deterministic, provided the input vectors are ordered. 
The key step of Gram-Schmidt is subtracting from one vector its own projection over the previously considered one, so that each vector is only kept if it opens a new dimension. 

In our case, the input set is substituted by the spaces opened by sparse states with different bases. 
Let $\mathcal U_b$ be the linear spaced spanned by $\{ U_b \ket i\}_{i \in S_b}$, where $\vert S_b \vert = K$. 
The corresponding projector is $\Pi_b = U \left( \sum_{i \in S_b} \ket i \bra i \right) U^\dagger$. 
We apply Gram-Schmidt orthogonalization to the set $\{ \mathcal U_b\}_b$ to obtain a set of orthogonal spaces $\{ \mathcal V_b\}_b$. 
We do not consider normalization in this step. 
Each new orthogonal space is given by
\begin{equation}
\mathcal V_b = \mathcal U_b - \sum_{a = 1}^{b - 1} \Tr{\mathcal U_b \Pi_a}	 \mathcal U_a.
\end{equation}
By recalling~\Cref{le.mubs}, we know that the overlap between any two $K$-sparse states each one in a MUB is bounded by $K 2^{-n/2}$. 
Therefore, it is guaranteed that $\mathcal V_b$ has a non-zero dimension as long as 
\begin{equation}
\frac{K (B-1)}{\sqrt{2^n}} \leq 1,
\end{equation}
leading to the desired result. \qed

\section{Proof of \texorpdfstring{\Cref{le.volume_sparse}}{Lemma 5}}\label{app.volume_sparse}
We are interested in computing the fraction of states that is $\epsilon$-close to a $K$ sparse state. 
To do so, we need to integrate the surface of the sphere for quantum states within the desired distance. 
To do so we just need to recall that a the amplitudes of a $n$-qubit state can be understood as points on a sphere of dimension $2^{n + 1}$. 
Correspondingly, all states with $CB$-rank $K$ can be understood as the unit sphere of dimensions $2K$. 
The fraction of states within $\epsilon$ distance to any state of $CB$-rank $K$ is the surface of the hypersphere around the $(2K - 1)$-dimensional equator. 
The assumed Haar randomness of Haar allows us to treat the corresponding subspaces with the beta distribution~\cite{bailey1992distributional}. 
Let $(x_1, x_2, \ldots, x_N)$ be the coordinates of points on a unit sphere, thus $\sum_i x_i^2 = 1$. 
We sum a subset of these variables as $X_K = \sum_{i = 1}^{2K} x_i^2$. 
Then $X_K$ follows a random distribution
\begin{equation}
X_K \sim \operatorname{Beta}\left(K, N - K\right). 
\end{equation}
The corresponding surface is obtained by computing the probabilities of $X_K \geq 1 - \epsilon$, thus
\begin{equation}
S_K(\epsilon) = \int_{0}^\epsilon dt\ t^{2^n - K - 1} (1 - t)^{K - 1}.
\end{equation}
The normalization is computed by dividing over the total surface, for $\epsilon = 1$. This normalization is known as the regularized incomplete beta function, detailed in the next section. \qed

\section{Regularized incomplete beta function}\label{app.beta}

The beta function is defined as
\begin{equation}
    B(\alpha, \beta) = \int_{0}^1 t^{1 - \alpha} (1-t)^{1 - \beta} dt = \frac{\Gamma(\alpha)\Gamma(\beta)}{\Gamma(\alpha + \beta)},
\end{equation}
where $\Gamma(\cdot)$ is the Euler gamma function, functioning as an extension to the body of complex numbers of the factorial, that is
\begin{align}
    \Gamma(z) & = \int_{0}^\infty t^{z - 1} e^{-t} dt \\
    \Gamma(n) & = (n - 1)! ; \quad n \in \mathbb{N}.
\end{align}
The incomplete beta function is defined as
\begin{equation}
    B(x; \alpha, \beta) = \int_{0}^x t^{1 - \alpha} (1-t)^{1 - \beta} dt .
\end{equation}
It is commonly regularized to adapt this function to the beta distribution~\cite{bailey1992distributional} as
\begin{equation}
    I(x; \alpha, \beta) = \frac{\Gamma(\alpha + \beta)}{\Gamma(\alpha)\Gamma(\beta)}\int_{0}^x t^{1 - \alpha} (1-t)^{1 - \beta} dt .
\end{equation}

\section{Proof of \texorpdfstring{\Cref{le.lcu}}{Lemma 7}}\label{app.lcu}
We begin by detailing the linear combination of unitaries. 

We first apply an auxiliary operation $W$ on an ancilla register of size $\log B$ with the effect
\begin{equation}
W\ket 0 = \frac{1}{\sqrt{\Vert \alpha \Vert_1}}\sum_b \sqrt{\alpha_b} \ket b. 
\end{equation}
This register controls the application of unitary operations generating each state $U_b \ket{\psi_b}$. 
By reverting the $W$ operator we end up with a state
\begin{equation}
\ket{\Psi} = \frac{1}{\Vert \alpha \Vert_1} \ket 0 \ket{\psi_{K, B}} + \sqrt{1 - \frac{\Vert \ket{\psi_{K, B}}\Vert^2}{\Vert \alpha \Vert_1^2}} \ket 1 \ket \eta,
\end{equation}
with $\ket{\psi_{K, B}} = \sum_b \alpha_b U_b \ket{\psi_b}$ as in~\Cref{eq.state_def}. 
By creating $\ket\Psi$ and post-selecting on $\ket 0$ in the first register, we obtain $\ket\psi$ in the second register with a certain success probability $P_{\rm success}$. 

We are only interested in the first term of the equation above, marked with $\ket 0$, since it has the state we are interested in creating. 
The overlap of $\ket\Psi$ with $\ket 0 \ket\psi$ is given by
\begin{equation}
\vert \braket{\Psi}{\psi_{K, B}}\vert^2 = \frac{\Vert \ket{\psi_{K, B}} \Vert^2}{\Vert \alpha \Vert_1^2}.
\end{equation}
Such success probability can be readily upper bounded as
\begin{equation}
\Vert \ket\psi \Vert^2 = \sum_b \alpha_b^2 \braket{\psi_b}{\psi_b} + \sum_{a\neq b} \alpha_a \alpha_b \bra{\psi_a}U_a^\dagger U_b \ket{\psi_b}.  
\end{equation}
We consider now the block $F_{a, b}$ of the overlap matrix. 
The corresponding element can be upper bounded as
\begin{equation}
\bra{\psi_a}U_a^\dagger U_b \ket{\psi_b} \leq \Vert F_{a, b}\Vert_\infty.
\end{equation}
On the other hand, we recall 
\begin{equation}
\sum_{a \neq b}\alpha_a \alpha_b = \sum_{a, b} \alpha_a \alpha_b - \sum_b \alpha_b^2 = \Vert \alpha \Vert_1^2 - \Vert \alpha \Vert_2^2.
\end{equation}
Therefore, we can further simplify
\begin{equation}
\vert \braket{\Psi}{\psi_{K, B}}\vert^2 \leq \frac{\Vert \bm \alpha \Vert_2^2}{\Vert \bm \alpha \Vert_1^2}\left( 1 - \maxF \right) + \maxF.
\end{equation}

\qed

\section{Proof of \texorpdfstring{\Cref{le.upperboundsmbr}}{Lemma 8}}\label{app.upperboundsmbr}
The starting point is the fixed-point amplification procedure from Ref.~\cite{yoder2014fixedpoint}. 
It is possible to load a state $\delta$-close to $\ket\psi$ with a depth lower-bounded by
\begin{equation}
L\geq \frac{\log(2 \delta^{-1})}{\sqrt{P_{\rm success}}}.
\end{equation}
Using the results from~\Cref{le.lcu} it is straightforward to see that
\begin{equation}
L \geq \log(2 \delta^{-1}) \frac{\Vert \bm\alpha\Vert_1}{\sqrt B\Vert \bm\alpha\Vert_\infty} \left(1 +  \maxF \frac{\Vert \bm\alpha\Vert_1^2}{\Vert \bm\alpha\Vert_2^2}\right)^{-1/2}.
\end{equation}
Making use of Taylor's expansion and the property
\begin{equation}
\frac{1}{\sqrt{1-x}} > 1 - \frac{x}{2}\quad \forall x > 0, 
\end{equation}
we can bound
\begin{equation}
L \geq \log(2 \delta^{-1}) \frac{\Vert \bm\alpha\Vert_1}{\sqrt B\Vert \bm\alpha\Vert_\infty}\left( 1 -  \frac{\maxF}{2} \frac{\Vert \bm\alpha\Vert_1^2}{\Vert \bm\alpha\Vert_2^2}\right).
\end{equation}

Fixed-point amplification is protected against overcooking. 
Thus, we can apply a circuit with depth twice the previous lower-bound and maintain the guarantees. 

From this point on, we just need to count. 
Each sparse state can be created in depth at most $d(K)$, and each basis change can be applied in at most $d(U)$. 
This is repeated for each basis, leading to a depth $(d(U) + d(K) ) B$. 
The fixed point amplification is given by $L$ repetitions of the same circuits. 
Additionally, we need to count the depth of the operation $W$, $d(W)$. 
In total, we obtain a depth upper bounded by
\begin{equation}
D \leq d(W) + 2 \ (d(U) + d(K)) \sqrt{B} \log(2/\delta)\frac{\Vert \bm\alpha\Vert_1^2}{\Vert \bm\alpha\Vert_2^2}\\  \left(1 - \frac{F}{2}\left( \frac{\Vert \bm\alpha\Vert_1^2}{\Vert \bm\alpha\Vert_2^2} + 1\right) \right).
\end{equation}
\qed

\section{Proof of \texorpdfstring{\Cref{th.concentration}}{Theorem 2}}\label{app.concentration}
We start by considering a probability distribution with $N$ possible different outcomes. 
We sample this probability distribution $M$ times. 
Let $p_i$ be the probability of sampling outcome $i$, and let $\hat p_i = m_i / M$ be an estimator given by sampling. 
The number $m_i$ is the number of appearances of the outcomes $i$. 
Without loss of generality, we assume $\hat p_{i} \geq \hat p_{i+1}$. 
Notice we cannot in principle assume any ordering in the true values $p_{i}$ since those are not exactly accessible. 
For an integer $K$ we define
\begin{equation}
\hat S_K = \sum_{i = 1}^{K} \hat p_i, 
\end{equation}
as the estimator of the joint probability of the $K$ largest estimators of the probability. 

The effect of these estimators $\hat S_K$ is to reduce a multinomial distribution into a binomial one, in which the outcomes are split in two classes. 
After splitting, we can estimate $\hat S_K$ using Wilson estimator~\cite{wilson1927probable}. 
The reason for Wilson estimator, instead of some other more standard versions such as Wald's~\cite{laplace1820theorie}, is that Wilson produces reliable confidence intervals even in the case of small $\hat p_i$ or $\hat S_K$. 
According to Wilson interval 
\begin{equation}
S_K = \frac{1}{1 + \frac{z^2}{M}}\left(\hat S_K + \frac{z^2}{2M} \right) \pm \frac{z}{1 + \frac{z^2}{M}} \sqrt{\frac{\hat S_K (1 - \hat S_K)}{M} + \frac{z^2}{4M}}, 
\end{equation} 
with confidence probability $1 - \delta$, and $z$ is the $\delta / 2$-quantile of the standard normal distribution. 

We consider now the estimators $\hat S_K$, $\hat S_{K + 1}$. 
Each estimator gives rise to a probability distributions of compatible values of $S_K / S_{K + 1}$. 
We need to obtain the requirements in $\hat p_{K+1}$ to avoid overlapping distributions of $S_K / S_{K + 1}$. 
In the extremal case, since $\hat S_K < \hat S_{K +1}$, we have with probability at least $1 - \delta / 2$
\begin{align}
S_K \leq & \qquad \frac{1}{1 + \frac{z^2}{M}}\left(\hat S_K + \frac{z^2}{2M} \right) +  \frac{z}{1 + \frac{z^2}{M}} \sqrt{\frac{\hat S_K (1 - \hat S_K)}{M} + \frac{z^2}{4M}}, \\
S_{K+1} \leq & \qquad \frac{1}{1 + \frac{z^2}{M}}\left(\hat S_{K+1} + \frac{z^2}{2M} \right) -  \frac{z}{1 + \frac{z^2}{M}} \sqrt{\frac{\hat S_{K+1} (1 - \hat S_{K+1})}{M} + \frac{z^2}{4M}}. 
\end{align}

Imposing $S_K < S_{K +1}$, 
\begin{equation}
\hat p_{K+1} > \\ z \left( \sqrt{\frac{\hat S_{K+1} (1 - \hat S_{K+1})}{M} + \frac{z^2}{4M}} + \right.\\ \left. \sqrt{\frac{\hat S_{K} (1 - \hat S_{K})}{M} + \frac{z^2}{4M}} \right).
\end{equation}
We can further simplify the previous equation with $0 \leq \hat S_{K+1} \leq 1$ to obtain
\begin{equation}
\hat p_{K+1} > \frac{1}{\sqrt M} \left(z \sqrt{4 + z^2} \right) > \frac{z^2}{\sqrt M}. 
\end{equation}
The last step is to relate $z$ and $\delta$ in a simpler way. 
First, we recall standard definitions 
\begin{equation}
\frac{\delta}{2} = 1 - \erf(z / \sqrt{2}).
\end{equation}
For sufficiently large $z$, e.g. $z \geq 2$, which corresponds to small $\delta \leq 0.1$, it is possible to bound
\begin{equation}
1 - \erf(z / \sqrt{2}) \geq e^{-z^2}, 
\end{equation}
which readily transforms into
\begin{equation}
z^2 \geq \log(2 / \delta).
\end{equation}

Rearranging the equations, we arrive to the desired result
\begin{equation}
\hat p_{K+1} \geq \frac{\log(2/\delta)}{\sqrt{M}}. 
\end{equation}
\qed

\section{Proof of \texorpdfstring{\Cref{le.hadamard_errors}}{Lemma 9}}\label{app.hadamard_errors}
We begin by computing the distance between the true and estimated values of the $a$ coefficients for one basis, namely
\begin{equation}
\Delta a^{(b)}_{i_b} \equiv \vert a^{(b)}_{i_b} - \hat a^{(b)}_{i_b} \vert^2.
\Vert {\bm a^{(b)}} - \hat{\bm a^{(b)}}\Vert_2^2 \leq \sum_{b \in B} \sum_{i \in S_b} \vert a^{(b)}_{i_b} - \hat a^{(b)}_{i_b} \vert^2.
\end{equation}

Assume that for every $i$ we use a different number of measurements $M_i$, evenly dedicated to real and imaginary parts via Hadamard tests. 
Using Wald estimators~\cite{laplace1820theorie} we obtain with probability $1 - \delta$
\begin{equation}
\Re(a) = \Re(\hat a) \pm z\sqrt{\frac{1 - \Re(\hat\alpha)^2}{2M_i}},
\end{equation}
with $z$ being the $1 - \delta/2$ quantile of the normal distribution. 
This readily implies
\begin{equation}
\Delta \Re(a) \leq \frac{z}{\sqrt{2 M_i}},
\end{equation}
and equivalently for the imaginary part. 

We recall now standard error propagation to bound the joint error of real and imaginary parts as
\begin{equation}
\Delta \vert a \vert^2 = 2 \vert\Re(\hat a)\vert \Delta \Re(\hat a) + 2 \vert\Im(\hat a)\vert \Delta \Im(\hat a) \leq \\
 \left( \vert\Re(\hat a)\vert +\vert\Im(\hat a)\vert \right) \frac{\sqrt 2 z}{\sqrt{M_i}}.
\end{equation}
The Chauchy-Schwarz inequality for $d$-dimensional vectors implies $\norm{a}_1 \leq \sqrt{d} \norm{a}_2$, which immediately implies $\left( \vert\Re(\hat a)\vert +\vert\Im(\hat a)\vert \right) \leq \sqrt{2} \sqrt{\hat p_i}$. 
We can exploit the fact that we have the estimation $\sqrt{\hat p_i} = m_i / \hat S_k$ from previous sampling, thus implying
\begin{equation}
\Delta p \leq \sqrt\frac{\hat p_i}{S_K}\frac{2 z}{\sqrt{M_i}}.
\end{equation}
By selecting $M_i = M \hat p_i / \hat S_K$, we obtain a bound
\begin{equation}
\Delta \vert a \vert^2 \leq \frac{ 2 z}{\sqrt M}. 
\end{equation} 
We can readily bound 
\begin{equation}
\Vert {\bm a^{(b)}} - \hat{\bm a^{(b)}}\Vert_2^2 \leq  \frac{ 2 z K}{\sqrt M}.
\end{equation}

Identifying
\begin{equation}
\delta \leq 2 e^{-z^2 / 2}, 
\end{equation}
we can rewrite
\begin{equation}
\Vert {\bm a^{(b)}} - \hat{\bm a^{(b)}}\Vert_2^2 \leq 2\sqrt 2 K B \sqrt\frac{\log 2/\delta}{M},
\end{equation}
obtaining the desired result. 
\qed
\end{document}